\documentclass[showpacs,amsmath,amssymb,notitlepage,hidelinks]{revtex4-1}

\usepackage{color}
\usepackage{graphicx}
\usepackage{amsmath}
\usepackage{amsfonts}
\usepackage{amsthm}
\usepackage{amscd}
\usepackage{epsfig}
\usepackage{amssymb}
\usepackage{tabularx}
\usepackage{calligra}
\usepackage{enumerate}
\usepackage{hyperref}

\usepackage{amsthm}
\usepackage{verbatim} 

\newtheorem{thm}{Theorem}

\newtheorem{lem}{Lemma}

\renewcommand{\a}{\alpha}
\renewcommand{\b}{\beta}
\newcommand{\g}{\gamma}
\newcommand{\e}{\epsilon}

\renewcommand{\d}{\delta}

\newcommand{\w}{\omega}

\newcommand{\p}{\partial}

\newcommand{\M}{{\cal M}}

\def\td{\widetilde{D}}
\def\hd{\widehat{D}}

\begin{document}

\title{Black hole nonmodal linear stability: the Schwarzschild (A)dS cases}
\author{Gustavo Dotti} \email{gdotti@famaf.unc.edu.ar}
\affiliation{Facultad de Matem\'atica,
Astronom\'{\i}a y F\'{\i}sica, Universidad Nacional de C\'ordoba,\\
Instituto de F\'{\i}sica Enrique Gaviola, Conicet.\\
Ciudad Universitaria, (5000) C\'ordoba, Argentina}

\begin{abstract}
The nonmodal linear stability of the Schwarzschild black hole established in 
Phys.\ Rev.\ Lett.\  {\bf 112} (2014) 191101 is generalized to the case of a nonnegative 
cosmological constant $\Lambda$. 
Two gauge invariant 
combinations $G_{\pm}$ of perturbed scalars made out  of the Weyl tensor and its first covariant 
derivative are found such that the map $[h_{\alpha \beta}] \to \left( G_- \left([h_{\alpha \beta}]  \right),  G_+ \left([h_{\alpha \beta}]  \right) \right)$
with domain the set of equivalent classes  $[h_{\alpha \beta}]$  under gauge transformations of  solutions of the linearized Einstein's equation, is invertible. 
The way  to reconstruct a representative of $[h_{\alpha \beta}]$ in terms of $(G_-,G_+)$ is given.
It is proved that,  for  an arbitrary perturbation consistent 
with the background asymptote,  $G_+$ and $G_-$   are bounded in the the outer static region. At large times, 
 the perturbation decays leaving  a linearized Kerr black hole around the Schwarzschild or Schwarschild de Sitter background  solution. 
For negative cosmological constant it is shown that there are choices of boundary conditions at the time-like boundary 
under which the Schwarzschild anti de Sitter black hole is unstable. 
The root of Chandrasekhar's duality relating odd and even modes is exhibited, and some technicalities   related to this duality and 
 omitted in the original proof of the $\Lambda=0$ case are explained in detail.
\end{abstract}

\maketitle

\tableofcontents

\section{Introduction}
One of the most salient open problems in classical General Relativity (GR) is  proving the stability of the 
outer region of the 
stationary electro-vacuum black holes in the Kerr-Newman family. A complete proof of stability 
in the context of  the non linear GR equations  has only been given  for  Minkowski spacetime \cite{Christodoulou:1993uv};  
the stability problem of more complex solutions of Einstein's  equation is usually  approached by analyzing the behavior 
of linear 
 test fields satisfying appropriate boundary conditions in order to establish if  unbounded solutions are allowed.
Scalar test fields provide a first clue, whereas gravitational waves, that is, 
 metric perturbations $h_{\a \b}$ propagating on the background spacetime $(\M, g_{\a \b})$, 
give a more realistic  approach to the problem. For vacuum spacetimes, assuming a cosmological constant $\Lambda$, 
the metric perturbation $h_{\a \b}$ satisfies the linearized Einstein equation (LEE) 
\begin{equation} \label{lee}
{\cal E}[ h_{\a \b}] := - \tfrac{1}{2} \nabla^{\g} \nabla_{\g} h_{\a \b}  - \tfrac{1}{2} \nabla_{\a} \nabla_{\b} (g^{\g \d} h_{\g \d}) +
  \nabla^{\g} \nabla_{(\a} h_{\b) \g} - \Lambda \;  h_{\a \b} =0,
\end{equation}
obtained by assuming that on a fixed four dimensional manifold $\M$ (the spacetime) there is a 
 mono-parametric family of solutions $g_{\a \b}(\e)$ of the vacuum Einstein field equation
\begin{equation} \label{mpfs}
R_{\a \b}[g(\e)] - \Lambda g_{\a \b}(\e)=0
\end{equation}
around the ``unperturbed background" $g_{\a \b}=g_{\a \b}(0)$, 
and Taylor expanding (\ref{mpfs}) at $\e=0$. For  tensor fields that depend 
functionally on the metric we use a  dot to denote the ``perturbed field'', 
which is the  field obtained by taking  the first derivative with respect to $\e$  at $\e=0$, e.g.,
for the Riemann tensor, 
\begin{equation}
\dot R^{\a}{}_{\b \g \d} = \frac{d}{d \e}\bigg|_{\e=0} R^{\a}{}_{\b \g \d}[g(\e)].
\end{equation}
We   make an  exception for the metric field itself by adopting  the standard notation  $h_{\a \b} = \dot g_{\a \b}$ and, as usually
 done  in linear perturbation theory, defining 
$h^{\a}{}_{\b} =g^{\a \mu} h_{\mu \b}$, and $h^{\a \b} = g^{\a \mu} g^{\b \nu} h_{\mu \nu}$. Note that this  convention implies that 
$$ \dot g^{\a \b} = - h^{\a \b}.$$
Equation (\ref{lee}) is the first order Taylor coefficient  of (\ref{mpfs}) around $\e=0$, that is 
${\cal E}[ h_{\a \b}]  = \dot R_{\a \b} - \Lambda h_{\a \b}$. 
Trivial solutions of this equation  are 
\begin{equation} \label{puregauge}
 h_{\a \b} =  \pounds_{\xi}   g_{\a \b} = \nabla_{\alpha} \xi_{\beta} +
\nabla_{\beta} \xi_{\alpha},
\end{equation}
where $\xi^{\a}$ is an arbitrary  vector field; these amount to 
the first order in $\e$ change of the metric under the pullback by the flow $\Phi_{\e}^{\xi}: \M \to \M$ generated by  the vector field $\xi^{\a}$. 
 Any two
 solutions $h_{\a \b}$ and $h'_{\a \b}$ such that 
\begin{equation} \label{gt2}
h'_{\a \b} = h_{\a \b} +  \pounds_{\xi}   g_{\a \b},
\end{equation}
 are related by this diffeomorphism and thus physically equivalent, this being the {\em gauge freedom} of linearized gravity. 
If a tensor field $T$ is a functional of the metric $g$, then $\dot T$ is a linear functional of $h_{\a \b}$ and, 
under the {\em gauge transformation} (\ref{gt2}) we find that 
\begin{equation} \label{gtt}
\dot T[h'] = \dot T[h] + \pounds_{\xi}  T.
\end{equation}
In particular, if $T$ is a scalar field, $\dot T$ is gauge invariant iff $0=\pounds_{\xi}  T = \xi^{\a} \p_{\a} T$ for every 
vector field, that is, 
if $T$ is constant in the background.\\
 Curvature related scalar fields (CS for short, not to be confused with the  tetrad components of the Weyl tensor in the 
Newman-Penrose formalism), are  scalar fields obtained 
by a full contraction of a polynomial
in the Riemann tensor and  its 
covariant derivatives, the metric and the volume form. Although these fields   partially characterize the metric, it is well known  that 
the information they carry is limited, an extreme example of this fact  being the  {\em vanishing scalar invariants spacetimes},  which 
have a nonzero Riemann tensor and yet {\em every} CS vanishes \cite{Pravda:2002us}. 
This fact may suggest that the perturbation of CS
 under a given solution $h_{\a \b}$ of the LEE  provide very limited information on $h_{\a \b}$. 
It was shown in  \cite{Dotti:2013uxa}, however, that for a Schwarzschild black hole background, 
 there are two gauge invariant combinations $G_{\pm}$ of perturbed  CSs that fully 
characterize the gauge equivalence class $[h_{\a \b}]$ of the corresponding  solution of the LEE. 
More precisely, let $C^{\a}{}_{\b \g \d}$ be the Weyl tensor,  $\varepsilon_{\a \b \mu \nu}$ the volume form, and consider  the CSs
\begin{equation} \label{curvaturescalars}
Q_- = \tfrac{1}{96} C^{\a \b \g \d} \varepsilon_{\a \b \mu \nu}  C^{\mu \nu}{}_{\g \d}, \;\;\;\;\;
 Q_+ = \tfrac{1}{48} C^{\a \b \g \d}   C_{\a \b \g \d} , \;\;\;\;\;
X = \tfrac{1}{720}  \left( \nabla_{\e} C_{\a \b \g \d} \right)  \left( \nabla^{\e} C^{\a \b \g \d} \right).
\end{equation}
The background values of these fields, that is, their value for the $\Lambda=0$ Schwarzschild solution, are 
${Q_-}_{Schw}^0 = 0$, ${Q_+}_{Schw}^0 = M^2/r^6$ and $X_{Schw}^0 = M^2 (r-2M)/r^9$, where $M$ is the mass and $r$ the areal radius, 
then the fields 
\begin{equation} \label{gpm}
G_- = \dot Q_-, \;\;\;\;  G_+=(9M-4r) \dot Q_+ + 3 r^3 \dot X, 
\end{equation}
made out of the first order variations of these CS, are gauge invariant.  It was shown in  \cite{Dotti:2013uxa} 
that the linear map
\begin{equation} \label{im}
[h_{\a \b}] \to \left( G_{-}\left([h_{\a \b}] \right),  G_{+}\left([h_{\a \b}] \right) \right)
\end{equation}
with domain the equivalence classes of smooth solutions of the LEE  preserving the asymptotic flatness,  is injective. 
This implies 
 that the scalars $G_{\pm}$, beyond  measuring the ``amount of distortion'' caused by the perturbation, 
encode {\em all} the relevant information on the perturbation class $[h_{\a \b}]$. A way to construct a class representative $h_{\a \b}$ from 
the fields $G_{\pm}\left([h_{\a \b}] \right)$ is  indicated in \cite{Dotti:2013uxa}. \\

The stability concept introduced in \cite{Dotti:2013uxa} is based on i) the existence of the CSs $G_{\pm}$ 
for which (\ref{im}) is injective on smooth solutions of the LEE 
that preserve the asymptotic flatness, and  ii) the proof that for this class of solutions of the LEE 
the scalar fields $G_{\pm}\left([h_{\a \b}] \right)$ are bounded. More precisely, it is proved in \cite{Dotti:2013uxa} 
that in the outer region $r \geq 2M$  
\begin{equation} \label{bounds1}
|G_-| < \frac{K_-}{r^6}, \;\;\;\; |G_+| < \frac{K_+}{r^3},
\end{equation}
where  $K_{\pm}$ are constants that can be obtained from  the  perturbation field datum at a Cauchy slice.  Given that 
the scalars $G_{\pm}$ contain {\em all} the gauge invariant information on the perturbation, the fact that they remain bounded 
as the perturbation evolves through   spacetime gives  a meaningful notion of linear stability.\\

This concept of stability, that we call {\em nonmodal},  should be compared 
 with prior  linear gravity stability notions for the Schwarzschild black hole. To this end we review 
the results in 
a short list  of papers that were crucial in the development  of this subject.  
It is important to stress  that they all  use the spherical symmetry of the Schwarzschild spacetime 
to decompose a metric perturbation 
\begin{equation} \label{dec}
h_{\a \b}= \sum_{(\ell,m,p=\pm)} h_{\a \b}^{(\ell,m,p=\pm)}
\end{equation}
 into even ($p=+$) and odd ($p=-$)
 $(\ell,m)$ {\em modes}. Here $\ell$ refers to 
the  eigenspace of the Laplace-Beltrami operator acting on real scalar fields on  $S^2$ corresponding to the eigenvalue $-\ell(\ell+1)$, $m$ is an index labeling a particular 
basis of this  $2\ell +1$ dimensional space, and the parity $p$  accounts for the way $h_{\a \b}^{(\ell,m,p=\pm)}$ transforms  when pull-backed 
by the antipodal map on $S^2$ (for details refer to Section \ref{cstf}). \\
The first work on the linear stability of the Schwarzschild black hole is T. Regge and J. Wheeler 1957 paper \cite{Regge:1957td}, where  the decomposition 
(\ref{dec})  was proposed and the $\ell=0,1$ modes where recognized to be non-dynamical. At the time the very notion 
of black hole was unclear (the term ``black hole'' was coined by J. Wheeler some ten years later),
 and Kerr's solution had not yet been discovered. This is probably why,  although 
the $\ell=0$ even piece of the perturbation was readily identified as a mass shift in \cite{Regge:1957td}, the odd $\ell=1$ modes, which 
corresponds to perturbing along a Kerr family $g_{\a \b}(\e)$  with  $\e=J/M$, was misunderstood (see the paragraph 
between equations (37) and (38) in \cite{Regge:1957td}) and the opportunity of producing a ``slowly rotating'' black hole at a time  when there was no clue about 
a rotating black hole solution was  missed 
\footnote{I thank Reinaldo Gleiser for this observation.}. The decomposition (\ref{dec}) in \cite{Regge:1957td} was done in Schwarzschild coordinates 
$(t,r,\theta,\phi)$ in 
what 
came to be called the Regge-Wheeler 
(RW, for short) gauge. The LEE in the even and odd sectors were shown to decouple, and the dynamical odd sector of the 
LEE  reduced 
to an infinite set of two dimensional wave equations, individually know as  the RW equation:
\begin{equation} \label{RWE}
(\p_t^2 - \p_{r^*}^2 + f V_{(\ell,m)}^{RW} ) \phi_{(\ell,m)}^- =0,  \;\;\;\; \ell \geq 2, \tag{RWE}
\end{equation}
where (adding a cosmological constant $\Lambda$ for future reference)  $\p_{r^*} = f \p_r$, $f=1-2M/r-\Lambda r^2/3$, and 
\begin{equation} \label{rwp}
V_{(\ell,m)}^{RW}= \frac{\ell(\ell +1)}{r^2} - \frac{6M}{r^3}.
\end{equation}
The $\ell \geq 2$ even sector LEE, a much more intricate system of equations, was disentangled by F. Zerilli in his 1970 paper 
\cite{Zerilli:1970se} and shown to be equivalent to the wave equations
\begin{equation} \label{ZE} \tag{ZE}
(\p_t^2 - \p_{r^*}^2 + f V_{(\ell,m)}^{Z} ) \phi_{(\ell,m)}^+ =0,   \;\;\;\; \ell \geq 2, 
\end{equation}
with potentials 
\begin{equation} \label{zp}
V_{(\ell,m)}^Z= \frac{[ \mu^2 \ell (\ell+1)-24M^2 \Lambda] r^3 + 6 \mu^2 M r^2 +36 \mu M^2 r +72M^3}{r^3 \; 
(6M+ \mu r)^2}, \;\; \mu=(\ell-1)(\ell+2).
\end{equation}
For $\Lambda \neq 0$ the RW and Zerilli potentials first appeared  in \cite{guven}. 
The solution of the LEE in the RW gauge is then given in the form 
\begin{equation} \label{rws}
{}^{(RW)}h_{\a \b}^{(\ell,m,p=\pm)}=\mathcal{D}_{\a \b}^{(\ell,m,p=\pm)}
\left[\phi_{(\ell,m)}^{\pm},S_{(\ell,m)}\right]
\end{equation}
where $\mathcal{D}_{\a \b}^{(\ell,m,p=\pm)}$ is a bilinear differential operator acting on $\phi_{(\ell,m)}^{\pm}$ and the 
spherical harmonics $S_{(\ell,m)}$. Note that, since focus is on non-stationary modes, only $\ell \geq 2$ enter this formulation. 
The non-stationary solution space of the  LEE is thus parametrized by the infinite set of scalar fields $\phi_{(\ell,m)}^{\pm}$ 
that enter the series (\ref{dec}) through (\ref{rws}). \\

Every notion of  gravitational linear stability of the Schwarzschild black hole prior to \cite{Dotti:2013uxa} was concerned with the behavior 
of the potentials $\phi_{(\ell,m)}^{\pm}$ of isolated $(\ell,m)$ modes (we  call this  ``modal linear stability''). In particular: 
\begin{itemize}
\item In \cite{Regge:1957td} it was shown that  separable solutions $\phi_{(\ell,m)}^- = \Re\; e^{i \w t} \chi_{(\ell,m)}^-(r)$ 
that do not diverge as $r \to \infty$ require $\w \in \mathbb{R}$, ruling out exponentially growing solutions in the odd sector. 
\item In \cite{Zerilli:1970se} it was shown that  separable solutions $\phi_{(\ell,m)}^+ = \Re \; e^{i \w t} \chi_{(\ell,m)}^+(r)$ 
that do not diverge as $r \to \infty$ require $\w \in \mathbb{R}$, ruling out exponentially growing solutions in the even sector. 
\item In \cite{Price:1971fb} it was shown that, for large $t$ and  fixed $r$,   $\phi_{(\ell,m)}^{\pm}(t,r)$ decays as $t^{-(2\ell+2)}$ 
(an effect known as ``Price tails'').
\item In \cite{wald}, the conserved energy
\begin{equation}
\int_{2M}^{\infty} \left[(\p_t \phi_{(\ell,m)}^{\pm})^2 + (\p_{r^*} \phi_{(\ell,m)}^{\pm})^2 + \phi_{(\ell,m)}^{\pm} 
f V_{(\ell,m)}^{RW/Z} \phi_{(\ell,m)}^{\pm} 
\right] f dr
\end{equation}
was used to rule out uniform exponential growth in time.
\item Also in \cite{wald}, a point wise bound on the RW and Zerilli potentials was found in the form 
\begin{equation} \label{modal-b}
|\phi_{(\ell,m)}^{\pm}(t,r)| < C_{(\ell,m)}^{\pm}, \;\; r>2M, \text{ all } t,
\end{equation}
where the constants $C_{(\ell,m)}^{\pm}$ are given in terms of the initial data  
$(\phi_{(\ell,m)}^{\pm}(t_o,r), \p_t \phi_{(\ell,m)}^{\pm}(t_o,r))$
\end{itemize}
 To understand the limitations of these results it is important to keep in mind that 
  the $ \phi^{\pm}_{(\ell,m)}$ are  an infinite set of  potentials defined on the $(t,r)$ orbit space $\mathcal{M}/SO(3)$, whose   
first and second order derivatives  $ \phi^{\pm}_{(\ell,m)}$
 enter the terms in the series (\ref{dec})  through (\ref{rws}), together with first and second derivatives of 
the spherical harmonics. 
 Two extra derivatives are required  to calculate the perturbed Riemann tensor, as  a  first step  to measure  the 
effects of the perturbation on the curvature. Thus,   the relation of the  potentials $\phi_{(\ell,m)}^{\pm}$
 to geometrically meaningful
quantities is remote, and the usefulness 
  of the bounds (\ref{modal-b}) to measure the strength of the perturbation   is far from  obvious.  \\

The motivation of the nonmodal approach  can be better understood if we 
put in perspective the progress made in \cite{wald}, equation  (\ref{modal-b}). 
  The fact that the two dimensional wave equations (\ref{RWE}) and (\ref{ZE}) 
can be solved by separating variables ($\phi_{(\ell,m)}^{\pm} = e^{i \w t} \psi(r)$) had previous works on linear stability 
focus on 
showing that $\w$ must be real, a limited notion of stability that does not even 
rule out, e.g.,  linear  $t$ growth \cite{wald}.  
To obtain the type of bounds (\ref{modal-b}) one must ``undo'' the separation of variables and reconsider the original equations 
 (\ref{RWE}) and (\ref{ZE}),  as this allows  to work out  results that are  valid for 
 generic, non separable solutions. The idea behind the non modal stability concept introduced  in this paper,  
 is to go one step further and 
  ``undo'' the   separation of the $(\theta,\phi)$ variables that antecedes the reduction of   the LEE 
to  (\ref{RWE}) and (\ref{ZE}),  in order to get bounds for truly four dimensional quantities. 
When trying to make  this idea more precise, 
one is  immediately faced to the problem of {\em which} four dimensional quantity one should  look at. 
If this quantity is to measure the strength of the perturbation, it should be a field related to the curvature change,
 and it should also 
be a {\em scalar} field, since 
there is no natural norm for a tensor field in a Lorentzian  background. If this scalar field does not obey 
a four dimensional wave equation (or some  partial differential equation), we do not have a clear 
tool to place bounds on it. Thus, we are naturally led to the question of the existence of scalar fields made from 
perturbed CSs that,  as a consequence of the LEE, obey some wave-like equation. This is the problem we address in this paper. 
The name  {\em nonmodal stability} is borrowed from fluid mechanics, where the limitations of the normal mode analysis 
were realized some thirty years ago  in 
 experiments involving wall bounded shear flows  \cite{schmid}. In that problem, the 
linearized Navier-Stokes operator is non normal, so their eigenfunctions are non orthogonal and, as a consequence, even 
if they individually decay as $e^{i w t}, \Im (w) >0$, a condition that assures large $t$ stability,  there may be important transient 
growths \cite{schmid}. Take, e.g., the following simple example (from \cite{schmid}, section 2.3 and figure 2)  
of a system of two degrees of freedoms: $\vec{v} \in 
\mathbb{R}^2$ obeying the equation $d \vec{v}/dt = -A \vec{v}$, with $[A,A^T] \neq 0$ a matrix with (non orthogonal!) eigenvectors 
$\vec{e}_1, \vec{e}_2$, say, 
\begin{equation}
A = \left( \begin{array}{cc} 1 & \g \\ 0 & 2 \end{array} \right), \;\;\;  
\vec{e}_1= \left( \begin{array}{c} 1 \\ 0 \end{array} \right),  \;\;\;  
\vec{e}_2= \left( \begin{array}{c} 1 \\ \g^{-1} \end{array} \right)
\end{equation}
Consider the case $\g >>1$. Note that, although $\w_1=i, \w_2=2i$, that is, normal modes decay exponentially,  if 
 $\vec{v}(0)=\a (\vec{e}_1-\vec{e}_2)$, then $\vec{v}(t)=\a \vec{e}_1 e^{-t}- \a \vec{e}_2 e^{-2t}$ 
reaches a maximum norm $||\vec{v}|| \simeq (\g/4) ||\vec{v}(0)||$ at a finite time before decaying to zero. \\

For the odd sector of the LEE, a four dimensional approach relating the metric perturbation  with 
a scalar potential  $\Phi$ defined on $\mathcal{M}$ (instead of $\mathcal{M}/SO(3)$) was 
found in \cite{Dotti:2013uxa}, where it was noticed that the sum over $(\ell,m)$ of (\ref{rws}) simplifies to 
\begin{equation} \label{CsP}
{}^{RW}h_{\a \b}^-= \sum_{(\ell \geq 2,m)} \mathcal{D}_{\a \b}^{(\ell,m,-)}
\left[\phi_{(\ell,m)}^{-},S_{(\ell,m)}\right] = \frac{r^2}{3M} {}^* C_{\a}{}^{\g \d}{}_{\b} \nabla_{\g} \nabla_{\d}
 \left( r^3 \Phi \right),
\end{equation}
where $\Phi$ is a field assembled using spherical harmonics and the RW potentials:
\begin{equation} \label{4DRf}
 \Phi = \sum_{(\ell \geq 2,m)} \frac{\phi^-_{(\ell,m)}}{r} S_{(\ell,m)} : \mathcal{M} \to \mathbb{R}. 
\end{equation}
The odd sector LEE equations  (\ref{RWE}) for $\phi^-_{(\ell,m)}:  \mathcal{M}/SO(3) \to \mathbb{R}$ 
combined to the spherical harmonic equations for the spherical harmonics $S_{(\ell,m)}: S^2 \to \mathbb{R}$,  turn out 
to be equivalent to what we call the four dimensional Regge-Wheeler equation which, adding a cosmological constant,  reads
\begin{equation} \tag{4DRWE} \label{4DRWE}
\nabla^{\a} \nabla_{\a} \Phi + \left( \frac{8M}{r^3} - \frac{2 \Lambda}{3}  \right) \Phi =0. 
\end{equation}
Note however that  $\Phi$ is no more than the collection of $\phi^-_{(\ell,m)}$'s, so its connection to geometrically relevant 
fields is loose.   
Much more important is the fact, also proved in \cite{Dotti:2013uxa} for $\Lambda=0$, that 
 the LEE implies that   the field $r^5 \dot Q_- = 
r^5 G_-$ {\em also} satisfies the \ref{4DRWE}, as  this is what allows us   to place a point wise bound on $G_- = \dot Q_-$. \\
The even sector of the LEE is more difficult to approach. Is the simplicity of the RW potential (\ref{rwp}), 
with the obvious  $\ell (\ell+1)/r^2$ term, what suggested considering the field (\ref{4DRf}). The way  $\ell$ appears in (\ref{zp}), instead,  
is a clear indication that there is no natural 4D interpretation of (\ref{ZE}). Is a map exchanging solutions of  the \ref{RWE} and 
\ref{ZE} equations, found by Chandrasekhar \cite{ch2},  what ultimately allows us to also reduce the even non
 stationary   LEE equations to (\ref{4DRWE}). As a consequence,
 the entire set of non stationary  LEE reduces    to two fields satisfying equation 
(\ref{4DRWE}), as stated in Theorem \ref{bij} below.  \\

The purpose of  this paper is twofold: (i) to  extend the results in \cite{Dotti:2013uxa} 
to  Schwarzschild  black holes in cosmological backgrounds, and (ii) to explain in detail 
a number of technicalities omitted in \cite{Dotti:2013uxa} due to the space limitations imposed by the letter format. 
For $\Lambda \geq 0$ we give a proof of non-modal stability. 
 We leave aside  the treatment of stability of the Schwarzschild anti de Sitter (SAdS) black hole, since the issue of non global hiperbolicity and 
ambiguous dynamics due to the conformal timelike boundary takes us away from 
the core of the subjects addressed  here.
We show however that there is (at least) one choice of 
 Robin boundary condition at the time-like boundary for which 
there is an instability, and we explicitly  exhibit this instability and its effect on the background geometry. 
 To the best of our knowledge, this has not been informed before. 
A systematic study of the  gravitational linear stability of SAdS black holes under different   boundary conditions 
is to be found in \cite{bernardo}. \\

We have found that  $(G_+,G_-)$ defined in (\ref{gpm}) 
are appropriate variables to study the most general gravitational linear perturbations of  Schwarzschild (A)dS black holes. 
The $\ell=0,1$ pieces of these fields encode the relevant information on the stationary modes, 
which are perturbations within the Kerr family (parametrized by  mass $M$ and the 
angular momenta components),
whereas the $\ell \geq 2$ terms encode the dynamics. More precisely:
\begin{itemize}
\item $G_-$ contains no $\ell=0$ term, 
time independent $\ell=1$ terms proportional to the first order angular momenta components ${j^{(i)}}$, and a time dependent $\ell>1$ piece obeying  
the 4DRW equation.
\item $G_+$ contains no $\ell=1$ term, a time independent $\ell=0$ piece proportional to 
a mass shift $\dot M$, and a time dependent $\ell>1$ piece which, for $\Lambda \geq 0$,  can be written 
 in terms of  fields obeying the 4DRW equation. 
\end{itemize}
Once the appropriate set of perturbation fields $(G_+,G_-)$ is given, and their relation to the 4DRW equation  established for $\Lambda \geq 0$, 
we may adapt to the 4DRW equation  the techniques used to prove  boundedness of solutions of the scalar wave equation,  
in order to analyze the behavior of the $G_{\pm}$ fields. As an example, the  result of Kay and Wald \cite{Kay:1987ax} was 
used  in  \cite{Dotti:2013uxa} to prove (\ref{bounds1}) in the Schwarzschild case, and is adapted here to prove  that  (\ref{bounds1}) holds 
also for positive $\Lambda$. 
 We can go further and 
 take advantage of the growing literature on decay of solutions of the scalar wave equation  on S(A)dS backgrounds, as many of these results 
are expected  to hold also for  (\ref{4DRWE}). Specific time decay results for the 4DRW equation, somewhat expected 
from Price's result \cite{Price:1971fb}, can be found in \cite{Blue:2003si} (see also the recent preprint \cite{Dafermos:2016uzj}). 
Putting together  the bijection (\ref{im}), the above description of the stationary ($\ell=0,1$) and dynamic ($\ell \geq 2$) 
pieces of $G_{\pm}$, and the time decay results, 
the following picture emerges for a perturbed Schwarzschild (SdS) black hole: a generic perturbation contains a mass shift, infinitesimal angular momenta 
$j^{(i)}$ and  dynamical degrees of freedom; at large times the dynamical degrees of freedom decay and what is left 
is a linearized    Kerr (Kerr dS) black hole around the background Schwarzschild (SdS) solution. \\

Through the paper, calculations are carried leaving $\Lambda$ unspecified whenever possible, and specializing when necessary. 
Among the many current treatments of 
linear perturbations of spherically symmetric spacetimes, we have made heavy  use of the 
excellent paper \cite{Chaverra:2012bh}, which we found particularly well suited 
to our approach.

\section{Tensor fields on a spherically symmetric space-time}

A spherically symmetric space-time is a warped product ${\cal M} = {\cal O} \times_{r^2} S^2$ of a Lorentzian 
two-manifold $(\mathcal{O}, \tilde g)$ with the unit sphere $(S^2, \widehat g)$, for which we will use the   standard angular coordinates 
 $ \widehat g_{AB} dy^A dy^B = d \theta^2 + \sin^2 \theta \; d\phi^2$:
\begin{equation} \label{ssm}
g_{\a \b} dz^{\a} dz^{\b} = \tilde g_{ab}(x) dx^a dx^b + r^2(x)  \widehat g_{AB}(y) dy^A dy^B. 
\end{equation}
The form of the metric   (\ref{ssm})  implies that  $(\mathcal{M},g)$ inherits the isometry group $O(3)=SO(3) \times P$ of $(S^2,\widehat g)$ 
as an isometry subgroup. Here  
 $SO(3)$ are the proper rotations and $P$ is the antipodal map $P(\theta,\phi)=(\pi-\theta,\phi+\pi)$.
Since $SO(3)$ acts transitively on $S^2$, we find that ${\cal O} = {\cal M} / SO(3)$, this is  why $(\mathcal{O}, \tilde g)$
 is  called the ($SO(3)$) 
{\em orbit space}.\\

Equation (\ref{ssm}) exhibits our index conventions, which we  have adopted form ref \cite{Chaverra:2012bh}:
  lower case  Latin indexes are used  for tensors on  ${\cal O}$, upper case  Latin indexes 
for tensors on $S^2$, and Greek indexes  for space-time tensors. We will furthermore assume that 
\begin{equation} \label{index}
\a=(a,A),\; \b=(b,B), \; \g=(c,C), \; \d=(d,D),... 
\end{equation}
Tensor fields {\em   introduced}
 with a lower $S^2$ index (say $Z_A$) and {\em then shown with an upper $S^2$ index} are assumed 
to have been acted  upon {\em with the unit $S^2$ metric inverse $ \widehat g^{AB}$},
 (i.e.,  $Z^A \equiv  \widehat g^{AB} Z_B$), 
and similarly for upper $S^2$ indexes moving down.  
$\td_a, \tilde \e_{ab}$ and $\tilde g^{a b}$ are the covariant derivative, volume form (any chosen orientation) and metric inverse 
for  $(\mathcal{O}, \tilde g)$; $\hd_A$ and  $ \widehat \e_{AB}$ are the covariant derivative and volume form 
on the unit sphere, for which we assume  the standard orientation $\widehat \e = \sin (\theta) d\theta \wedge d\phi$. \\
As an example,  in terms of the differential operators $\td_a$ and $\hd_A$,  the Laplacian on 
scalar fields reads
\begin{equation} \label{lap}
\nabla_{\a} \nabla^{\a} \Phi = \td_a \td^a \Phi + \frac{2}{r} (\td^b r) (\td_b \Phi) + \frac{1}{r^2} \hd_A \hd^A \Phi.
\end{equation}

\subsection{Covector and symmetric tensor fields} \label{cstf}

The Einstein field equation, as well as its linearized version around a particular solution, is expressed as an equality among symmetric tensor fields. 
The metric perturbation $h_{\a \b}$ may be subjected to gauge transformations of the form (\ref{gt2}). 
This is why we are interested  in the decomposition on a spherically symmetric spacetime of covector fields 
(such as $\zeta_{\a}$) 
and symmetric rank two tensor fields such as $h_{\a \b}$. \\
We will assume all tensor fields on $\mathcal{M}$ are smooth. As a consequence their components will be square integrable on $S^2$ and 
can be expanded using a real orthonormal basis of scalar spherical harmonics 
\begin{align}
& \hd^A \hd_A S_{(\ell, m)} + \ell (\ell+1) S_{(\ell, m)} =0, \\ 
&\int_{S^2}  S_{(\ell', m)}  S_{(\ell, m)} \widehat \e = \d_{\ell  \ell'} \d_{m m'},
\end{align}
where $m$ numbers an orthonormal  basis $S_{(\ell,m)}$ of the $(2\ell +1)$-dimensional eigenspace with eigenvalue $-\ell(\ell+1)$ 
of the Laplace-Beltrami operator $\hd^A \hd_A$ on scalar functions ($\ell=0,1,2,3,...$). An explicit choice for the $\ell=0,1$ subspaces is 
$
S_{(0, 0)}  = \sqrt{\frac{1}{4\pi}},  
$
and
\begin{align} 
\begin{split} \label{l=1}
S_{(1,1)} &= \sqrt{\frac{3}{4\pi}} \sin \theta \cos \phi \\
S_{(1,2)} &= \sqrt{\frac{3}{4\pi}} \sin \theta \sin \phi \\
S_{(1, 3)}  &= \sqrt{\frac{3}{4\pi}} \cos \theta.
\end{split}
\end{align}
We denote $L^2(S^2)_{\ell}$   the $\ell$ subspace of $L^2(S^2)$ and  
$L^2(S^2)_{>j} = \bigoplus_{\ell>j}  L^2(S^2)_{\ell}$.\\

A covector field on $\mathcal{M}$
\begin{equation} \label{vd}
\xi_{\a}=(\xi_a, \xi_A),
\end{equation}
contains the $S^2$ covector $\xi_A$ which, according to Proposition 2.1 in \cite{Ishibashi:2004wx}, 
can be  uniquely decomposed as $\xi_{A}=\hd_A a + V_A$ where $\hd^A V_A=0$. This last condition implies that 
$V_A$ is dual to the differential of  an $S^2-$scalar, $V_A =   \widehat \e_{A}{}^C \hd_C b$. It then follows that, 
introducing  $a=r^2 X$ and $b= r^2 Y$  for later convenience, 
\begin{equation} \label{xidecomp}
\xi_{\a}=   \xi_{\a}^{(-)}  + \xi_{\a}^{(+)} , 
\end{equation}
where the {\em odd} piece of $\xi_{\a}$ is 
\begin{equation} \label{vec-}
\xi_{\a}^{(-)} = (0, r^2 \; \widehat \e_{A}{}^C \hd_C Y),
\end{equation}
and its {\em even} piece is  
\begin{equation} \label{vec+}
\xi_{\a}^{(+)} = (\xi_a, r^2\; \hd_A X). 
\end{equation}
For a given covector field $\xi_{\a}$, the scalar fields $X, Y: \mathcal{M} \to \mathbb{R}$ are unique up to an $S^2$-constant, thus  they are  unique if we require  that 
they belong to $L^2(S^2)_{>0}$
\begin{equation} \label{0l0}
\int_{S^2} X\;  \widehat \e = 0 = \int_{S^2} Y \; \widehat \e,
\end{equation} 
a condition that we will assume. 
The  symmetric tensor field $h_{\a \b} = h_{\b \a}$ 
\begin{equation} \label{std1}
h_{\a \b} = \left( \begin{array}{cc} h_{ab} & h_{aB} \\ h_{Ab} & h_{AB} \end{array} \right),
\end{equation}   
contains two $S^2$ covector fields 
\begin{equation} \label{std2}
h_{aB} = \hd_B q_a +  \widehat \e_{B}{}^C \hd_C h_a 
\end{equation}
and a symmetric $S^2$ tensor field $h_{AB}$. Note that 
$q_a$ and $h_a$ are covector fields on $\mathcal{O}$ parametrized on $S^2$.\\
Using Proposition 2.2 in \cite{Ishibashi:2004wx} and the fact that 
there are no transverse traceless symmetric tensor fields on $S^2$ \cite{Higuchi:1986wu}, we find that the $S^2$ symmetric tensor $h_{AB}$ in (\ref{std1}) 
 can be uniquely decomposed into  three terms:
\begin{equation} \label{std3}
h_{AB} = 2 \hd_{(A} (\widehat \e_{B) C} \hd^C L) + \left(2 \hd_A \hd_B -  \widehat g_{AB} \hd^C \hd_C \right) S_1 +  \widehat g_{AB} \frac{S_2}{2}.
\end{equation}
Introducing $J=S_2/r^2$, $G=S_1/r^2$ and $k=L/r^2$ we arrive at (c.f.,  \cite{Chaverra:2012bh}, Section IV.A)\\

\begin{lem} A generic smooth metric perturbation admits the following decomposition:\\
\begin{equation} \label{pert}
h_{\a \b} = h_{\a \b}^{(-)} + h_{\a \b}^{(+)}
\end{equation}
where the {\em odd} piece of $h_{\a \b} $ is 
\begin{equation} \label{pert-}
 h_{\a \b}^{(-)} =  \left( \begin{array}{cc} 0 &  \widehat \e _B{}^C \hd_C h_a \\  \widehat \e _A{}^C  \hd_C h_b & 2 r^2\widehat \e_{(A}{}^C  \hd_{B)}  \hd_C k 
 \end{array} \right)
\end{equation}
and the {\em even} piece is 
\begin{equation} \label{pert+}
 h_{\a \b}^{(+)} = \left( \begin{array}{cc}  
h_{ab} & \hd_B q_a \\ \hd_A q_b & r^2 \left[\tfrac{J}{2} \widehat g_{AB} + 
(2 \hd_A \hd_B - \widehat g_{AB}\hd^C \hd_C) G \right] \end{array} \right).
\end{equation}
\end{lem}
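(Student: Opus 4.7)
The plan is to reduce the claim to the standard irreducible decompositions on $(S^2,\widehat g)$ applied pointwise in the orbit-space variable $x\in\mathcal{O}$, and then to sort the resulting pieces by their parity under the antipodal map $P$.

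First I would decompose $h_{\a\b}$ according to the index split $\a=(a,A)$ into the three blocks of (\ref{std1}): the symmetric $\mathcal{O}$-tensor $h_{ab}$, the mixed object $h_{aB}$ which at each $x\in\mathcal{O}$ is a smooth covector field on $S^2$, and the symmetric $S^2$-tensor $h_{AB}$ at each $x$. Smoothness of all three as functions of $x$ is inherited from the assumed smoothness of $h_{\a\b}$ on $\mathcal{M}$.

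Second, to the block $h_{aB}$ I would apply Proposition 2.1 of \cite{Ishibashi:2004wx}: a smooth covector field on $S^2$ decomposes, modulo the $\ell=0$ harmonic constants, into an $\hd$-gradient plus a codifferential, the latter expressible on $S^2$ as $\widehat\e_B{}^C\hd_C$ of a scalar. This yields (\ref{std2}) with smooth $q_a,h_a$; their smooth dependence on $x$ follows from elliptic regularity for $\hd^A\hd_A$ on $S^2$. Analogously, for $h_{AB}$ I would invoke Proposition 2.2 of \cite{Ishibashi:2004wx} together with the absence of transverse-traceless symmetric $2$-tensors on $S^2$ \cite{Higuchi:1986wu}, producing the three-term decomposition (\ref{std3}) with scalar potentials $L,S_1,S_2$. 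Setting $k=L/r^2$, $G=S_1/r^2$, $J=S_2/r^2$ rewrites it in the variables of (\ref{pert+}) and (\ref{pert-}).

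Third, I would sort the assembled terms by parity under $P$. Since $P$ reverses orientation on $S^2$, $\widehat\e_{AB}\mapsto -\widehat\e_{AB}$, while $\widehat g_{AB}$, $\hd_A$, scalar fields and every $\mathcal{O}$-tensor are $P$-invariant. Consequently, any piece assembled with an odd number of $\widehat\e$'s is odd: this routes the $h_a$-piece of $h_{aB}$ and the $L$-piece of $h_{AB}$ into $h_{\a\b}^{(-)}$, while $h_{ab}$ together with the $q_a,S_1,S_2$ pieces assemble $h_{\a\b}^{(+)}$, matching (\ref{pert-}) and (\ref{pert+}) respectively.

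The only nontrivial bookkeeping is verifying that $2\hd_{(A}(\widehat\e_{B)C}\hd^C L)$ with $L=r^2k$ reproduces the form $2r^2\widehat\e_{(A}{}^C\hd_{B)}\hd_C k$ displayed in (\ref{pert-}); this is a short Leibniz computation using $\hd_A r=0$ and the covariant constancy of $\widehat\e_{AB}$ under $\hd$. I would expect no genuine analytic obstacle beyond faithfully applying the two quoted decomposition propositions and tracking $P$-parities.
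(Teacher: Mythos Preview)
Your proposal is correct and follows essentially the same route as the paper: the argument preceding the lemma decomposes $h_{aB}$ via Proposition~2.1 of \cite{Ishibashi:2004wx} (giving (\ref{std2})) and $h_{AB}$ via Proposition~2.2 of \cite{Ishibashi:2004wx} together with the absence of transverse-traceless tensors on $S^2$ \cite{Higuchi:1986wu} (giving (\ref{std3})), then rescales $L,S_1,S_2$ by $r^{-2}$ to obtain $k,G,J$. Your added remarks on elliptic regularity for the smooth $x$-dependence of the potentials and on the explicit $P$-parity sorting are a bit more detailed than what the paper spells out at this point (the parity behavior is stated only after Lemma~\ref{kernel}), but the substance is the same.
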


The proof of the following lemma follows from  straightforward calculations:

\begin{lem} \label{kernel}
 \mbox{}
\begin{itemize}
\item[(i)] 
The kernel of the map $(h_a, k) \to h_{\a \b}^{(-)}$ defined in (\ref{pert-}) is the set of $h_a$ and $k$ of the form
\begin{equation}
h_a = h_a(x), \;\; k = k_1(x)+k_2(x) \cos(\theta) + \sin(\theta) \left( k_3(x)  \cos(\phi) + k_4(x)   \sin(\phi) \right),  
\end{equation}
This implies that $h_a$ and $k$ are unique  if they are required to belong to $L^2(S^2)_{>0}$ and 
$L^2(S^2)_{>1}$   respectively:
\begin{equation} \label{exp1}
h_a = \sum_{\ell \geq 1, m} h_a^{(\ell,m)}(x) S_{(\ell,m)}(\theta,\phi), \;\; k(x) = \sum_{\ell \geq 2, m} k^{(\ell,m)}(x) S_{(\ell,m)}(\theta,\phi).
\end{equation}
\item[(ii)] The kernel of 
the map $(h_{ab}, q_a, J, G) \to h_{\a \b}^{(+)}$ defined in (\ref{pert+}) is 
characterized by $h_{ab}=0, J=0,$
 \begin{equation}
q_a = q_a(x), \;\; G = G_1(x)+G_2(x) \cos(\theta) + \sin(\theta) \left( G_3(x)  \cos(\phi) + G_4(x)   \sin(\phi) \right),  
\end{equation}
thus, the fields   $(h_{ab}, q_a, J, G)$ are uniquely defined if we require that $q_a \in L^2(S^2)_{>0} $ and $G\in L^2(S^2)_{>1}$. 
\end{itemize}
\end{lem}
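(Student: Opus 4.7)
The plan is to unpack each block of $h_{\a\b}^{(\pm)}$ componentwise and determine when it vanishes. For part (i), the vanishing of the $aB$-block of (\ref{pert-}) reads $\widehat\e_B{}^C \hd_C h_a = 0$; since $\widehat\e_B{}^C$ is invertible, this forces $\hd_C h_a = 0$, so $h_a$ is constant on each $S^2$ orbit, i.e.\ $h_a = h_a(x)$. For the $AB$-block, the condition is $\widehat\e_{(A}{}^C \hd_{B)}\hd_C k = 0$. I would expand $k = \sum k^{(\ell,m)}(x)\, S_{(\ell,m)}$ and analyze the $S^2$ operator $k \mapsto \widehat\e_{(A}{}^C \hd_{B)}\hd_C k$: it annihilates $\ell = 0$ modes trivially and $\ell = 1$ modes via the identity $\hd_A \hd_B S_{(1,m)} = -\widehat g_{AB}\, S_{(1,m)}$ combined with the antisymmetry of $\widehat\e_{AB}$. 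For $\ell \geq 2$, the resulting odd-parity symmetric trace-free rank-two tensor harmonics on $S^2$ are linearly independent, so each coefficient $k^{(\ell,m)}(x)$ must vanish separately, giving the stated form of $k$.

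For part (ii), the $ab$-block forces $h_{ab}=0$ at once, and the $aB$-block $\hd_B q_a = 0$ yields $q_a = q_a(x)$ as in (i). For the $AB$-block, I would first take the trace with $\widehat g^{AB}$: since the operator $2\hd_A\hd_B - \widehat g_{AB}\hd^C\hd_C$ is trace-free, this isolates $J$ and gives $J = 0$. The remaining trace-free equation $(2\hd_A\hd_B - \widehat g_{AB}\hd^C\hd_C)\, G = 0$ is then handled analogously by mode decomposition: $\ell = 0$ modes vanish trivially, $\ell = 1$ modes vanish again via $\hd_A\hd_B S_{(1,m)} = -\widehat g_{AB}\, S_{(1,m)}$, and for $\ell \geq 2$ the even-parity symmetric trace-free tensor harmonics are linearly independent, forcing $G$ to be supported in $\ell \leq 1$. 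Finally, imposing $h_a,\, q_a \in L^2(S^2)_{>0}$ kills the $\ell = 0$ residual freedom in $h_a,\, q_a$, while $k,\, G \in L^2(S^2)_{>1}$ kills the $\ell \leq 1$ residual freedom in $k,\, G$, establishing the stated uniqueness and the expansions (\ref{exp1}).

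The main technical obstacle is the kernel computation for the two $S^2$ operators acting on $k$ and on $G$. Both output symmetric trace-free rank-two tensors on the unit sphere, so identifying their kernels amounts to showing that the odd- and even-parity tensor harmonics $\widehat\e_{(A}{}^C \hd_{B)}\hd_C S_{(\ell,m)}$ and $(2\hd_A\hd_B - \widehat g_{AB}\hd^C\hd_C) S_{(\ell,m)}$ vanish identically for $\ell \leq 1$ and form a linearly independent system for $\ell \geq 2$. This linear independence is effectively the content, dual to Proposition 2.2 of \cite{Ishibashi:2004wx}, of the absence of nontrivial transverse-traceless symmetric tensor fields on $S^2$ cited from \cite{Higuchi:1986wu}; once it is in hand, everything else reduces to orthogonality of scalar spherical harmonics and the two $\ell=1$ identities recalled above.
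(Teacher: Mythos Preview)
Your proposal is correct and is precisely the natural straightforward calculation; the paper itself offers no proof beyond the sentence ``The proof of the following lemma follows from straightforward calculations,'' so there is nothing to compare against. Your identification of the kernels of the two $S^2$ operators via the $\ell=1$ identity $\hd_A\hd_B S_{(1,m)}=-\widehat g_{AB}S_{(1,m)}$ and the linear independence of the $\ell\ge 2$ tensor harmonics is exactly what the paper has in mind.
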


From now on we will assume the required conditions for uniqueness of $k, J, G, h_a, q_a$ and $h_{ab}$. \\

The linearized Ricci tensor $\dot R_{\a \b}$ admits a decomposition analogous to (\ref{std1})-(\ref{pert+}). 
Given that $S^2-$scalar fields, divergence free covector fields (which are all of the form $\widehat \e_A{}^B \hd_B C$) and transverse traceless symmetric tensors 
on $S^2$ span  inequivalent $O(3)$ representations,  and that the linear map 
$h_{\a \b} \to \dot R_{\a \b}$ is $O(3)$ invariant, this map cannot mix odd and even sectors \cite{Ishibashi:2004wx}. This implies that
$\dot R_{\a \b}^{(+)}$ is a linear functional of $ h_{\a \b}^{(+)} $ only, and similarly 
$\dot R_{\a \b}^{(-)}$ depends only on $ h_{\a \b}^{(-)}$.  \\

Note from (\ref{l=1}) that 
\begin{equation} \label{kvfs}
J^A_{(m)} = \sqrt{\tfrac{4\pi}{3}}\;\; \widehat \e^{AB} \hd_B S_{(1,m)},
\end{equation}
$ m=1, 2, 3$, is a basis of Killing vector fields on $S^2$ generating rotations around orthogonal axis, normalized such that 
the maximum length of their orbits is $2 \pi$.  The square angular momentum operator is the sum of the squares of 
the Lie derivatives along these  Killing vector fields:
\begin{equation}
\mathbf{J}^2 = \sum_{m=1}^3 (\pounds_{J_{(m)}})^2.
\end{equation}
This operator commutes with the maps $(h_a, k) \to h_{\a \b}^{(-)}$ and $h_{\a \b}^{(-)} \to R_{\a \b}^{(-)}$ (and similarly in the even sector).
As a consequence, the $(\ell,m)$ piece of $R_{\a \b}^{(-)}$ depends only on the $(\ell,m)$ piece of $h_{\a \b}^{(-)}$ (see equation (\ref{exp1})):
\begin{equation} \label{pert-l}
 h_{\a \b}^{(\ell,m,-)} =  \left( \begin{array}{cc} 0 &  h_a^{(\ell,m)} \widehat \e _B{}^C \hd_C S_{(\ell,m)} \\   h_b^{(\ell,m)} \widehat \e _A{}^C  
\hd_C S_{(\ell,m)}& 2 r^2 k^{(\ell,m)} \widehat \e_{(A}{}^C  \hd_{B)}  \hd_C  S_{(\ell,m)}
 \end{array} \right),
\end{equation}
and similarly in the even sector.  
Note from (\ref{pert-l}) and (\ref{kvfs}) that  the  odd $\ell=1$ modes add up to 
\begin{equation} \label{pl1-}
h_{\a \b}^{(\ell=1,-)} = \left( \begin{array}{cc} 0 & \sum_{m=1}^3 \sqrt{\tfrac{3}{4\pi}}\, h_a^{(\ell=1,m)} J_{(m) B} \\ 
\sum_{m=1}^3  \sqrt{\tfrac{3}{4\pi}}\, h_b^{(\ell=1,m)} J_{(m) A} & 0 \end{array} \right). 
\end{equation}
These $\ell=1$ perturbations correspond to  infinitesimal rotation, i.e., to deformations towards a stationary Kerr solution. \\

The $(\ell,m,+)$ mode is defined in a way analogous to  (\ref{pert-l}), i.e.,  keeping a single term in the spherical harmonic expansion 
of the even fields $(h_{ab}, q_a, J, G)$. It is important to note that 
\begin{align}
\mathbf{J}^2  h_{\a \b}^{(\ell,m,\pm)} &= - \ell (\ell+1) h_{\a \b}^{(\ell,m,\pm)}, \\
 P_* h_{\a \b}^{(\ell,m,\pm)} &= \pm (-1)^{\ell} 
h_{\a \b}^{(\ell,m,\pm)}.
\end{align}
The different behavior under parity is the signature that distinguishes odd from even modes.

\subsection{The Schwarzschild (A)dS solution}

The Schwarzschild / Schwarzschild (anti) de-Sitter (S(A)dS) is the only 
spherically symmetric  solution  of the vacuum Einstein equation with a 
cosmological constant $\Lambda$:
\begin{equation}
R_{\a \b} = \Lambda g_{\a \b},
\end{equation}
$R_{\a \b}$  the Ricci tensor of the Lorentzian metric $g_{\a \b}$.
The manifold is $\mathbb{R}_v \times (0,\infty)_r \times S^2$,  the metric (c.f. equation (\ref{ssm}))
\begin{equation} \label{sads}
ds^2 = - f dv^2 + 2 dv \; dr + r^2 \left( d\theta^2 + \sin^2 \theta \; d\phi^2 \right), \;\;\,  f = 1-\frac{2M}{r}-\frac{\Lambda r^2}{3}.
\end{equation} 
The constant 
 $M$ is the mass of the solution,  $M=0$ in (\ref{sads}) gives the  Minkowski ($\Lambda=0$), de Sitter ($\Lambda >0$) and anti de Sitter 
($\Lambda <0$) spacetimes.  $M>0$  in (\ref{sads}) corresponds to    a Schwarzschild black hole if  $\Lambda=0$, a
Schwarzschild de Sitter (SdS) black hole if $\Lambda>0$ and   $\Lambda < (3M)^{-2}$, 
and SAdS black hole if $\Lambda <0$. \\

The Killing vector field $k^a \p_a  = \p/\p_v$ 
is timelike in the open sets  defined by $f>0$ and spacelike in the open sets defined by $f<0$. The null hypersurfaces of constant 
{\em positive} $r$ where $f=0$ are the horizons, they cover the curvature singularity at $r=0$. \\

In any open set where $f \neq 0$  we may define a ``tortoise'' radial coordinate $r^*$ by
\begin{equation} \label{rs}
\frac{dr^*}{dr} = \frac{1}{f},
\end{equation}
and a coordinate $t$ through 
\begin{equation}
t= v-r^*. 
\end{equation}
The metric in 
 static coordinates $(t,r,\theta,\phi)$ is 
\begin{equation}
ds^2 = -f dt^2 + \frac{dr^2}{f} + r^2 \left( d\theta^2 + \sin^2 \theta \; d\phi^2 \right). 
\end{equation}

\subsubsection{Horizons and the static region}
The Schwarzschild  ($\Lambda=0$) and the SAdS ($\Lambda<0$) black holes have  
 a single horizon at  $r=r_h$, satisfying 
\begin{equation}
 r_h  = 2M +  \tfrac{1}{3} \Lambda {r_h}^3 \leq 2M,
\end{equation}
in terms of which
\begin{equation}
f = 1 - \frac{(1-\tfrac{1}{3} \Lambda {r_h}^2) r_h}{r}-\frac{\Lambda r^2}{3}.
\end{equation}
These black holes  have a non-static region I defined by $0<r<r_h$ and a static region II defined by $r_h < r < \infty$.\\

The SdS black holes are those for which $0 < \Lambda < (3M)^{-2}$ ($f$ has  single real root when $\Lambda > (3M)^{-2}$, and this root  is negative). 
For SdS black holes   $f$ has 
one negative ($r_o$) and two positive roots $r_h <r_c$, $r_o+r_h+r_c=0$, with $r_c \to r_h{}^+$ as $9M^2 \Lambda \to 1^-$, and \cite{Lake}
\begin{equation} \label{rhrc}
2M < r_h < 3M < r_c.
\end{equation}
There is a non-static region I defined by $0<r<r_h$ adjacent to a static region II ($r_h<r<r_c$),  and a further 
non-static region III defined by $r>r_c$. \\

This paper focuses on the stability of the static region II of Schwarzschild and S(A)dS  black holes. 

\subsubsection{The bifurcation sphere at $r=r_h$} \label{bss}

Let $g$ be  the function 
\begin{equation}
g(r)=\frac{1}{f(r)} - \frac{1}{f'(r_h) (r-r_h)}.
\end{equation}
 $g$  is smooth in I $\cup$ II (i.e., $0<r<r_c$ for SdS, $0<r$ for Schwarzschild and SAdS), since $f$ is smooth with  a simple zero at $r=r_h$
 and no zeros in I and II. \\
Consider the following solution of (\ref{rs}) 
\begin{equation} \label{srs}
r^* = G(r)+ \frac{1}{f'_h} \ln \left| \frac{r}{r_h}-1 \right|, \;\; \frac{dG}{dr}=g, \;\; f'_h = f'(r_h)
\end{equation}
The most general solution is obtained by adding (possibly different) constants to the left and right of $r=r_h$, however, for 
$r^*$ as in (\ref{srs}), the function 
\begin{equation}
s(r) = \left( r/r_h -1 \right)  e^{f'_h G} = \begin{cases} e^{f'_h r^*} & \text{ , in II } \\  -e^{f'_h r^*} & \text{ , in I }  \end{cases}
\end{equation}
is smooth in I $\cup$ II, and  monotonically growing, so it has an inverse $r=\mathcal{K}(s)$. Introduce $u=t-r^*$ in addition 
to $v=t+r^*$ above, then 
\begin{equation}
ds^2 = -f(r) du dv + r^2 \left( d\theta^2 + \sin^2 \theta \; d\phi^2 \right). 
\end{equation}
Now let 
\begin{equation}
(U,V) = \begin{cases} (-e^{-f'_h u/2}, e^{f'_h v/2}) & \text{ , in II } \\  (e^{-f'_h u/2}, e^{f'_h v/2}) & \text{ , in I } \end{cases} 
\end{equation}
Note that $UV=-s(r)$ and $dU dV = \tfrac{1}{4} {f'_h}^2  s(r) \; du dv$, therefore (\ref{sads}) is equivalent to 
\begin{equation} \label{UVmetric}
ds^2 = \frac{4 f(r)}{{f'_h}^2 UV}\; dU dV + r^2  \left( d\theta^2 + \sin^2 \theta \; d\phi^2 \right), \;\; r = k(-UV),\;  \;\;k(-UV)>0
\end{equation}
and $V>0$. 
We may now take two extra copies of I and II (call these $\rm{I}'$ and $\rm{II}'$) and define 
\begin{equation}
(U,V) = \begin{cases} (e^{-f'_h u/2}, -e^{f'_h v/2}) & \text{ , in } \rm{II}' \\  (-e^{-f'_h u/2}, -e^{f'_h v/2}) & \text{ , in  } \rm{I}' \end{cases} 
\end{equation}
The black hole metric  (\ref{sads})  in $\rm{I}' \cup \rm{II}'$ is again given by equation (\ref{UVmetric}), 
except that now   $V<0$. 
It can be checked that (\ref{UVmetric}) is  smooth on $S^2$ times the open region of the $(U,V)$ plane defined 
by $\mathcal{K}(-UV)>0$   ($0<\mathcal{K}(-UV)<r_c$ for SdS). This region contains two copies of I and two copies of II. 
Since $r$ is a function of the product $UV$, the metric (\ref{UVmetric}) has the discrete $Z_2$ symmetry 
$(U,V) \to (-U,-V)$ under which 
${\rm I} \leftrightarrow {\rm I}'$ and  ${\rm II} \leftrightarrow {\rm II}'$. 
The $Z_2$ invariant set $U=V=0$   is a sphere of radius $r_h$, called  {\em bifurcation sphere}. \\

Integrating (\ref{rs}) in region II we find that, 
for $\Lambda=0$, after choosing an integration constant, 
\begin{equation} \label{rs0}
r^* = r + 2M \;\ln\left( \frac{r}{2M}-1 \right);
\end{equation}
for $\Lambda<0$  the integration constant can be chosen such that 
\begin{equation} \label{srnl}
r^* = - \int_r^{\infty} \frac{dr'}{f(r')} \simeq \begin{cases} \frac{r_h}{1-\Lambda {r_h}^2} \ln \left( \frac{r}{r_h}-1 \right) &, r \to r_h^+ \\
\frac{3}{\Lambda r} &, r \to \infty 
\end{cases} , 
\end{equation}
and for $0< 9M^2 \Lambda<1$ and $r_h<r<r_c$, 
\begin{equation} \label{rs+L}
r^* \simeq \begin{cases} \frac{r_h (r_h r_c +{r_h}^2+{r_c}^2)}{(r_c-r_h)(2r_h+r_c)} \ln \left( \frac{r}{r_h}-1 \right) &, r \to r_h^+ \\
\frac{r_c (r_h r_c +{r_h}^2+{r_c}^2)}{(r_h-r_c)(2r_c+r_h)} \ln \left( \frac{r}{r_c}-1 \right)  &, r \to r_c^- 
\end{cases}.
\end{equation}
It follows that, in region II of a Schwarzschild or SdS black hole, $-\infty < r^* < \infty$ (which corresponds to the 
entire quadrants $U<0$, $V>0$ and $U>0$, $V<0$). For SAdS, on the other hand, 
$-\infty < r^* < 0$,  therefore 
$U >-1/V$ in  II, and $U<-1/V$ in $\rm{II}'$.\\ 

The above construction gives the Penrose-Carter diagrams in Figure 1. In the AdS case (right side of the figure), there is a conformal time like 
boundary corresponding to 
$U >-1/V$ in  II, and $U<-1/V$ in $\rm{II}'$. The conformal boundary of region II 
is replaced in the Schwarzschild case by null infinity, and in the SdS case  by the cosmological horizon 
(left side of the figure). 
For the SdS black hole it is possible to 
follow a procedure similar to the one outlined above that gives a further  extension  with a bifurcation sphere at $r=r_c$,
 two copies of region  III, and an extra copy of II. 
The maximal analytic extension of SdS contains infinitely many copies of I, II, and III \cite{Gibbons:1977mu}.

\begin{figure}
\begin{center}
\includegraphics[width=15cm]{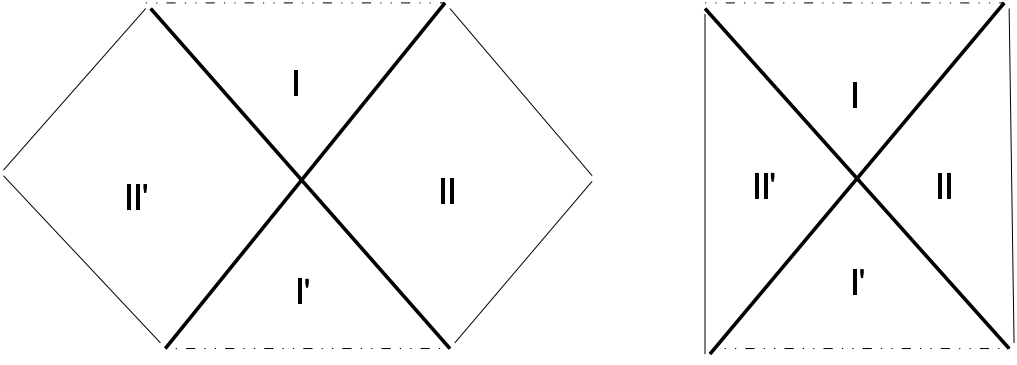}
\caption[Bifu]{Left: Penrose-Carter diagram for a Schwarzschild black hole (SdS black hole). The dotted lines are the $r=0$ 
singularities, the thick 
lines are  past  and future event horizons -the $U$ and $V$ axis in (\ref{UVmetric})-, they intersect at the bifurcation sphere. The thin lines correspond to null infinity (the cosmological horizon). \\
Right: Penrose-Carter diagram for a SAdS black hole. The dotted lines are the singularities at $r=0$, the thick 
lines are the past and future event horizons -the $U$ and $V$ axis intersecting  at the bifurcation sphere-. The thin  time-like lines correspond to  null infinity. } \label{figurita}
\end{center}
\end{figure}

\subsubsection{Hyperbolic equations and global hyperbolicity of the static region} \label{hyp}

If we use the coordinates  $(t,r^*, \theta,\phi)$, the 
  scalar wave equation (\ref{lap}) acquires  the simple  form
\begin{equation} \label{lap2.1}
\nabla^{\a} \nabla_{\a} \Phi = \frac{1}{rf}
  \left[ -\p_t^2 + \p_{r^*}^2 + f \left( \frac{\hd^A \hd_A}{r^2} - \frac{2M}{r^3} + \frac{2 \Lambda}{3}
\right) \right] (r\Phi),
\end{equation}
and  (\ref{4DRWE}) reduces to
\begin{equation} \label{lap2}
0 = -rf\left[ \nabla^{\a} \nabla_{\a} \Phi + \left( \frac{8M}{r^3} - \frac{2 \Lambda}{3}  \right) \Phi \right] 
  =   \left[ \p_t^2 - \p_{r^*}^2 - f \left( \frac{\hd^A \hd_A}{r^2} + \frac{6M}{r^3}
\right) \right] (r\Phi),
\end{equation}
that is,
\begin{equation} \label{2+2}
\frac{\p^2}{\p t^2} \left( r \Phi \right) + \mathcal{A}^{\Lambda} \left( r \Phi \right) =0,
\end{equation}
where 
\begin{equation} \label{AL}
\mathcal{A}^{\Lambda} = -\frac{\p^2}{\p {r^*}^2} + \left( 1 - \frac{2M}{r} - \frac{\Lambda}{3} r^2 \right) \left( -\frac{6M}{r^3} - \frac{\hd^A \hd_A}{r^2} \right) 
=: -\frac{\p^2}{\p {r^*}^2} + V_1 - V_2 \hd^A \hd_A, 
\end{equation}
and
\begin{equation} \label{V12a}
V_1 = \left( 1-\frac{2M}{r}- \frac{\Lambda}{3} r^2 \right) \left( -\frac{6M}{r^3}\right)\;\; \text{ and  }\;\;
V_2 = \left( 1-\frac{2M}{r} - \frac{\Lambda}{3} r^2\right) \left( \frac{1}{r^2} \right).   
\end{equation}
If we expand $\Phi$ in spherical harmonics, 
\begin{equation} \label{4DRW2D}
 \Phi = \sum_{(\ell \geq 0,m)} \frac{\phi_{(\ell,m)}}{r} S_{(\ell,m)} : \mathcal{M} \to \mathbb{R}
\end{equation}
and use this in equations (\ref{2+2})-(\ref{V12a}), we find that (\ref{4DRWE}) is indeed equivalent to the set of equations  (\ref{RWE}), 
extended to include $\ell=0,1$.\\

Note that, although the above equations are formally similar for different values of $\Lambda$,
  $-\infty < r^* < \infty$ as $r$ spans region II of the Schwarzschild and SdS black holes, but is restricted to 
$r^*<0$  in region II  of the SAdS black hole.  
The reason why equation (\ref{4DRWE}) for SAdS 
is equivalent to a system of wave equations with a potential in {\em a half} of a two dimensional 
Minkowski space,   is that 
region II of SAdS is not globally hyperbolic, but has a conformal   timelike boundary at $r^*=0$ (equation (\ref{srnl})).
The problem of defining the dynamics imposed by hyperbolic equations (such as 
 (\ref{4DRWE}), (\ref{RWE}) and (\ref{ZE})) in a non globally hyperbolic static spacetime (either 
SAdS or its two dimensional orbit space SAdS$/SO(3)$), that is, giving 
 unique solutions 
from initial data at a surface $\Sigma_o$ transverse to the timelike Killing vector field $k^a$, has been  addressed in 
the series of articles \cite{Wald:1980jn}, \cite{Ishibashi:2003jd} and \cite{Ishibashi:2004wx}. As expected, the dynamics is unique only 
within the domain of dependence of the support of the initial datum; outside it, it depends on the self 
adjoint extension that we choose for the operator $\mathcal{A}$  which is roughly our choice of 
boundary conditions at  $r^*=0$. \\

For the Zerilli equation (\ref{ZE}) of an SAdS black hole, we have found that there is a choice of boundary conditions at $r^*=0$  
under which even perturbations that preserve the AdS asymptote, grow exponentially with time. For this choice,
 SAdS is unstable. There is an infinite  set of possible  boundary conditions 
for equation (\ref{ZE}) in SAdS$/SO(3)$, we are currently analyzing the dynamics and stability 
of SAdS black holes under different choices \cite{bernardo}.

\section{The Linearized Einstein equation: odd sector}

There are many sources for the well know solution of the LEE around a Schwarzschild background. Two historically relevant 
references are \cite{Moncrief:1974am}, where the gauge invariance of RW and Zerilli potentials was established, and \cite{Gerlach:1979rw}, 
where a ``covariant'' formulation using arbitrary coordinates {\em for the orbit space} in order to study, e.g., 
waves crossing the horizon  was developed (this should more properly be called ``2D covariant approach''). 
More recently, the 2D covariant approach was generalized to GR black holes with constant curvature horizons in arbitrary dimensions 
and including a cosmological constant in a series of papers by Ishibashi and Kodama (see, e.g., \cite{Ishibashi:2003ap} 
and \cite{Kodama:2003jz}). 
 We find 
the 2D covariant approach in reference 
\cite{Sarbach:2001qq}  particularly well adjusted to 
our purposes  as, 
although  restricted to the dynamical $\ell \geq 2$ perturbations, avoids unnecessary expansions in harmonic modes. Section V in \cite{Sarbach:2001qq} 
is devoted to the Schwarzschild black hole, we generalize the exposition there to $\Lambda \neq 0$ below.

\subsection{Gauge transformations and gauge invariants}
The effect of the diffeomorphism generated by the vector field $\xi^{\a}$ on 
the first order variation of a tensor $T$  is $\dot T \to \dot T + \pounds_{\xi} T$. In particular, 
the effect on $h_{\a \b}= \dot g_{\a \b}$, 
\begin{equation} \label{gt}
h_{\a \b} \to \pounds_{\xi} g_{\a \b} = h_{\a \b} + \nabla_{\a} \xi_{\b} + \nabla_{\b} \xi_{\a} = h'_{\a \b},
\end{equation}
defines an equivalence relation where $h_{\a \b} \sim h'_{\a \b}$ iff there exists a covector field $\xi_{\a}$ such that 
(\ref{gt}) holds. 
To measure the strength  of the perturbation we will analyze gauge invariant fields, i.e., fields that
 depend only  on the class $[h_{\a \b}]$ of a metric perturbation under the equivalence 
relation defined above. In this section we show how to parametrize the set of classes  $[h_{\a \b}]$ by means of gauge invariant fields and how to  
 choose a class representative (gauge fixing). \\

Given a generic vector field (\ref{vd})-(\ref{0l0}) we find that 
\renewcommand*{\arraystretch}{1.8}
\begin{equation} \label{gto}
\nabla_{\a} \xi_{\b}^{(-)} + \nabla_{\b} \xi_{\a}^{(-)} = 
\left( \begin{array}{cc} 0 & \widehat \e _B{}^C \hd_C (r^2 \td_a Y ) \\ \widehat \e _A{}^C \hd_C (r^2 \td_b Y ) & 2 r^2 \widehat \e_{(B}{}^C \hd_{A)} \hd_C Y
\end{array} \right)
\end{equation}
and 
\renewcommand*{\arraystretch}{1.8}
\begin{equation} \label{gte}
\nabla_{\a} \xi_{\b}^{(+)} + \nabla_{\b} \xi_{\a}^{(+)} = 
\left( \begin{array}{cc} \td_a \xi_b+ \td_b \xi_a & \hd_B (r^2 \td_a X + \xi_a) \\ \hd_A (r^2 \td_b X + \xi_b)  & 2 r^2 \hd_A \hd_B X + 2 r (\td_a r) \xi^a 
\widehat g_{AB}
\end{array} \right)
\end{equation}
where it should be kept in mind that $X$ and $Y$ belong to $L^2(S^2)_{>0}$.
These equations imply that the odd (even) piece of $\xi_{\a}$ affects the odd (even) piece of $h_{\a \b}$ in (\ref{gt}). 
Let us analyze  the effect of gauge transformations in the odd sector. Inserting $Y = Y_{(\ell=1)} + Y^{>1}$,
$$ Y_{(\ell=1)} = \sum_{m=1}^3 Y_{(m)}(x) S_{(\ell=1,m)}(\theta,\phi), \;\;\; Y^{>1} = \sum_{\ell\geq 2, m}^3 Y_{(\ell,m)}(x) S_{(\ell=1,m)}(\theta,\phi),$$
in (\ref{gto}) ($x$ are coordinates of the orbit manifolds, $S_{(\ell,m)}$ orthonormal real spherical harmonics), 
defining $h_{\a \b}^{(-,>1)} = h_{\a \b}^{(-)}- h_{\a \b}^{(\ell=1,-)}$ and splitting similarly   all the other fields, we find from (\ref{pl1-}), (\ref{gt}) and (\ref{kvfs}) that 
\begin{equation} \label{gt-1}
{h'}_{\a \b}^{(\ell=1,-)} = \left( \begin{array}{cc} 0 & \sum_{m=1}^3 \sqrt{\tfrac{3}{4\pi}}\, \left(h_a^{(\ell=1,m)} + r^2 \td_a Y_{(m)} \right) J_{(m) B} \\
 \sum_{m=1}^3 \sqrt{\tfrac{3}{4\pi}}\,  \left(h_b^{(\ell=1,m)}+ r^2 \td_b Y_{(m)} \right) J_{(m) A} & 0 \end{array} \right)
\end{equation}
and also that 
\begin{equation} \label{pert-gt}
 {h'}_{\a \b}^{(-,>1)} =  \left( \begin{array}{cc} 0 &  \widehat \e _B{}^C \hd_C \left(h_a^{>1} + r^2 \td_a Y^{>1} \right) \\  \widehat \e _A{}^C  \hd_C 
\left(h_b^{>1} + r^2 \td_b Y^{>1} \right)  & 2 r^2\widehat \e_{(A}{}^C  \hd_{B)}  \hd_C (k^{>1} + Y^{>1})
 \end{array} \right)
\end{equation}
This last equation implies that: i) $H_a = h_a^{>1}- r^2 \td_a k^{>1}$ is a gauge invariant field and ii) there is a gauge (commonly known as 
the Regge-Wheeler gauge) where the $\ell >1$ piece of the odd metric perturbation assumes the form 
\begin{equation} \label{pert-RW}
 h_{\a \b}^{(-,>1)} =  \left( \begin{array}{cc} 0 &  \widehat \e _B{}^C \hd_C H_a \\  \widehat \e _A{}^C  \hd_C H_b & 0 \end{array} \right),
\;\;\; H_a \in L^2(S^2)_{>1}
\end{equation}

\subsection{Solution of the linearized Einstein equation}

A calculation of the LEE (\ref{lee}) using (\ref{pert-RW}) gives, in agreement with  equations (31) in \cite{Chaverra:2012bh}, 
\begin{align} \label{dr1}
\dot R_{ab} - \Lambda h_{a b} &= 0\\
\dot R_{aB} - \Lambda h_{a B} &= -\frac{1}{2 r^2} \widehat \e_B{}^C \hd_C \left[ \tilde \e_a{}^c \td_c (r^2 \mathcal{F}) - H_a \, \td^c \td_c(r^2)  +\hd^C 
\hd_C H_a + 2 \Lambda r^2 H_a \right]\\
\dot R_{AB} - \Lambda h_{A B} &= \widehat \e_{(A}{}^C \hd_{B)} \hd_C (\td^a H_a) 
\end{align}
where 
\begin{equation} \label{f}
\mathcal{F} = r^2  \tilde \e^{ab} \td_a \left(\frac{H_b}{r^2} \right).
\end{equation}
In deriving equations (\ref{dr1})-(\ref{f}) we have not  made any assumptions on $H_a$ in (\ref{pert-RW}), so these calculations
 also hold for the $\ell=1$ mode given in 
(\ref{pl1-}) and (\ref{kvfs}) if we replace $H_a$ with $\sum_m h_a^{(\ell=1,m)}(x) S_{(1,m)}$. 
In this case, however, 
$$\left( - \td^c \td_c(r^2)  +\hd^C \hd_C  + 2 \Lambda r^2 \right) h_a^{(\ell=1,m)}S_{(1,m)}=0,$$
and  also   $\widehat \e_{(A}{}^C \hd_{B)} \hd_C ( S_{(1,m)}) =0$,  so the LEE for the odd $\ell=1$ modes can entirely 
be written in terms of the  field 
\begin{equation} \label{f1}
\mathcal{F}_1 = r^2  \tilde \e^{ab} \td_a \left(\sum_m \frac{h_b^{(\ell=1,m)}}{r^2}  \right) S_{(1,m)}
\end{equation}
which is invariant under the gauge transformation (\ref{gt-1}). Thus, for the odd $\ell=1$ mode we find 
\begin{align} \label{drl1}
\dot R_{ab} - \Lambda h_{a b} &= 0\\
\dot R_{aB} - \Lambda h_{a B} &= -\frac{1}{2 r^2} \widehat \e_B{}^C \hd_C \left[ \tilde \e_a{}^c \td_c (r^2 \mathcal{F}_1)  \right]\\
\dot R_{AB} - \Lambda h_{A B} &= 0.
\end{align}
Using the facts that the kernel of the operator $\widehat \e_{A}{}^C  \hd_C$ is the $\ell=0$ mode and the kernel of 
$\widehat \e_{(A}{}^C \hd_{B)} \hd_C$ are the $\ell=0,1$ modes, 
 we arrive at

\begin{lem}
The  LEE (\ref{lee}) in the odd sector is equivalent  to the system
\begin{align} \label{1}
&\td^a H_a =0\\ \label{two}
& \tilde \e_b{}^c \td_c (r^2 \mathcal{F}) - H_b \, \td^c \td_c(r^2)  +\hd^C 
\hd_C H_b + 2 \Lambda r^2 H_b =0 \\
&  \td_c (r^2 \mathcal{F}_1)=0 \label{l1-e}.
\end{align}
\end{lem}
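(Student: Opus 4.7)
The plan is to read off the LEE (\ref{lee}) block by block from the already computed expressions (\ref{dr1})--(\ref{f}) for the $\ell>1$ odd sector in Regge--Wheeler gauge and (\ref{drl1})--(\ref{f1}) for the $\ell=1$ piece, and to turn each block--vanishing condition into the claimed orbit--space equation by using the kernels of the two $S^2$ differential operators that appear, $\widehat{\e}_B{}^C \hd_C$ and $\widehat{\e}_{(A}{}^C \hd_{B)} \hd_C$, together with the $\ell$--support constraints on $H_a$ and $\mathcal{F}_1$.

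The first step is to split $h_{\a\b}^{(-)} = h_{\a\b}^{(-,>1)} + h_{\a\b}^{(\ell=1,-)}$ and bring the first summand into the Regge--Wheeler form (\ref{pert-RW}), which is always possible by (\ref{pert-gt}); the residue $H_a = h_a^{>1} - r^2 \td_a k^{>1}$ is gauge invariant. On the $\ell=1$ piece the only local invariant is the scalar $\mathcal{F}_1$ defined in (\ref{f1}), as one checks directly against the gauge law (\ref{gt-1}). Because $h_{\a\b} \mapsto \dot R_{\a\b} - \Lambda h_{\a\b}$ is $O(3)$ equivariant and commutes with $\mathbf{J}^2$, the $\ell=1$ and $\ell>1$ sectors decouple and can be analyzed separately.

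For the $\ell>1$ sector, the $(ab)$ block of (\ref{dr1}) is trivially satisfied since $h_{ab}^{(-,>1)}=0$ in the RW gauge. The $(AB)$ block reads $\widehat{\e}_{(A}{}^C \hd_{B)} \hd_C (\td^a H_a) = 0$; the inner scalar $\td^a H_a$ inherits $L^2(S^2)_{>1}$ support from $H_a$, because every operation involved commutes with $\mathbf{J}^2$, and the kernel of $\widehat{\e}_{(A}{}^C \hd_{B)} \hd_C$ on scalars is $L^2(S^2)_{\le 1}$; the intersection is trivial and forces $\td^a H_a=0$, which is (\ref{1}). The $(aB)$ block reads $\widehat{\e}_B{}^C \hd_C[\,\mathrm{bracket}\,]=0$ with the bracket equal to the left--hand side of (\ref{two}); the same commutation argument places the bracket in $L^2(S^2)_{>1}$, the kernel of $\widehat{\e}_B{}^C \hd_C$ on scalars is $L^2(S^2)_0$, and the intersection is again trivial, so the bracket vanishes, giving (\ref{two}).

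For the $\ell=1$ piece one has $h_{ab}^{(\ell=1,-)}=0$ and $h_{AB}^{(\ell=1,-)}=0$ by (\ref{pl1-}), so the $(ab)$ and $(AB)$ components of (\ref{drl1}) hold automatically (the angular dependence of $\td^a H_a$ sits in $L^2(S^2)_1$, which is in the kernel of $\widehat{\e}_{(A}{}^C \hd_{B)} \hd_C$, and a specific $\ell=1$ background identity causes the $H_a$--terms of the original bracket to collapse into the $\mathcal{F}_1$ contribution). What remains is $\widehat{\e}_B{}^C \hd_C[\tilde{\e}_a{}^c \td_c(r^2 \mathcal{F}_1)] = 0$; expanding over $\{S_{(1,m)}\}$ and using the linear independence over $S^2$ of the three one--forms $\widehat{\e}_B{}^C \hd_C S_{(1,m)}$ (which, up to the normalization in (\ref{kvfs}), are the three rotational Killing one--forms), the orbit--space coefficient $\tilde{\e}_a{}^c \td_c(r^2 \mathcal{F}_1^{(m)})$ must vanish for each $a$ and $m$; invertibility of $\tilde{\e}_a{}^c$ as a map on covectors on the two--dimensional $(\mathcal{O},\tilde g)$ then gives $\td_c(r^2 \mathcal{F}_1)=0$, which is (\ref{l1-e}). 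The converse is immediate upon substitution into (\ref{dr1})--(\ref{drl1}). The main obstacle I expect is the careful bookkeeping of $\ell$--supports inside the bracket of (\ref{dr1})---verifying that multiplication by $r$--dependent background functions and the action of $\hd^C \hd_C$ do not leak $\ell\le 1$ content---since the whole kernel argument relies on this cleanliness.
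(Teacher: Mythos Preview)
Your proof is correct and follows essentially the same approach as the paper: compute the LEE blocks (\ref{dr1})--(\ref{f}) and their $\ell=1$ specialization, then strip off the angular operators $\widehat\e_B{}^C\hd_C$ and $\widehat\e_{(A}{}^C\hd_{B)}\hd_C$ by invoking their kernels ($\ell=0$ and $\ell\le 1$ respectively) against the $\ell$--support of the enclosed scalars. Your version is merely more explicit than the paper's one--line justification about why the $\ell$--support is preserved under the intervening orbit--space operations, which is indeed automatic since everything commutes with $\mathbf{J}^2$.
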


Let us treat first the equations for $\ell=1$ modes. The general solution of equation (\ref{l1-e}) is \cite{Sarbach:2001qq} 
\begin{equation} \label{sol-1}
h_a^{(\ell=1,m)} = \frac{ 2 j^{(m)}}{r N(r)} \tilde \e_a{}^b \td_b r + r^2 \td_a \left( \frac{Z^{(m)}}{r^2} \right), \;\; N(r) = g^{ab} (\td_a r) (\td_b r),  
\end{equation}
$j^{(m)}$ a gauge invariant constant and  $Z^{(m)}$ an arbitrary function that we recognize from (\ref{gt-1}) as ``pure gauge''. For the Schwarzschild-(A)dS 
space-time in Schwarzschild coordinates, $g_{ab} dx^a dx^b = -f(r) dt^2+ dr^2/f(r)$, $ f(r)=1-2M/r-\Lambda r^2/3= N(r)$, and equation 
(\ref{sol-1})  reads 
\begin{equation} \label{sol-1b}
h_a^{(\ell=1,m)} dx^a = -\frac{2 j^{(m)}}{r} dt + r^2 \td_a \left( \frac{Z^{(m)}}{r^2} \right) \; dx^a.
\end{equation}
If we take, e.g., $j^{(1)}=j^{(2)}=0, j^{(3)}=aM$, choose the gauge 
$Z^{(1)}=Z^{(2)}=0$, $Z^{(3)}=-\Lambda (a /3)\, r^2 t$,  and insert the above equation in (\ref{pl1-}), we obtain
\begin{equation} \label{pl1-b}
h_{\a \b}^{(\ell=1,-)}  dx^{\a} dx^{\b}  = -2  \left(\frac{2aM}{r} +\frac{\Lambda a}{3} r^2 \right) \; \sin^2 (\theta)  \; dt \,  d\phi, 
\end{equation}
which we recognize as the first order  Taylor expansion around $a=0$ of the Kerr-(A)dS metric with angular momentum $J=aM$ in Boyer-Lindquist 
coordinates. This proves our previous assertion 
that $\ell=1$ odd modes correspond to displacements within the stationary Kerr family of black holes. \\

Consider now the $\ell >1$ odd LEE, equations (\ref{1}) and (\ref{two}). For the  Schwarzschild-(A)dS background equation (\ref{two}) reads
\begin{equation} \label{2p}
\mathcal{E}_b = \tilde \e_b{}^c \td_c(r^2 \mathcal{F}) + (2+ \hd^A \hd_A) H_b =0,
\end{equation}
and, as noticed in \cite{Chaverra:2012bh},   implies (\ref{1}) (proof: $0= \td^b \mathcal{E}_b = (2+ \hd^A \hd_A) \td^b H_b$  implies 
$\td^b H_b=0$ since $H_b \in L^2(S^2)_{>1}$). On the other hand, equation (\ref{2p}) implies ({\em but is not equivalent to!})
$\tilde \e^{ab} \td_a (r^{-2} \mathcal{E}_b) =0$, a condition that, 
from the definition of $\mathcal{F}$, equations (\ref{lap}) and   (\ref{f}) and the background equation
\begin{equation} \label{rab}
 \td_a \td_b r =  \left( \frac{M}{r^2} - \frac{\Lambda}{3} r \right)  \; \tilde g _{ab},
\end{equation}
is seen to be equivalent to 
\begin{equation} \label{rwf}
\nabla^{\a} \nabla_{\a} \mathcal{F} + \left(  \frac{8M}{r^3} - \frac{2 \Lambda}{3} \right) \mathcal{F} =0.
\end{equation}
Interestingly 
enough,  (\ref{l1-e}) and (\ref{f1})  imply that  $\mathcal{F}_1= \sum_{m=1}^3 C_{(m)} S_{(1,m)}(\theta,\phi)/r^2$
 which, using (\ref{lap}) is easily seen to also satisfy equation (\ref{rwf}). We conclude that the LEE implies that 
\begin{equation} \label{rweF}
\nabla^{\a} \nabla_{\a} F + \left(  \frac{8M}{r^3} - \frac{2 \Lambda}{3} \right) F =0, \;\;\; (F = \mathcal{F}_1+\mathcal{F}).
\end{equation}
Back to  (\ref{1})-(\ref{two}), we note that the solution of (\ref{1}) is
\begin{equation} \label{Ha}
H_a = \tilde \e_{a b} \td^b (r^2\Phi), 
\end{equation}
for some scalar field $\Phi \in L^2(S^2)_{>1}$. This implies that 
\begin{equation}
\mathcal{F} = r^2 \td^c \left( r^{-2} \td_c(r^2 \Phi) \right).
\end{equation}
Inserting (\ref{Ha})  in (\ref{2p}) gives 
\begin{equation} \label{2pp}
\tilde \e_b{}^c \td_c \left[ r^2 \mathcal{F} +  (2+ \hd^A \hd_A) r^2 \Phi \right] =0.
\end{equation}
The general solution to this equation is 
\begin{equation}
\frac{\mathcal{F}}{r^2} +  \frac{2+ \hd^A \hd_A}{ r^2} \Phi = \td^c \left( r^{-2} \td_c(r^2 \Phi) \right) +  \frac{2+ \hd^A \hd_A}{ r^2} \Phi= 
\frac{z(\theta,\phi)}{r^4},
\end{equation}
and we may (and will) choose $z(\theta,\phi)=0$, which gives 
\begin{equation} \label{FP}
\mathcal{F} = - [2+ \hd^A \hd_A] \Phi   = - [2+ \mathbf{J}^2] \Phi,
\end{equation}
together with 
\begin{align} \label{2drwe}
0 &= \td^c \left( r^{-2} \td_c(r^2 \Phi) \right) +  \frac{2+ \hd^A \hd_A}{ r^2} \Phi \\ \nonumber
   &= \td^c \td_c \Phi + \frac{\hd^A \hd_A}{ r^2} \Phi  +\frac{2}{r} \td^c r \td_c \Phi +  \frac{2}{r^2} (1-\td^c r \td_c r + r \td^c \td_c r) \Phi,  
\end{align}
which, using again (\ref{lap}) and (\ref{rab}), is seen to be equivalent to  the 4DRW equation 
\begin{equation} \label{rweP}
\nabla^{\a} \nabla_{\a} \Phi + \left(  \frac{8M}{r^3} - \frac{2 \Lambda}{3} \right) \Phi =0, \;\;\; \Phi \in L^2(S^2)_{>1}.
\end{equation}
Equation (\ref{rweP}) is equivalent to the standard form (\ref{RWE}) after decomposing $\Phi$ in modes as in (\ref{4DRf}). \\
We can gather our results concerning the odd sector of the LEE in the following:
\begin{lem} \label{oddsol}
Consider the odd sector of the LEE around a Schwarzschild-(A)dS background:
\begin{itemize}
\item[(i)] 
The solution of the $\ell>1$ piece of the metric perturbation 
in the Regge-Wheeler gauge is given by equation (\ref{pert-RW}), where the gauge invariant field 
$H_a$ is as in equation (\ref{Ha}) and $\Phi$ satisfies the 4DRW equation (\ref{rweP}). Equations   (\ref{pert-RW}) and (\ref{Ha}) are equivalent to (\ref{CsP}). 
\item[(ii)] The solution of the $\ell=1$ piece  is given in equations (\ref{pl1-}) and (\ref{sol-1b}). In this last equation 
$j^{(m)}$ are three gauge invariant constants and $Z^{(m)}$ three arbitrary gauge functions.
\item[(iii)] There is a bijection between the  space $\mathcal{L}_-$ of smooth odd solutions of the LEE mod gauge transformations and 
the set $$L_- = \{ j^{(m)}, m=1,2,3 \} \cup \{ \Phi \; | \;  \Phi \text{ is a smooth solution of equation (\ref{rweP})} \}.$$ 
\item[(iv)] The LEE implies that the gauge invariant field $F = \mathcal{F}_1 + \mathcal{F}$ (defined in equations (\ref{f}) and (\ref{f1})) 
satisfies the 4DRW equation (\ref{rweF}).
\end{itemize}
\end{lem}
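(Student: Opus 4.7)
The plan is to assemble the material already derived in Section III into the four assertions and then verify the bijection claimed in (iii). The four items are not really independent: (i) and (ii) together give the explicit parametrization, (iii) packages this as a bijection, and (iv) is essentially a bookkeeping statement collecting two wave equations derived along the way.

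For (i), the existence of the Regge-Wheeler gauge for $\ell>1$ follows from (\ref{pert-gt}): choosing $Y^{>1}=-k^{>1}$ kills the $(A,B)$ block, while the surviving $(a,B)$ block involves precisely the gauge-invariant combination $H_a=h_a^{>1}-r^2\td_a k^{>1}$. The LEE in this gauge is the system (\ref{1})-(\ref{two}), and as remarked in the text (\ref{two}) already implies (\ref{1}) on $L^2(S^2)_{>1}$. On the two-dimensional orbit manifold, the divergence-free condition $\td^a H_a=0$ admits a scalar potential $H_a=\tilde\e_{ab}\td^b\Psi$; writing $\Psi=r^2\Phi$ reproduces (\ref{Ha}), and the requirement $\Phi\in L^2(S^2)_{>1}$ fixes $\Phi$ uniquely. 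Substituting (\ref{Ha}) into (\ref{2p}) and using (\ref{FP}) gives (\ref{2drwe}), which by (\ref{lap}) and the background identity (\ref{rab}) is exactly (\ref{rweP}). The equivalence with (\ref{CsP}) is then a direct tensorial check on the background: expanding $\tfrac{r^2}{3M}{}^*C_{\a}{}^{\g\d}{}_{\b}\nabla_{\g}\nabla_{\d}(r^3\Phi)$ using the algebraic symmetries of the Type-D Weyl tensor and the warped-product form (\ref{ssm}) must be shown to reproduce the right-hand side of (\ref{pert-RW}) with $H_a$ given by (\ref{Ha}).

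For (ii), the general solution (\ref{sol-1}) of (\ref{l1-e}) is the standard result from \cite{Sarbach:2001qq}, and its specialization to Schwarzschild coordinates with $N(r)=f(r)$ produces (\ref{sol-1b}). Gauge invariance of the constants $j^{(m)}$ is immediate from (\ref{gt-1}): the gauge shift adds $r^2\td_a Y_{(m)}$ to $h_a^{(\ell=1,m)}$, which is exactly the $Z^{(m)}$-sector and cannot absorb the stationary $-2j^{(m)}/r$ term. For (iii), surjectivity is then immediate from (i) and (ii). Well-definedness of the forward map uses the gauge invariance of $H_a$ and hence of the $\Phi$ reconstructed via (\ref{Ha}), together with the gauge invariance of $j^{(m)}$. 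Injectivity is the converse: if $\Phi=0$ and every $j^{(m)}=0$, then $H_a=0$, the $\ell>1$ RW representative vanishes, and the $\ell=1$ piece reduces to the pure-gauge term $r^2\td_a(Z^{(m)}/r^2)\,dx^a$, so the class $[h_{\alpha\beta}^{(-)}]$ is trivial.

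Assertion (iv) combines two facts already established: (a) applying $\tilde\e^{ab}\td_a(r^{-2}\,\cdot\,)$ to $\mathcal{E}_b=0$ yields (\ref{rwf}) for $\mathcal{F}$, as carried out between (\ref{2p}) and (\ref{rwf}); (b) $\mathcal{F}_1=\sum_m C_{(m)} S_{(1,m)}/r^2$, derived from (\ref{l1-e})-(\ref{f1}), satisfies the same equation by (\ref{lap}) together with the $\ell=1$ eigenvalue relation for $S_{(1,m)}$. Linearity of the 4DRW operator then gives (\ref{rweF}) for $F=\mathcal{F}_1+\mathcal{F}$. I expect the main obstacle to lie in (i), specifically in verifying the tensorial identity behind (\ref{CsP}): unlike the mode-by-mode reduction, it is a four-dimensional curvature statement whose clean proof requires the explicit form of $\nabla_{\g}\nabla_{\d}(r^3\Phi)$ in the warped geometry and careful use of the self-duality properties of the background Weyl tensor. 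Everything else is an organization of equations already present in the text.
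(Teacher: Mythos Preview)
Your proposal is correct and follows essentially the same route as the paper: the lemma is a summary of the derivations already carried out in Section III, and you assemble them in the same order, with the equivalence to (\ref{CsP}) left as a direct computation (the paper likewise calls it ``a straightforward calculation''). Your treatment of (iii) is slightly more explicit than the paper's one-line ``follows immediately from (i) and (ii)'', but the content is identical.
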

The equivalence of  (\ref{pert-RW}) and (\ref{Ha})  to (\ref{CsP}) is checked by  a straightforward calculation.
 (iii)  follows immediately from (i) and (ii). Note also  that the relation (\ref{FP}) explains why the fields $\mathcal{F}$ and $\Phi$, 
which belong to  $L^2(S^2)_{>1}$,  satisfy the same wave 
equation: the operators $\nabla^{\a} \nabla_{\a}  + \left(  \frac{8M}{r^3} - \frac{2 \Lambda}{3} \right)$ and $[2+ \hd^A \hd_A] $ commute, 
and $[2+ \hd^A \hd_A]$  is invertible in $L^2(S^2)_{>1}$ (on scalar fields, $\hd^A \hd_A = \mathbf{J}^2$ and on arbitrary tensor fields
 $[\mathbf{J}^2,\nabla_{\a}]=0$,  
since $[\pounds_K, \nabla_{\a}]=0$ for any Killing vector field $K^{\a}$).\\

\subsection{Measurable effects of the perturbation on the geometry}

 For the purposes of a non-modal analysis, Lemma \ref{oddsol}.(iii) offers a  more appropriate parametrization of 
the dynamical sector of the odd perturbations  than the infinite set of $\phi_{(\ell,m)}^-$'s. 
 However,  unlike   the $\ell=1$ parameters $j_{(m)}$,  no clear geometrical meaning can be attached   to $\Phi$ in (\ref{rweP}), beyond 
that of being a potential for solutions of the LEE, equation (\ref{CsP}). 
Consider now the curvature scalars in  (\ref{curvaturescalars}). 
In this section we show that the first order variation  $G_-=\dot Q_-$ of $Q_-$ 
associated to a perturbation $[h_{\a \b}^{(-)}] \in \mathcal{ L}_-$ 
contains all the gauge invariant information about $h_{\a \b}^{(-)}$. In particular, $\Phi$ and the $j^{(m)}$can be recovered 
from $G_-=\dot Q_-$ which, unlike $\Phi$, has a distinct geometrical meaning. Note that, since $Q_-=0$ for 
the S(A)dS geometries, $\dot Q_-$ is gauge invariant. \\

$\dot Q_-$ can be obtained from the set of equations (3) and (29) in \cite{Chaverra:2012bh}. The calculations are 
tedious and not particularly illuminating. The result is 
\begin{equation} \label{qd}
\dot Q_- = \frac{3M}{r^5} \hd^A \hd_A F = \frac{3M}{r^5} \mathbf{J}^2 F, \;\;\;(F= \mathcal{F}_1 + \mathcal{F}).
\end{equation}
As an example, for the  choice $j^{(1)}=j^{(2)}=0, j^{(3)}=aM$ in (\ref{pl1-b}), the $\ell=1$ piece of $\dot Q$ in (\ref{qd}) is
\begin{equation} \label{q1a}
\dot Q_-^{(\ell=1)} = \frac{3M}{r^5}  \mathbf{J}^2  \mathcal{F}_1 = -\frac{6M^2 a}{r^7} \cos \theta, 
\end{equation}
which   agrees with the first order Taylor expansion of $Q_-$  for the Kerr-(A)de-Sitter black hole in Boyer-Lindquist coordinates  around $a=0$, as 
anticipated. 
For arbitrary $j^{(m)}$'s, equation (\ref{q1a}) generalizes to 
\begin{equation} \label{q1b}
\dot Q_-^{(\ell=1)}  = - \frac{6M}{r^7} \sqrt{\tfrac{4 \pi}{3}}  \sum_{m=1}^3 j^{(m)}S_{(1,m)}.
\end{equation}
Combining equations (\ref{FP}), (\ref{qd}) and (\ref{q1b}) gives
\begin{equation} \label{qdt}
\dot Q_- = - \frac{6M}{r^7} \sqrt{\tfrac{4 \pi}{3}}  \sum_{m=1}^3 j^{(m)}S_{(1,m)} - \frac{3M}{r^5} \mathbf{J}^2 (\mathbf{J}^2 +2) \Phi,
\end{equation}
which generalizes to $\Lambda \neq 0$ 
 equation (22) in \cite{Dotti:2013uxa}. 
\begin{thm} \label{biy-}
Let $[h_{\a \b}^{(-)}] \in \mathcal{ L}_-$ and  $\dot Q_-\left([h_{\a \b}^{(-)}]\right)$ be the  first order variation of  
$Q_-$ for the perturbation $[h_{\a \b}^{(-)}]$: 
\begin{itemize}
\item[(i)] The field $r^5 \dot Q_-$ is in  $ L^2(S^2)_{>0}$ and satisfies the 4DRW equation 
\begin{equation} \label{rweqd}
\left[ \nabla^{\a} \nabla_{\a}  + \left(  \frac{8M}{r^3} - \frac{2 \Lambda}{3} \right)\right] (r^5 \dot Q_-) =0
\end{equation}
\item[(ii)] The map $[h_{\a \b}^{(-)}] \to \dot Q_-\left([h_{\a \b}^{(-)}]\right)$ is invertible: it is possible to construct 
a  representative of $[h_{\a \b}^{(-)}]$ from $\dot Q_-\left([h_{\a \b}^{(-)}]\right)$. 
\end{itemize}
\end{thm}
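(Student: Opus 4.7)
The plan is to leverage the explicit representation of $\dot Q_-$ given by equation (\ref{qdt}), which already expresses $G_- = \dot Q_-$ as a linear combination of the gauge invariants $j^{(m)}$ and $\Phi$ isolated by Lemma \ref{oddsol}. Both parts of the theorem then reduce to algebraic and operator-theoretic manipulations based on the commutation of $\mathbf{J}^2$ with the 4DRW operator already noted at the end of Section~3.2.

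For part (i), I would first note from (\ref{qdt}) that $r^5\dot Q_-$ lies in $L^2(S^2)_{>0}$: the $j^{(m)}$ contribution is $\ell=1$, and the other term is $\mathbf{J}^2(\mathbf{J}^2+2)\Phi$ with $\Phi \in L^2(S^2)_{>1}$, so no $\ell=0$ component survives. For the wave equation, use instead (\ref{qd}): $r^5 \dot Q_- = 3M\,\mathbf{J}^2 F$ with $F = \mathcal{F}_1 + \mathcal{F}$ satisfying (\ref{rweF}) by Lemma \ref{oddsol}(iv). Since $\mathbf{J}^2 = \hd^A\hd_A$ commutes with $\nabla^{\alpha}\nabla_{\alpha}$ (the background is spherically symmetric, so $[\pounds_K,\nabla_\alpha]=0$ for each rotational Killing field) and commutes trivially with multiplication by the spherically symmetric function $8M/r^3 - 2\Lambda/3$, applying $\mathbf{J}^2$ to (\ref{rweF}) yields (\ref{rweqd}).

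For part (ii), I would treat (\ref{qdt}) as a decomposition of $\dot Q_-$ into its $\ell=1$ and $\ell\geq 2$ components and invert each. Projecting $-r^7\dot Q_-/6M$ onto $L^2(S^2)_{\ell=1}$ reads off the three constants $j^{(m)}$ uniquely. On $L^2(S^2)_{>1}$, the operator $\mathbf{J}^2(\mathbf{J}^2+2)$ is diagonal in the spherical harmonic basis with eigenvalues $\ell(\ell+1)(\ell-1)(\ell+2)$, strictly positive for $\ell\geq 2$, hence formally invertible term by term in the harmonic expansion; this recovers $\Phi \in L^2(S^2)_{>1}$ from the $\ell \geq 2$ part of $-r^5\dot Q_-/3M$. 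Once $(j^{(m)},\Phi)$ are reconstructed, a class representative of $[h_{\alpha\beta}^{(-)}]$ is produced by assembling the pieces supplied by Lemma \ref{oddsol}: the $\ell\geq 2$ piece from (\ref{CsP}) (equivalently (\ref{pert-RW}) with $H_a$ given by (\ref{Ha})), and the $\ell=1$ piece from (\ref{pl1-}) with $h_a^{(\ell=1,m)}$ as in (\ref{sol-1b}) and any convenient choice of the pure-gauge functions $Z^{(m)}$.

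The main technical point to be careful about is the inversion of $\mathbf{J}^2(\mathbf{J}^2+2)$ at the level of smooth, rather than merely $L^2$, fields, and the verification that the reconstructed $\Phi$ actually satisfies (\ref{rweP}). Smoothness is preserved because the harmonic coefficients of a smooth field decay faster than any polynomial in $\ell$, so term-by-term division by $\ell(\ell+1)(\ell-1)(\ell+2)$ yields a smooth field on $\mathcal{M}$. That the recovered $\Phi$ solves (\ref{rweP}) follows from commuting the 4DRW operator past $\mathbf{J}^2(\mathbf{J}^2+2)$: since $r^5\dot Q_-$ solves (\ref{rweqd}) by part (i), and $\mathbf{J}^2(\mathbf{J}^2+2)$ commutes with the 4DRW operator and is injective on $L^2(S^2)_{>1}$, the equation descends to $\Phi$. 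This closes the circle and gives the inverse map explicitly.
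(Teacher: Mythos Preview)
Your proposal is correct and follows essentially the same route as the paper's own proof: both parts are read off from (\ref{qdt}) and (\ref{qd}), using Lemma~\ref{oddsol}(iv) together with the commutation of $\mathbf{J}^2$ with the 4DRW operator for (i), and the invertibility of $\mathbf{J}^2(\mathbf{J}^2+2)$ on $L^2(S^2)_{>1}$ plus the explicit $\ell=1$ formula for (ii). Your added remarks on smoothness preservation under term-by-term inversion and on why the recovered $\Phi$ still solves (\ref{rweP}) are details the paper leaves implicit, but they are in the same spirit and do not alter the argument.
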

\begin{proof}
\mbox{}
\begin{itemize}
\item[(i)] 
Equation (\ref{qdt}) proves that $\dot Q_-  \in L^2(S^2)_{>0}$. This is a consequence  of the facts that  there are no odd $\ell=0$ modes, only odd modes 
contribute to  $\dot Q_-$, 
and 
$\mathbf{J}^2 \dot Q_-\left([h_{\a \b}^{(-)}]\right)= \dot Q_-\left( \mathbf{J}^2[h_{\a \b}^{(-)}]\right)$.
 Alternatively, by Birkhoff's theorem in a cosmological background \cite{Schleich:2009uj},
 the only  possible spherically symmetric perturbation of a Schwarzschild-(A)dS black hole amounts to a change of 
the black hole mass, and this does not affect the unperturbed value $Q_-=0$. Since $\mathbf{J}^2$ commutes with the wave operator 
in (\ref{rweF}), it follows from (\ref{qd}) and Lemma \ref{oddsol}.(iv) that $r^5 \dot Q_-$ satisfies (\ref{rweqd}). 
\item[(ii)] According to equation (\ref{qdt}), 
from the $\ell=1$ coefficients 
in the spherical harmonic expansion of $\dot Q_-$ we obtain  the constants $j^{(m)}$. This allows us to construct 
the $\ell=1$ piece of the $Z^{(m)}=0$ representative of  $[h_{\a \b}^{(-)}]$     in (\ref{sol-1}) and use it in 
(\ref{pl1-}).
  The orthogonal  projection of $\tfrac{r^5}{3M} \dot Q_-$ onto 
$L^2(S^2)_{>1}$ 
is related to $\Phi$ through the operator $\mathbf{J}^2 (\mathbf{J}^2 +2)$ (equation (\ref{qdt})). This operator  is invertible in $L^2(S^2)_{>1}$, 
so we can recover $\Phi$ from $\dot Q_-\left([h_{\a \b}^{(-)}]\right)$ and use it in (\ref{Ha}) or (\ref{CsP}) 
to construct the Regge-Wheeler representative 
(\ref{pert-RW}) of the $\ell>1$ piece of $[h_{\a \b}^{(-)}]$. \\
\end{itemize}
\end{proof}
Equation (\ref{qdt}) defines a bijection between the set $L_- \sim \mathcal{L}_-$defined in Lemma \ref{oddsol}.(iii) and the set 
\begin{equation} \label{ll-}
\hat{\mathcal{L}}_ - = \bigg\{ \dot Q_-\left([h_{\a \b}^{(-)}]\right) \; \bigg| \; 
h_{\a \b}^{(-)} \text{ is a solution of the LEE} \bigg\}
\end{equation}
This bijection implies that 
$\dot Q_-$ contains all the relevant (i.e., gauge invariant) information on the metric perturbation that gave rise to it. 
The invertible  relations 
\begin{equation}
\Phi \xrightarrow{-(2 + \mathbf{J}^2)\;} \mathcal{F} \xrightarrow{ \mathbf{J}^2}  r^5 \dot Q^{>1}_- 
\end{equation}
and 
\begin{equation}
 F \xrightarrow{ \mathbf{J}^2} r^5 \dot Q_-,
\end{equation}
explain why all these fields obey the same way equation. Note that the addition of an $\ell=1$ piece $\Phi^{(\ell=1)}$ to the potential 
 $\Phi$ satisfying a 4DRW equation 
would be irrelevant, as its contribution to $H_a$ would vanish: $\tilde \e_{a b} \td^b (r^2\Phi^{(\ell=1)})=0$.

\subsection{Nonmodal linear stability of the $\Lambda \geq 0$ black holes}

Having found that the   scalar  gauge invariant  field $G_-=\dot Q_-$ that measures  the distortion of the curvature
  encodes all the information on a given odd perturbation, it is natural
to define the strength of the perturbation as the magnitude of this field. 
A key additional feature of $\dot Q_-$ is the fact $r^5 \dot Q_-$ satisfies equation  (\ref{4DRWE}) 
(see  (\ref{rweqd}) in Theorem \ref{biy-}.(i)). This will be used to prove that the magnitude of $\dot Q_-$, 
and thus 
the strength of the perturbation, can be 
bounded on the entire outer static region. This fulfills the requirements of our proposed  notion of non-modal linear stability.
It is important to note that Theorem \ref{biy-}  applies to arbitrary 
smooth perturbations, whereas any boundedness or fall-off condition will  require a restriction 
to perturbations evolving from data that behave  properly as $r \to \infty$ ($r \to r_c$) in the asymptotically 
flat (de Sitter)  case.

\subsubsection{The asymptotically flat case}

In the $\Lambda=0$ case, a simple pointwise boundedness statement for $G_- = \dot Q_-$ can be made by 
noting  that the proof of boundedness of a Klein Gordon field in  \cite{Kay:1987ax} 
holds for the 4DRW equation:

\begin{thm} \label{odd-0}
 For any smooth solution of the odd LEE which has  compact support on Cauchy surfaces of the Kruskal extension
$ \rm{I \cup II \cup I' \cup II'}$  of 
the Schwarzschild space-time (Figure \ref{figurita}), there exists a constant $K_-$ such that $|G_- |~<~K_- \; r^{-6}$ for $r>2M$.
\end{thm}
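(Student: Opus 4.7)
The plan is to apply the Kay-Wald $L^{\infty}$-boundedness technique of \cite{Kay:1987ax}, originally developed for the scalar wave equation on the maximally extended Schwarzschild spacetime, to the gauge invariant field $\psi := r^5 G_-$. By Theorem~\ref{biy-}(i) this $\psi$ satisfies the four dimensional Regge-Wheeler equation (\ref{rweqd}) with $\Lambda=0$, and since $G_-=\dot Q_-$ is constructed algebraically from $h_{\alpha\beta}^{(-)}$ and finitely many of its covariant derivatives, the compact support of the metric perturbation data on a Cauchy slice $\Sigma$ of the Kruskal extension is inherited by the Cauchy data $(\psi|_\Sigma, \partial_t\psi|_\Sigma)$.

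First, I would use (\ref{qdt}) to split $\psi = \psi^{(\ell=1)} + \psi^{(>1)}$. The stationary piece $\psi^{(\ell=1)} = -6M\sqrt{4\pi/3}\, r^{-2}\sum_m j^{(m)} S_{(1,m)}$ is time independent and satisfies $|G_-^{(\ell=1)}| \leq C_1 r^{-7}$ for all $r>0$, already dominated by $K_- r^{-6}$ on region II. The remaining task is to bound the dynamical component $\psi^{(>1)} \in L^2(S^2)_{>1}$, which also satisfies (\ref{rweqd}).

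Next, using the 2D reduction (\ref{2+2})--(\ref{AL}), expand $\psi^{(>1)} = \sum_{\ell\geq 2, m} (\phi_{(\ell,m)}/r)\,S_{(\ell,m)}$ so that each mode $\phi_{(\ell,m)}(t,r^*)$ satisfies the Regge-Wheeler equation (\ref{RWE}) on the two dimensional Kruskal orbit space with potential $f V^{RW}_{(\ell,m)}$. Since $\ell\geq 2$ forces $V^{RW}_{(\ell,m)} = \ell(\ell+1)/r^2 - 6M/r^3 > 0$ throughout $r>2M$, the conserved Wald-type energy is positive definite. The Kay-Wald argument then proceeds by using the discrete isometry $(U,V)\to(-U,-V)$ of the Kruskal extension introduced in Section~\ref{bss} to split the Cauchy data of $\phi_{(\ell,m)}$ into two pieces whose supports can be translated, via the Killing flow generated by $\partial_t$, into regions where $\partial_t$ is uniformly timelike; combined with positivity of the energy this yields mode-wise pointwise bounds $|\phi_{(\ell,m)}(t,r)| \leq C_{(\ell,m)}$, equivalently $|r \cdot r^5 G_-^{(\ell,m)}| \leq C_{(\ell,m)}$, so that $|G_-^{(\ell,m)}| \leq C_{(\ell,m)}\, r^{-6}$ in region II.

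The main obstacle is passing from these mode-wise bounds to a pointwise bound on the full angular sum, since the constants $C_{(\ell,m)}$ depend on $(\ell,m)$ and naive summation is divergent. The remedy I would adopt exploits the fact that $\mathbf{J}^2$ commutes with the wave operator $\nabla^{\alpha}\nabla_{\alpha}+8M/r^3$ (since Killing vectors commute with covariant derivatives), so any iterated angular derivative $(\mathbf{J}^2)^N \psi^{(>1)}$ is again a compactly supported solution of (\ref{rweqd}) and inherits the analogous mode-wise Kay-Wald bound. Taking $N$ large enough to invoke the Sobolev embedding $H^N(S^2)\hookrightarrow C^0(S^2)$, the angular $L^2$ estimates sum to a uniform pointwise bound $|r\psi^{(>1)}| \leq C$ throughout region II, whence $|G_-^{(>1)}| \leq C\, r^{-6}$. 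Adding this to the stationary contribution yields the constant $K_-$ asserted in the theorem.
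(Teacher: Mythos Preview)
Your proposal is correct and follows essentially the same method as the paper: adapt the Kay--Wald boundedness theorem \cite{Kay:1987ax} to the four-dimensional Regge--Wheeler equation satisfied by $r^5 G_-$, after splitting off the stationary $\ell=1$ contribution. The paper's proof differs only in that it skips your mode-wise intermediate step entirely. Rather than first obtaining per-mode constants $C_{(\ell,m)}$ and then worrying about summing via $\mathbf{J}^2$-commutation and angular Sobolev embedding, the paper observes that the proof in the appendix of \cite{Kay:1987ax} applies \emph{directly} to the full field $r^6\dot Q_-^{>1}$ once one checks two structural facts: (i) the potentials $V_1 = -6Mf/r^3$ and $V_2 = f/r^2$ entering $\mathcal{A} = -\partial_{r^*}^2 + V_1 - V_2\,\hd^A\hd_A$ are bounded on region~II, and (ii) $\mathcal{A}$ is positive definite and self-adjoint on the $\ell>1$ subspace of $L^2(\mathbb{R}_{r^*}\times S^2)$. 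The Sobolev-type inequality used in \cite{Kay:1987ax} already carries the term $\|\mathbf{J}^2\Phi\|$ on its right-hand side (cf.\ equation~(\ref{sobo}) later in the paper), so the $\mathbf{J}^2$-commutation you introduce as a ``remedy'' is in fact built into the Kay--Wald machinery from the start; there is no separate summation obstacle to overcome. Your route and the paper's are therefore the same argument, yours taking an unnecessary detour through mode decomposition before arriving at the same ingredients.
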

\begin{proof}
From (\ref{qdt}), $|\dot Q_- | \leq  C/r^7 + | r^5 \dot Q_{>1}|/r^5$, where $C>0$ is a constant that depends on the $j^{(m)}$'s 
and the field $r^5 \dot Q_{>1}$   satisfies the 4DRW equation
(\ref{rweP}), so we only need  concentrate on this field. 
The similarities between the 4DRWE equation   and the massive Klein Gordon equation dealt with in 
 \cite{Kay:1987ax} allow us to proceed  by adapting  the proof of Theorem 1 in \cite{Kay:1987ax}. 
The $Z_2$ symmetry arguments in \cite{Kay:1987ax} showing  that this theorem    reduces to 
 the {\em intermediate case} (treated in the appendix in \cite{Kay:1987ax}) apply verbatim to equation (\ref{rweP}). 
The intermediate  case is defined by adding   the requirements that    $r^5 \dot Q$ and its $T$ derivative (in coordinates 
$(T,X,\theta,\phi)$, where $T=(U+V)/2$ and $X=(V-U)/2$)
vanish on the bifurcation sphere (refer to Section \ref{bss}).
There remains  to check that the proof in the appendix in \cite{Kay:1987ax} applies to 
the wave equation (\ref{rweP}). 
To this end, 
we use (\ref{lap2})-(\ref{V12a}) to cast (\ref{rweqd}) in the exterior Kruskal wedge as 
\begin{equation} \label{2+2.1}
\frac{\p^2}{\p t^2} \left( r^6 \dot Q_{>1} \right) + \mathcal{A} \left( r^6 \dot Q_{>1} \right) =0,
\end{equation}
where 
\begin{equation} \label{A}
\mathcal{A} = -\frac{\p^2}{\p {r^*}^2} + \left( 1 - \frac{2M}{r} \right) \left( -\frac{6M}{r^3} - \frac{\hd^A \hd_A}{r^2} \right) 
=: -\frac{\p^2}{\p {r^*}^2} + V_1 - V_2 \hd^A \hd_A, 
\end{equation}
with $r^*$ defined in (\ref{rs}), 
\begin{equation}\label{V12}
V_1 = \left( 1-\frac{2M}{r} \right) \left( -\frac{6M}{r^3}\right)\;\; \text{ and  }\;\;
V_2 = \left( 1-\frac{2M}{r} \right) \left( \frac{1}{r^2} \right).   
\end{equation}
For  the Klein Gordon equation  dealt with in \cite{Kay:1987ax}, the differential equation assumes this same form with 
\begin{equation}
V_1^{KG} = \left( 1-\frac{2M}{r} \right) \left( \frac{2M}{r^3} + m^2 \right), \;\;\;
V_2^{KG} =  V_2, 
\end{equation}
where $m^2$ is the square of the mass of the Klein Gordon field (which appears with a wrong sign in equation (1) in 
\cite{Kay:1987ax}, but with the  correct sign in the 
equations in the  appendix). However, the proof in the appendix in \cite{Kay:1987ax} does not make use of the explicit forms  
of $V_1^{KG}$ and $V_2^{KG}$, but only on the facts that these functions are bounded on the exterior wedge $r \geq 2M$, and that 
$\mathcal{A}$ is a positive definite self adjoint operator on $L^2(\mathbb{R} \times S^2, dr^*  \sin \theta \; d \theta \; d\phi)$. 
Since  $V_1$ and $V_2$ defined in (\ref{A})-(\ref{V12}) are  bounded for $r>2M$, and $\mathcal{A}$ is positive definite 
{\em on the $\ell>1$ subspace of $L^2(\mathbb{R} \times S^2, dx \; \sin \theta \; d \theta \; d\phi)$}, the proof in the appendix in \cite{Kay:1987ax} 
applies to the wave equation (\ref{rweqd}) for $ r^5 \dot Q^{>1}_-$. 
The intermediate case then follows, and so does the analogue  of Theorem 
1  in \cite{Kay:1987ax}. Note that the proof in   \cite{Kay:1987ax} shows that for the Klein Gordon field 
$|\Phi_{KG}| < C/r$ holds on the domain of outer communications. Although the weaker statement $|\Phi_{KG}| < C$ 
has been made in \cite{Kay:1987ax}, the stronger form was used in this proof. 
\end{proof}

\subsubsection{The asymptotically de Sitter case}

Two key similarities between  the extension (\ref{UVmetric})  for $\Lambda \neq 0$  and   the Kruskal extension of the  $\Lambda=0$ Schwarzschild black hole  (refer 
to Section \ref{bss} and Figure 1) are: i)  the $Z_2$ isometry 
exchanging I and $\rm{I}'$ and II and $\rm{II}'$, and ii) the fact that $r_h$ is a simple root of $f$, which implies that 
 the asymptotic behavior of fields vanishing at the bifurcation 
sphere is that in  equation (A1)   in  \cite{Kay:1987ax}. This allows us to prove the following
\begin{thm} \label{odd+}
 For any smooth solution of the odd LEE which has  compact support on Cauchy surfaces of the extended 
$ \rm{I \cup II \cup I' \cup II'}$ 
 SdS black hole, there exists a constant $K_-$ such that $|G_- | < K_- \; r^{-6}$ 
(equivalently, $|G_-|<$ a constant) for $r_h<r<r_c$.
\end{thm}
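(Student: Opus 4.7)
The plan is to mimic, almost line by line, the proof of Theorem \ref{odd-0}, exploiting the two structural similarities between the SdS extension and the Kruskal extension of Schwarzschild that are flagged just before the theorem. First I would split $\dot Q_-$ via equation (\ref{qdt}) into the stationary $\ell=1$ piece, which is manifestly bounded in absolute value by $C/r^7$ on $r_h<r<r_c$ (with $C$ a fixed combination of the gauge invariants $j^{(m)}$), and the dynamical piece $r^5\dot Q_-^{>1}\in L^2(S^2)_{>1}$, which by Theorem \ref{biy-}.(i) satisfies the 4DRW equation (\ref{rweqd}). The remainder of the argument concerns only this dynamical piece.

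Next I would rewrite (\ref{rweqd}) in the form (\ref{2+2})--(\ref{V12a}), so the dynamical problem becomes $\partial_t^2(r^6\dot Q_-^{>1})+\mathcal{A}^{\Lambda}(r^6\dot Q_-^{>1})=0$ on $\mathbb{R}_t\times\mathbb{R}_{r^*}\times S^2$ (using (\ref{rs+L}), $r^*$ ranges over all of $\mathbb{R}$ in region II of SdS). To apply the Kay--Wald machinery, I need to verify the three ingredients their proof actually uses: (a) $V_1$ and $V_2$ are bounded on $r_h\le r\le r_c$; (b) $\mathcal{A}^{\Lambda}$ is a positive, essentially self-adjoint operator on the $\ell>1$ subspace of $L^2(\mathbb{R}\times S^2,\,dr^*\sin\theta\,d\theta\,d\phi)$; and (c) the $Z_2$ isometry that swaps I $\leftrightarrow$ I$'$ and II $\leftrightarrow$ II$'$ in the extended SdS spacetime allows the reduction to the ``intermediate case'' in which $r^6\dot Q_-^{>1}$ and its $T$-derivative vanish on the bifurcation sphere $U=V=0$ at $r=r_h$. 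Point (a) is immediate since $f$ is continuous on $[r_h,r_c]$. Point (c) is a direct transcription of Section 2 of \cite{Kay:1987ax}, using the $Z_2$ symmetry of (\ref{UVmetric}) exactly as in the $\Lambda=0$ case. Point (b) amounts to observing that on an $\ell\ge 2$ mode $\mathcal{A}^{\Lambda}$ reduces to $-\partial_{r^*}^2+f\bigl(\ell(\ell+1)/r^2-6M/r^3\bigr)$ and that $\ell(\ell+1)/r^2-6M/r^3\ge 6/r^2-6M/r^3=6(r-M)/r^3>0$ on $r_h<r<r_c$ (since $r_h>2M$ by (\ref{rhrc})), so the potential is non-negative and hence the operator is positive.

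With (a)--(c) in hand, the appendix argument of \cite{Kay:1987ax}---which, as noted in the proof of Theorem \ref{odd-0}, uses only boundedness of $V_1,V_2$ on the exterior wedge, positivity of $\mathcal{A}$, and the near-bifurcation-sphere asymptotic form of solutions vanishing there---goes through without modification. The relevant near-horizon asymptotic form (A1) of \cite{Kay:1987ax} is valid because $r_h$ is a simple root of $f$; this is the second similarity listed before the theorem statement. The resulting pointwise bound on $r^6\dot Q_-^{>1}$, combined with the trivial $C/r^7$ bound on the $\ell=1$ contribution, yields $|G_-|<K_-r^{-6}$ on $r_h<r<r_c$, which on the compact $r$-interval $[r_h,r_c]$ is equivalent to a plain pointwise bound.

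The one place I expect the adaptation to require slight care, rather than being purely verbatim, is in the treatment of the \emph{other} end of the $r^*$ line. In \cite{Kay:1987ax} one asymptote is null infinity; here both $r^*\to-\infty$ and $r^*\to+\infty$ correspond to Killing horizons (at $r_h$ and $r_c$ respectively), each a simple root of $f$. This is in fact easier than the asymptotically flat setting, since the near-$r_c$ asymptotic analysis is identical in structure to the near-$r_h$ one, and the $Z_2$ reduction still collapses the problem to the intermediate case at the $r_h$ bifurcation sphere. Thus no new estimates at the cosmological horizon are needed, and the Kay--Wald scheme closes to give the stated bound.
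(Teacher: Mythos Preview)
Your proposal is correct and follows essentially the same route as the paper's proof: split off the $\ell=1$ piece via (\ref{qdt}), reduce to the intermediate case by the $Z_2$ symmetry of the extension (\ref{UVmetric}), and then invoke the appendix of \cite{Kay:1987ax} after checking that $V_1^{\Lambda},V_2^{\Lambda}$ are bounded on $r_h<r<r_c$ and that $\mathcal{A}^{\Lambda}$ is positive on the $\ell>1$ subspace (your inequality $\ell(\ell+1)/r^2-6M/r^3\ge 6(r-M)/r^3>0$ using $r_h>2M$ from (\ref{rhrc}) is exactly the verification the paper sketches). Your closing remark about the $r^*\to+\infty$ end being another simple-root horizon rather than null infinity is a reasonable observation, though the paper does not single it out since compact support on Cauchy surfaces and boundedness of the potentials already suffice for the Kay--Wald estimate.
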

\begin{proof}
As in the proof of Theorem \ref{odd-0}, and in view of the above comments, we need only prove the intermediate case 
for $r^5 \dot Q_{>1}$. This field obeys the equation
\begin{equation} \label{2+2+L}
\frac{\p^2}{\p t^2} \left( r^6 \dot Q_{>1} \right) + \mathcal{A}^{\Lambda} \left( r^6 \dot Q_{>1} \right) =0,
\end{equation}
where (see (\ref{AL})) 
\begin{equation} \label{A+L}
\mathcal{A}^{\Lambda} = -\frac{\p^2}{\p {r^*}^2} + \left( 1 - \frac{2M}{r} - \frac{2 \Lambda}{3} r^2 \right) \left( -\frac{6M}{r^3} - \frac{\hd^A \hd_A}{r^2} \right) 
=: -\frac{\p^2}{\p {r^*}^2} + V_1^{\Lambda} - V_2^{\Lambda} \hd^A \hd_A, 
\end{equation}
with $-\infty < r^* < \infty$, 
\begin{equation}
V_1^{\Lambda} = \left( 1-\frac{2M}{r}-\frac{2 \Lambda}{3}  r^2  \right) \left( -\frac{6M}{r^3}\right)\;\; \text{ and  }\;\;
V_2^{\Lambda} = \left( 1-\frac{2M}{r}-\frac{2 \Lambda}{3}  r^2  \right) \left( \frac{1}{r^2} \right).   
\end{equation}
In view of  (\ref{rhrc}) and the condition $\ell \geq 2$, $V_1^{\Lambda}$ and $V_2^{\Lambda}$ are bounded in region II ($r_h<r<r_c$) and 
$ V_1^{\Lambda} - V_2^{\Lambda} \hd^A \hd_A >0$, thus 
$\mathcal{A}^{\Lambda}$ is a positive definite self adjoint operator on $L^2(\mathbb{R} \times S^2, dr^*  \sin \theta \; d \theta \; d\phi)$, and the proof 
follows as in Theorem~\ref{odd-0}.
\end{proof}

\subsubsection{A comment on the asymptotically anti de Sitter case} \label{comment1}

The stability proofs above use the facts that the operators (\ref{A}) and (\ref{A+L}) are self adjoint and positive definite in the region of interest. 
For negative cosmological constant,  the operator acts as 
 $(\ref{A+L})$  but on functions defined for $-\infty < r^* < 0$, which is the outer static region in this case.
This operator  is only {\em formally} self adjoint (i.e., if we ignore 
  boundary terms when integrating by parts), 
 we need to specify boundary conditions at $r^*=0$ to properly define a domain where 
the operator is self adjoint. 
There are different options, and the positivity or not of the resulting operator depends on the chosen boundary 
condition \cite{bernardo}. The stability  for Dirichlet boundary conditions was established in \cite{Ishibashi:2003ap}.

\section{The Linearized Einstein equation: even sector}
\subsection{Gauge transformations and gauge invariants}
The effect of the gauge transformation (\ref{gt})-(\ref{gte}) in the even sector (\ref{pert+}) is 
\begin{equation} \label{pert+gauge}
 {h'}_{\a \b}^{(+)} = \left( \begin{array}{cc}  
h_{ab}+ \td_a \xi_b + \td_b \xi_a & \hd_B (q_a + r^2 \td_a X + \xi_a^{>0})  \\ \hd_A (q_b + r^2 \td_b X + \xi_b^{>0}) & \;\;\;r^2 \left[\tfrac{1}{2} (J+ 2 \hd^C \hd_C X 
+ \frac{4}{r} \xi^a \td_a r) \widehat g_{AB} + 
(2 \hd_A \hd_B - \widehat g_{AB}\hd^C \hd_C) (G+X^{>1}) \right] \end{array} \right).
\end{equation}
Since $X$ and $q_a$ belong to  $L^2(S^2)_{>0}$, and $G\in L^2(S^2)_{>1}$,  the $\ell=0,1$ modes 
require a separate treatment.
\subsubsection{$\ell=0$ mode}
For the $\ell=0$ mode 
\begin{equation} \label{pert+gauge+0}
 {h'}_{\a \b}^{(\ell=0,+)} = \left( \begin{array}{cc}  
h_{ab}^{(\ell=0)} + \td_a \xi_b^{(\ell=0)}  + \td_b \xi_a^{(\ell=0)}  &0\\ 0 & \widehat  g_{AB}   (J^{(\ell=0)} 
 + \frac{4}{r} \xi^a_{(\ell=0)}  \td_a r)  \tfrac{r^2}{2} \end{array} \right), 
\end{equation}
we {\em partially}  fix a gauge by requiring ${J'}^{(\ell=0)} =0$ 
together with  the transverse condition $\tilde g^{ab} {h'}_{ab}^{(\ell=0)}=0$ \cite{Sarbach:2001qq} (note that a further gauge transformation with a gauge field 
\begin{equation} \label{res0}
\xi_a^{(\ell=0)} = \tilde \e_{a b} \td^b Z(r),
\end{equation}
would preserve these two conditions.)
 Dropping the primes, the perturbation in such a traceless gauge reads 
\begin{equation} \label{pert+T}
 h_{\a \b}^{(\ell=0,+,T)} = \left( \begin{array}{cc}  
h_{ab}^{(\ell=0,T)}  &0\\ 0 & 0 \end{array} \right), \;\;\;  \tilde g^{ab} h_{ab}^{(\ell=0,T)} =0.
\end{equation}
For the  traceless symmetric orbit space  tensor $h_{a b}^{(\ell=0,T)}$ we use the identity \cite{Sarbach:2001qq}
\begin{equation}\label{ttos}
{h}_{ab}^{(\ell=0,T)} = \frac{1}{f} \left[  C_a^{(\ell=0)} \td_b r + C_b^{(\ell=0)} \td_a r - \tilde g_{ab} C_d^{(\ell=0)} \td^d r \right], \;\;\;
 C_a^{(\ell=0)} = {h}_{ab}^{(\ell=0,T)} \td^b r,
\end{equation}
in terms of which, the  residual gauge transformation along the field  (\ref{res0}) gives 
\begin{equation} \label{gt0z}
C_a^{(\ell=0)} \to {C'}^{(\ell=0)}_a = C_a^{(\ell=0)} + f Z''(r) \tilde \e_{a b} \td^b r.
\end{equation}
In conclusion,  we  assume the form (\ref{pert+T})-(\ref{ttos}) for the metric perturbation, where $C_a^{(\ell=0)}$ is equivalent to  ${C'}^{(\ell=0)}_a$ 
defined in (\ref{gt0z}).

\subsubsection{$\ell=1$ modes}
For  $\ell=1$ modes
\begin{equation} \label{pert+gauge+1}
 {h'}_{\a \b}^{(\ell=1,+)} = \left( \begin{array}{cc}  
h_{ab}^{(\ell=1)} + \td_a \xi_b^{(\ell=1)}  + \td_b \xi_a^{(\ell=1)}  & \hd_B (q_a^{(\ell=1)} + r^2 \td_a X^{(\ell=1)} + \xi_a^{(\ell=1)})  \\ 
\hd_A (q_b^{(\ell=1)} + r^2 \td_b X^{(\ell=1)} + \xi_b^{(\ell=1)})   &\;\;\;\; \widehat  g_{AB}   (J^{(\ell=1)} 
- 4 X^{(\ell=1)}  + \frac{4}{r} \xi^a_{(\ell=1)}  \td_a r ) \frac{r^2}{2} \end{array} \right).
\end{equation}
We may choose the gauge field such that $ r^2 \td_a X^{(\ell=1)} + \xi_a^{(\ell=1)} = -q_a^{(\ell=1)}$ and 
$2 \td^b \xi_b^{(\ell=1)}= -\tilde g^{ab}h_{ab}^{(\ell=1)}$. This will partially fix an orbit space
 transverse gauge, leaving a perturbation of the form
\begin{equation} \label{pert+RW}
 h_{\a \b}^{(\ell=1,+,T)} = \left( \begin{array}{cc}  
h_{ab}^{(\ell=1,T)}  &0\\ 0 &  \frac{r^2}{2}  \widehat  g_{AB}   J^{(\ell=1)}  \end{array} \right), \;\;\;  \tilde g^{ab} h_{ab}^{(\ell=1,T)} =0.
\end{equation}
where, as in the $\ell=0$ sector, 
\begin{equation}\label{tt1s}
{h}_{ab}^{(\ell=1,T)} = \frac{1}{f} \left[  C_a^{(\ell=1)} \td_b r + C_b^{(\ell=1)} \td_a r - \tilde g_{ab} C_d^{(\ell=1)} \td^d r \right], \;\;\;
 C_a^{(\ell=1)} = {h}_{ab}^{(\ell=1,T)} \td^b r,
\end{equation}
A residual gauge transformation along a   field (\ref{vec+}) satisfying 
\begin{equation} \label{rgf1}
 r^2 \td_a X^{(\ell=1)} + \xi_a^{(\ell=1)} =0 , \;\; \td^a  \xi_a^{(\ell=1)} =0,
\end{equation}
preserves the form (\ref{pert+RW})-(\ref{tt1s}).

\subsubsection{$\ell > 1$ modes}
For $\ell > 1$, equation (\ref{pert+gauge})
\begin{align} \begin{split}
{h'}_{ab}^{>1} &= h_{ab}^{>1} + \td_a \xi_b^{>1} + \td_b \xi_a^{>1}\\
{q'}_a^{>1} &= q_a^{>1} + r^2 \td_a X^{>1} +\xi_a^{>1}\\
{J'}^{>1} &= J+ 2 \hd^C \hd_C X^{>1} + \frac{4}{r} \xi_a^{>1} \td^a r\\
G' &= G + X^{>1},
\end{split} \label{gauge+}
\end{align}
has two implications (compare with the discussion following (\ref{pert-gt})): i) the fields
\begin{align} \label{gief}
 H_{ab} &= h_{ab}^{\ell>1} - \td_a p_b -\td_b p_a, \;\;\; (p_a \equiv q_a^{>1} -r^2 \td_a G)\\
\mathcal{J} &= J - \frac{4}{r} p_a \td^a r - 2 \hd^C \hd_C G,
\end{align}
are gauge invariants; and (ii) there is a (unique) gauge (the Regge-Wheeler gauge) where the $\ell>1$ piece of the even metric 
perturbation assumes the form
\begin{equation} \label{RWe}
 {h}_{\a \b}^{(+,>1)} = \left( \begin{array}{cc}  
H_{ab}  & 0 \\ 
0  &  \frac{r^2 }{2} \; \widehat  g_{AB} \,  \mathcal{J}  \end{array} \right).
\end{equation}

\subsection{The linearized Einstein equation}

\subsubsection{$\ell>1$ modes}
A calculation of the LEE (\ref{lee}) using (\ref{RWe}) gives, in agreement with  equations (31) in \cite{Chaverra:2012bh}, the 
following components for $\dot R_{\a \b} - \Lambda h_{\a \b}$:
\begin{align}  \nonumber
\dot R_{ab} - \Lambda h_{a b} &= \frac{\td^cr}{r} \left(\td_a H_{bc} +\td_b H_{ac}-\td_cH_{ab} \right) -\frac{1}{2r^2} \hd^C \hd_C H_{ab} 
+ \frac{\tilde R}{2} H_{ab}\\
& \;\;\;\; +\frac{1}{2} \tilde g_{ab} \left( \td^c \td^d H_{cd} - \td^c \td_c H - \frac{\tilde R}{2} H \right) - \frac{1}{2r^2} \td_{(a}\left( r^2 \td_{b)} 
\mathcal{J} \right)- \Lambda H_{ab},   \label{even1} \\  \label{even2}
\dot R_{aB} - \Lambda h_{a B} &= \frac{1}{2} \hd_B \left[ \td^b H_{ab}-r \td_a \left( \frac{H}{r} \right) - \frac{1}{2} \td_a \mathcal{J} \right]\\
\dot R_{AB} - \Lambda h_{A B} &= -\frac{1}{2} \left( \hd_A \hd_B  H - \tfrac{1}{2} \widehat g_{AB} \hd^C \hd_C H \right) 
 + \widehat g_{AB} \left[ \td^a (r H_{ab} \td^b r)  \nonumber \right. 
\\ & \;\;\;\; \left. -\frac{r}{2} (\td^a r) \td_a H - \frac{1}{4} \td^c \td_c (r^2 \mathcal{J}) - \frac{1}{4} 
\hd^C \hd_C(H+ \mathcal{J}) - \Lambda \frac{r^2}{2} \mathcal{J} \right], \label{even3}
\end{align}
where $\tilde R$ is the Ricci scalar of the orbit manifold and 
\begin{equation} \label{f2}
H = \tilde g^{a b} H_{a b}.
\end{equation}
From part ii) of Lemma \ref{kernel} (applied now to the symmetric tensor $\dot R_{\a \b} - \Lambda h_{\a \b}$)  we conclude that (\ref{even3}) 
implies $H=0$, then as in (\ref{ttos}) we introduce 
\begin{equation} \label{C>1}
 C_a^{>1} = {H}_{ab} \td^b r,
\end{equation}
which gives 
\begin{equation}\label{ttlg1}
{H}_{ab} = \frac{1}{f} \left[  C_a^{>1} \td_b r + C_b^{>1} \td_a r - \tilde g_{ab} C_d^{>1} \td^d r \right].
\end{equation}
and
\begin{equation}
\tilde \e^{ac} (\td_c r) \td^b H_{ab} = \tilde \e^{ac} \td_c C_a.
\end{equation}
For $\ell>0$, equation  (\ref{even2}) is equivalent to $\td^b H_{ab}- \frac{1}{2} \td_a \mathcal{J}=0$. Contracting this equation 
 with the orthogonal vectors
$\td^a r$ and $\tilde  \e^{a b} \td_b r$   and using (\ref{rab}) gives \cite{Chaverra:2012bh}:
\begin{align} \nonumber
0 &= \td^a r \left( \td^b H_{ab}- \frac{1}{2} \td_a \mathcal{J} \right) = \td^b \left( H_{ab} \td^ar \right) - H_{ab} \td^b \td^a r -  \frac{1}{2} (\td^a r) 
(\td_a \mathcal{J}) \\ &= \td^b C_b^{>1} -\frac{1}{2}  (\td^b r) (\td_b \mathcal{J}),   \label{eqc} \\ \nonumber
0 &= \tilde \e^{ac} (\td_c r) \left( \td^b H_{ab}- \frac{1}{2} \td_a \mathcal{J} \right) \\ &=
\tilde \e^{a c} \left( \td_c C_a^{>1} - \frac{1}{2} (\td_c r) \td_a \mathcal{J} \right). \label{dz1}
\end{align}
Introducing 
\begin{equation} \label{Za}
Z_a \equiv C_a^{>1} - \frac{r}{2} \td_a \mathcal{J},
\end{equation}
we write (\ref{dz1}) as
\begin{equation}  \label{41a}
\td_{[a} Z_{b]} =0
\end{equation}
and equation (\ref{eqc}) as
\begin{equation} \label{41b}
\td^b Z_b + \frac{r}{2} \td^b \td_b \mathcal{J} =0.
\end{equation} 
Now contract (\ref{even3}) with $\widehat g^{AB}$, this gives 
\begin{equation} \label{41c}
4 \td^a \left( rZ_a \right) + r^2 \td^c \td_c \mathcal{J} - \left[ \hd^C \hd_C +2 \right] \mathcal{J} =0.
\end{equation}
Finally, contracting the $\tilde g_{ab}$ trace-free part of (\ref{even1}) with $\td^a r$ and using (\ref{41c}) we arrive at
\begin{equation} \label{41d}
\td_a \left[ 2 r (\td^b r) Z_b + (3M-r) J -  \frac{r}{2} \hd^C \hd_C  J  \right] - \hd^C \hd_C Z_a =0
\end{equation}
It is interesting to note that
 equations (\ref{41a})-(\ref{41d}) look {\em formally} identical to the $\Lambda=0$ case, equations (41a)-(41d) in  \cite{Chaverra:2012bh},
 $\Lambda$ appears only implicitly through $\tilde g_{ab}$ and its Levi-Civita  derivative $\td_a$.

\subsubsection{$\ell=0$ mode}

In deriving equations (\ref{even1})-(\ref{even3}) we made no assumptions on $H_{ab}$ and $\mathcal{J}$  in (\ref{pert-RW}),  so these calculations
 apply  to the $\ell=0$ mode (\ref{pert+T}) if we replace $H_{ab}$ with $h_{ab}^{(\ell=0,T)}$ and set $\mathcal{J}=0$. 
Equation (\ref{even2}) is void in this case, whereas equation (\ref{even3})  reduces to $\td^a (rC_a^{(\ell=0)})=0$, whose solution is 
\begin{equation} \label{ca0}
C_a^{(\ell=0)} =  \frac{1}{r} \tilde \e_{a b} \td^b z, \;\; z: \mathcal{O} \to \mathbb{R}. 
\end{equation}
In Schwarschild coordinates $(t,r)$ the residual gauge freedom  (\ref{gt0z}) implies that  $z(t,r)$ is defined up to an arbitrary additive function $x(r)$ 
(choose $x'=rf Z''$ to 
match (\ref{gt0z})). 
Replacing (\ref{ca0})  in  (\ref{even1}) and (\ref{even3}) we find that,working in $(t,r)$ coordinates,   $z(t,r)=At + B(r)$. This gives 
$h_{tt}= A/r$, $h_{rr}= (2A/r)/f^2$ and $h_{tr}=B'(r)/r$. Choosing the gauge $B(r)=0$ we recognize this perturbation as 
a  shift $M \to M +A/2$ in the mass treated to first order in $A/2 = \dot M$. This was to be expected  from Birkhoff's theorem. 
In conclusion, we can choose a gauge such that 
\begin{equation} \label{md}
C_a^{(\ell=0)} = \frac{2\;  \dot M}{rf}\;  \td_a r.
\end{equation}
The perturbation class is characterized by the  parameter $\dot M$.

\subsubsection{$\ell=1$ modes}

We use again equations (\ref{even1})-(\ref{even3}) with the replacements $H_{ab} \to h_{ab}^{(\ell=1,T)}$ and $J \to J^{(\ell=1)}$, 
and find that the general solution to these equations can be set to zero using the residual gauge freedom (\ref{rgf1}) (see \cite{Sarbach:2001qq}). 
This implies that the even $\ell=1$ sector is void. 

\subsection{Solution of the linearized Einstein equation}

The results of the previous section are gathered in the following

\begin{lem}
The  LEE (\ref{lee}) in the even sector is equivalent  to the system of equations  (\ref{Za})-(\ref{41d}) and (\ref{md}).
\end{lem}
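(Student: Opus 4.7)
The plan is to treat the three angular sectors $\ell=0$, $\ell=1$ and $\ell>1$ independently, using the fact that the linear map $h_{\a\b}\mapsto\dot R_{\a\b}-\Lambda h_{\a\b}$ is $O(3)$-invariant and therefore preserves the orthogonal $\ell$-decomposition of symmetric tensor fields on $\mathcal{M}$. In each sector I would work in the gauge-fixed representative already constructed in the preceding subsections, so that the general component formulas (\ref{even1})--(\ref{even3}) can be used directly, with the sole replacements of $H_{ab}$ and $\mathcal{J}$ appropriate to that sector.

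\textbf{The $\ell>1$ sector.} Starting from the Regge--Wheeler form (\ref{RWe}), I would first observe that the $\widehat g_{AB}$-traceless piece of (\ref{even3}) is proportional to $(\hd_A\hd_B-\tfrac12\widehat g_{AB}\hd^C\hd_C)H$; by part (ii) of Lemma \ref{kernel} applied to $\dot R_{\a\b}-\Lambda h_{\a\b}$, the vanishing of this piece for $\ell>1$ forces $H=0$, so $H_{ab}$ admits the representation (\ref{ttlg1}) in terms of the single orbit covector $C_a^{>1}$ defined in (\ref{C>1}). Since the $\ell>0$ kernel of $\hd_B$ is trivial, equation (\ref{even2}) is equivalent to $\td^b H_{ab}-\tfrac12\td_a\mathcal{J}=0$; contracting with the independent orbit vectors $\td^a r$ and $\tilde\e^{ac}\td_c r$ and using (\ref{rab}) yields (\ref{eqc}) and (\ref{dz1}), which become (\ref{41b}) and (\ref{41a}) once $Z_a$ from (\ref{Za}) is introduced. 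The $\widehat g^{AB}$-trace of (\ref{even3}) then gives (\ref{41c}), while contracting the $\tilde g_{ab}$-traceless part of (\ref{even1}) with $\td^a r$ and eliminating $\td^c\td_c\mathcal{J}$ with the help of (\ref{41c}) produces (\ref{41d}).

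\textbf{The $\ell=0$ and $\ell=1$ sectors.} For $\ell=0$, I would insert the partially gauge-fixed form (\ref{pert+T})--(\ref{ttos}) with $\mathcal{J}\equiv 0$ into (\ref{even1})--(\ref{even3}). Equation (\ref{even3}) collapses to $\td^a(rC_a^{(\ell=0)})=0$, whose general solution is (\ref{ca0}) for some scalar $z$ on the orbit space, and substituting back into (\ref{even1}) forces $z(t,r)=At+B(r)$ in Schwarzschild coordinates. The residual gauge (\ref{gt0z}) kills $B(r)$, leaving a first-order mass shift, which is (\ref{md}) with $\dot M=A/2$---the linearized Birkhoff statement for cosmological backgrounds. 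For $\ell=1$, a direct substitution of (\ref{pert+RW})--(\ref{tt1s}) into (\ref{even1})--(\ref{even3}), followed by the residual gauge transformations (\ref{rgf1}), annihilates the perturbation, so the even $\ell=1$ sector is void.

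\textbf{Main obstacle.} The delicate point is sufficiency: the reduced equations (\ref{41a})--(\ref{41d}) were extracted by taking specific traces and contractions of (\ref{even1})--(\ref{even3}), and one must verify that no component of the original system has been silently discarded. The unused pieces in the $\ell>1$ sector are the $\tilde g_{ab}$-trace of (\ref{even1}) and the contraction of its trace-free part with $\tilde\e^{ab}\td_b r$. I would close this gap by appealing to the linearized contracted Bianchi identity, which in the spherically symmetric reduction supplies precisely the two extra orbit-space relations needed to recover those remaining components from (\ref{Za})--(\ref{41d}); this is what upgrades the implication into an equivalence.
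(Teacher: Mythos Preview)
Your proposal is correct and follows the same route as the paper: the lemma is stated there as a summary of the preceding subsections, and the derivation you outline for each of the $\ell=0$, $\ell=1$ and $\ell>1$ sectors reproduces those subsections step by step (use of (\ref{even1})--(\ref{even3}) in the gauge-fixed form, extraction of $H=0$ from the traceless angular part, the two orbit-space contractions giving (\ref{41a})--(\ref{41b}), the angular trace giving (\ref{41c}), and the contracted trace-free orbit equation giving (\ref{41d}); the $\ell=0$ Birkhoff argument leading to (\ref{md}); the $\ell=1$ triviality via the residual gauge (\ref{rgf1})).

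Your final paragraph on sufficiency via the linearized contracted Bianchi identity is more explicit than the paper, which simply asserts the equivalence and refers to \cite{Chaverra:2012bh} for the $\Lambda=0$ system; your observation that the $\tilde g_{ab}$-trace of (\ref{even1}) and the $\tilde\e^{ab}\td_b r$ contraction of its trace-free part are the only unused components, and that Bianchi supplies exactly these, is the right way to close the argument.
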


The system (\ref{41a})-(\ref{41d}) of LEE for the $\ell>1$ sector was first solved by Zerilli in \cite{Zerilli:1970se}, the addition of a cosmological constant 
 was considered in \cite{guven}. In this section we generalize to the case  $\Lambda \neq 0$ the derivation in \cite{Chaverra:2012bh} of the Zerilli equation.\\
 Equation  (\ref{41a}) implies that there is a scalar field $\zeta: \mathcal{M} \to \mathbb{R}$, defined up to an  additive 
function $z: S^2 \to \mathbb{R}$, 
$\zeta = \zeta_o + z$, such that 
\begin{equation} \label{Zz}
Z_a = \td_a \zeta.
\end{equation}
From (\ref{41b}) and (\ref{41c})
\begin{equation} \label{42}
\td^b Z_b + \frac{2 \td^b r}{r} Z_b - \frac{(\hd^C \hd_C+2)}{2r} \mathcal{J}=0.
\end{equation}
From (\ref{41d}) (\ref{Zz}) and the above equation
\begin{equation} \label{43p}
\td^b \td_b \zeta_o + \frac{\hd^C \hd_C}{r^2} \zeta_o - \frac{3M}{r^2} \mathcal{J}=\frac{h-\hd^C \hd_C z}{r^2},  \;\;\; h: S^2 \to \mathbb{R}.
\end{equation}
Since all fields above belong to $L^2(S^2)_{>1}$, we may choose $z$ such that the term on the right vanishes. 
From now on we assume this choice, which gives 
\begin{equation} \label{43}
\td^b \td_b \zeta + \frac{\hd^C \hd_C}{r^2} \zeta - \frac{3M}{r^2} \mathcal{J}=0.
\end{equation}
Applying $(\hd^C \hd_C+2)$ to (\ref{43})  and combining with (\ref{42}) yields 
\begin{equation} \label{zeq}
\left(\hd^C \hd_C + 2 - \frac{6M}{r}  \right)  \td^b \td_b \zeta - \frac{12M}{r^2}  \td^b r \td_b \zeta + \frac{(\hd^B \hd_B)(\hd^C \hd_C + 2)}{r^2} \zeta=0.
\end{equation}
The Zerilli field
\begin{equation} \label{preZ}
\Psi = \left(\hd^C \hd_C + 2 - \frac{6M}{r}  \right)^{-1} \zeta
\end{equation}
is introduced to eliminate first derivatives from (\ref{zeq}) and reduce it to a two dimensional wave equation:
\begin{multline} \label{zerilli}
\td^b \td_b \; \Psi + \frac{1}{r^2} \left( \hd^B \hd_B + 2 - \frac{6M}{r} \right)^{-2}  \left[  (\hd^B \hd_B +2)^2 \left(\hd^B \hd_B -\frac{6M}{r} \right) 
\right. \\ \left. +\frac{36 M^2}{r^2} \left( \hd^B \hd_B +2-\frac{2M}{r} + \frac{2}{3} \Lambda r^2 \right) \right] \Psi=0.
\end{multline}
This is the Zerilli equation, first obtained for $\Lambda=0$ in \cite{Zerilli:1970se}, and generalized to Schwarzschild-(A)dS   in  \cite{guven}. 
The non-local operator $( \hd^B \hd_B + 2 - 6M/r)^{-2}$  is well  defined on $L^2(S^2)_{>1}$;  
the Zerilli equation 
was derived within the context of a  modal approach to the problem, in which $\Psi$ is expanded 
in spherical harmonics 
\begin{equation} \label{zerilli-exp}
\Psi = \sum_{(\ell,m)} \phi^+_{(\ell,m)} S_{(\ell,m)},
\end{equation}
and this operator  reduces to 
$\left( (\ell+2)(\ell-1) - \frac{6M}{r} \right)^{-2}$ in the $\ell$ subspace,  and it  is therefore suitable to 
perform explicit calculations. If we use a tortoise radial coordinate $r^*$, we find that (\ref{zerilli}) is equivalent to  the standard
 form (\ref{ZE})-(\ref{zp}) of  the equation in the original references \cite{Zerilli:1970se}  \cite{guven}. 
From the $\zeta$ field (and therefore from the Zerilli field) 
 it is possible to reconstruct $H_{ab}$ and $\mathcal{J}$ by tracing back the above equations. 
The result is  \cite{Buchman:2007pj} \cite{Chaverra:2012bh}:
\begin{align} \label{J}
J &= 2 \left( \hd^C \hd_C + 2 - \frac{6M}{r}  \right)^{-1}  \left[ 2 \td^b r \td_b \zeta - \frac{1}{r} \hd^B \hd_B \zeta \right] \\
H_{ab} &= 2 \left( \hd^C \hd_C + 2 - \frac{6M}{r}  \right)^{-1}  \left[ \td_a \td_b (r\zeta) -\frac{\tilde g_{ab}}{2} 
 \td^c \td_c (r\zeta) \right] \label{H} 
\end{align}
The even sector LEE is equivalent to the set (\ref{zeq}), (\ref{J})-(\ref{H}).

\subsection{The ubiquitous Regge-Wheeler equation}

Let us consider the field \cite{Chaverra:2012bh} 
\begin{equation} \label{fifromz}
\Phi = -r \td^b \td_b \zeta.
\end{equation}
Applying $\td^a \td_a$ to $r^2$ times equation (\ref{43}), and  eliminating $\td^b \td_b \mathcal{J}$ 
using  (\ref{41b}), we find that $\Phi$ satisfies
\begin{equation} \label{rw3}
\td^c \td_c \Phi + \frac{2}{r}\td^c r \td_c \Phi + \frac{1}{r^2} \hd_A \hd^A \Phi + \left( \frac{6M}{r^3} +\frac{1}{r} \td^c \td_c r \right) \Phi =0,
\end{equation}
which is   the Regge-Wheeler equation (\ref{rweP})! We have found a scalar field $\Phi$, related to the even metric 
perturbation potential $\zeta$ through (\ref{fifromz}), that satisfies the fundamental equation to which the {\em odd} 
LEE reduces.\\
Replacing $\td^b \td_b \zeta$ with $-\Phi / r$ in (\ref{zeq}) gives the following   relation between $\zeta$ and 
$\Phi$:
\begin{equation} \label{50}
\left[12M  \; (\td^a r) \; \td_a  - (\hd^C \hd_C + 2) ( \hd^C \hd_C) \right] \zeta
= \left[ \frac{6M}{r}  - (\hd^C \hd_C + 2)  \right] (r \Phi).
 \end{equation}
Since $\zeta$ (or $\Psi$) contains all the information on equivalence classes of  $\ell >1$ solutions of  the even LEE, but does  not  
admit a  four dimensional translation, whereas $\Phi$ satisfies the 4DRW equation,  one is tempted to treat even perturbations  
in terms of  $\Phi$, using the relationship (\ref{50}). This possibility was disregarded in  \cite{Chaverra:2012bh}  due to 
the fact that the operator on the left hand side in (\ref{50}) has a non trivial kernel, suggesting that 
information is lost when switching from $\zeta$ to $\Phi$. There is, however, a loophole in this argument, as we now proceed 
to explain: \\
Expand $\Phi$ and $\zeta$ in spherical harmonics (we choose to call $\phi_{(\ell,m)}^{(-)}$ the components of 
$\Phi$ in (\ref{fifromz}) since this field satisfies (\ref{4DRWE}), therefore the components (\ref{4DRf}) satisfy (\ref{RWE}))
\begin{equation}\label{shd}
\Phi = \sum_{\ell \geq 2, m} \frac{\phi^{(-)}_{(\ell,m)}}{r} S_{(\ell,m)}, \;\;\;
 \zeta = \sum_{\ell \geq 2, m} \zeta_{(\ell,m)} S_{(\ell,m)} = 
- \sum_{\ell \geq 2, m} \left[[ (\ell+2)(\ell-1) + \frac{6M}{r} \right] \phi^+_{(\ell,m)} S_{(\ell,m)}
\end{equation}
 In $(t,r)$ coordinates,  equation (\ref{50}) with the replacements (\ref{shd}) reads
\begin{equation} \label{dual2}
\left[ f \frac{\p}{\p r} - w_{\ell} \right] \zeta_{(\ell,m)} =  \left[  \frac{1}{2r}  + \frac{(\ell+2)(\ell-1)}{12M}  \right] \phi^{(-)}_{(\ell,m)},
\end{equation}
where 
\begin{equation} \label{wl}
w_{\ell} = \frac{1}{12M} \frac{(\ell+2)!}{(\ell-2)!}
\end{equation}
are the frequencies of the Chandrasekhar algebraically special modes \cite{ch1} \cite{ch2}. 
 The general solution of (\ref{dual2}) can be written as 
\begin{equation} \label{gsz}
\zeta_{(\ell,m)}(t,r)  = F_{r_o}(t) \; e^{w_{\ell} r^*} - e^{w_{\ell} r^*}\;
\int_{r^*}^{r_o^*}  d{r^*}' \;  e^{-w_{\ell} {r^*}'} \left[ \phi^{(-)}_{(\ell,m)}(t,r) \;  \left(  \frac{1}{2r({r^*}')}  + \frac{(\ell+2)(\ell-1)}{12M}  \right) 
\right]_{r=r({r^*}')}
\end{equation}
where $r(r^*)$ is the inverse of the function $r^*(r)$ defined in (\ref{rs}), and $r_o$ is the radial coordinate of a point in the outer 
static region. The non-trivial kernel of the operator on the left of equation (\ref{50}) is the reason why 
there is an arbitrary function of $t$ in the first term in (\ref{z}). Note that 
\begin{equation}
 F_{r_o}(t) = \zeta_{(\ell,m)}(t,r_o)  \;\; e^{-w_{\ell} r_o^*} 
\end{equation}
If we apply $\td^b \td_b$ to $\zeta_{(\ell,m)}$ in (\ref{gsz}) and use (\ref{RWE}) we find that (compare with 
equations (\ref{fifromz}) and (\ref{shd}))
\begin{equation} \label{bxz=phi}
e^{-w_{\ell} r^*} f \left[\td^b \td_b \zeta_{(\ell,m)} + r^{-2} \phi^{(-)}_{(\ell,m)} \right] =
(w_{\ell}^2 F_{r_o} - \ddot F_{r_o} )  +  q_{r_o}(t),
\end{equation}
where
\begin{multline}
q_{r_o}(t)= 
\left[\left( \frac{w_{\ell} (\ell+2)(\ell-1)}{12M} + \frac{w_{\ell}}{2 r_o}+ \frac{1}{2 {r_o}^2}-\frac{M}{ {r_o}^3}\right)
 \phi_{(\ell,m)}^{(-)}(t,r_o)  \right. \\ \left.
+ \left( \frac{(\ell+2)(\ell-1)}{12M}- \frac{\ell^2+\ell-5}{6 r_o}-\frac{M}{{r_o}^2} \right) \p_r \phi^{(-)}_{(\ell,m)}(t,r_o) \right] 
e^{-w_{\ell}r^*_o}
\end{multline}
Let us consider equations (\ref{gsz})-(\ref{bxz=phi}) for different values of the cosmological constant:

\begin{itemize}
\item  $\Lambda=0$: in the asymptotically flat case (\ref{rs}) and (\ref{rs0}) in (\ref{gsz}) gives 
\begin{multline} \label{z}
\zeta_{(\ell,m)}(t,r)  = F_{r_o}(t) \; e^{w_{\ell} r} \left(\frac{r}{2M}-1\right)^{2Mw_{\ell}} \\ - \frac{e^{w_{\ell} r}}{2M}  \left(\frac{r}{2M}-1\right)^{2Mw_{\ell}} 
\int_r^{r_o}   r' 
e^{-w_{\ell} r'} {\phi^{(-)}_{(\ell,m)}}_{\big|_{(t,r')}} \left(  \frac{1}{2r'}  + \frac{(\ell+2)(\ell-1)}{12M}  \right)  \left(\frac{r'}{2M}-1\right)^{-2Mw_{\ell}-1} dr',
\end{multline}
for some $r_0>2M$. 

 According to Theorem \ref{odd-0},  $\Phi$  in (\ref{50}), being a solution of the $\Lambda=0$  (\ref{4DRWE}), 
 satisfies $|\Phi| < C/r$ on the exterior Kruskal wedge, then from (\ref{shd})  $\phi_{(\ell,m)}<$ constant
\footnote{The boundedness of the $\phi^+_{(\ell,m)}$  and  the  for $r>2M$ and all $t$ was first established 
in \cite{wald}.}. More generally, solutions of (\ref{RWE}) behave either as  $r^{\ell+1}$ or $r^{-\ell}$ for large $r$, and the first 
type should be discarded to preserve asymptotic flatness (alternatively, to assure the perturbative character of the initial datum).
In any case, 
the integral in (\ref{z}) converges if  we  take $r_o = \infty$  in (\ref{z}), 
\begin{multline} \label{z2}
\zeta_{(\ell,m)}(t,r)  = F_{\infty}(t) \; e^{w_{\ell} r} \left(\frac{r}{2M}-1\right)^{2Mw_{\ell}} \\ - \frac{e^{w_{\ell} r}}{2M}  
\left(\frac{r}{2M}-1\right)^{2Mw_{\ell}} 
\int_r^{\infty}   r' e^{-w_{\ell} r'} {\phi^{(-)}_{(\ell,m)}}_{\big|_{(t,r')}} \left(  \frac{1}{2r'}  + \frac{(\ell+2)(\ell-1)}{12M}  \right)  
\left(\frac{r'}{2M}-1\right)^{-2Mw_{\ell}-1} dr',
\end{multline}
and  the second term above  is   bounded for fixed $t$ as $r \to \infty$. Therefore, for  metric perturbations that do not diverge 
as $r\to \infty$, it must be  $F_{\infty}(t)=0$, otherwise the $\ell$ pieces of 
$J$ and $H_{ab}$ would behave for large $r$  as $e^{w_{\ell} r}$ times a rational function of $r$ (see (\ref{J})-(\ref{H})).
 We conclude  that the asymptotic condition  as $r\to \infty$ resolves the ambiguity in (\ref{50}) and yields a 1-1 relation 
between the $\zeta$ and $\Phi$ fields:
\begin{equation} \label{z3}
\zeta_{(\ell,m)}(t,r)  = - \frac{e^{w_{\ell} r}}{2M}  \left(\frac{r}{2M}-1\right)^{2Mw_{\ell}} 
\int_r^{\infty} r' e^{-w_{\ell} r'} {\phi^{(-)}_{(\ell,m)}}_{\big|_{(t,r')}} \left(  \frac{1}{2r'}  + \frac{(\ell+2)(\ell-1)}{12M}  \right)  \left(\frac{r'}{2M}-1\right)^{-2Mw_{\ell}-1} dr'
\end{equation}

Had we used $r_o < \infty$ in (\ref{z}), the non-trivial function $F_{r_o}(t)$ could have been obtained by requiring that (\ref{z}) be 
a solution of  (\ref{fifromz}) 
(or, equivalently, of equation (\ref{zeq})) for 
  $\phi_{(\ell,m)}$ satisfying   (\ref{RWE}). As follows from  equation (\ref{bxz=phi}),  this implies that $F_{r_o}(t)$ must be  
 a solution of the ordinary differential equation 
resulting by setting the right hand side of this equation  equal to zero. The ambiguity $A e^{w_{\ell}t}+ B e^{-w_{\ell}t}$ in the solution 
of this  equation is again fixed by adjusting $A$ and $B$ such that (\ref{gsz}) remains bounded as $r \to \infty$ (this gives $A=B=0$ 
when $r_o=\infty$, in agreement with our previous paragraph). \\

 Back to the case $r_o=\infty$, if we  we let  $F_{\infty} \neq 0$ in (\ref{z2}), allowing   
perturbations that diverge  as $r \to \infty$, but require that 
\begin{equation}
(\p_t^2 - \p_{r^*}^2) \left[F_{\infty}(t) \; e^{w_{\ell} r} \left(\frac{r}{2M}-1\right)^{2Mw_{\ell}} \right]= (\p_t^2 - \p_{r^*}^2) [F_{\infty}(t) 
\;\exp({w_{\ell}}r^{*}) ] =0 
\end{equation}
to preserve (\ref{fifromz}), we obtain $\zeta_{\infty} \equiv F_{\infty}(t) \;\exp({w_{\ell}}r^{*}) = \exp( w_{\ell} (r^* \pm t))$ and, from (\ref{preZ}),  
the following solutions to the Zerilli equation
\begin{equation} \label{ass}
\phi^{(+)}_{(\ell,m)} = \frac{r \exp( w_{\ell} (r^* \pm t))}{(\ell+2)(\ell-1)r+6M}.
\end{equation}
The relevance of (\ref{ass}) (equivalently, the solutions
 $\zeta= \exp( w_{\ell} (r^* \pm t))$ of (\ref{zeq})) comes from the fact that 
these are solutions of the Zerilli equation (respectively (\ref{zeq})) for 
  {\em any} $\Lambda$ (assuming  the appropriate $r^*(r)$ satisfying (\ref{rs}) is used) and, 
although useless when $\Lambda \geq 0$ for their behavior for large $r^*$, they  are  valid for $\Lambda <0$  
 since $r^*<0$ in this case,  and show that there are unstable solutions in the asymptotically AdS case. This is discussed in 
detail in section \ref{ins-neg-lam}.

\item $\Lambda >0$:  
in this case   $r_o \in (r_h, r_c)$ in (\ref{gsz}). 
According to Theorem \ref{odd+},  since $\Phi$  in (\ref{50}) is a solution of the $\Lambda>0$  (\ref{4DRWE}), 
the $\phi_{(\ell,m)}$ in (\ref{dual2}) are bounded for $r_h < r < r_c$ ($-\infty < r^* < \infty$). 
This implies that we can take $r_o \to r_c$ ($r_o^* \to \infty$) 
in (\ref{gsz}), and  the resulting term involving the integral 
will be bounded for fixed $t$ as $r \to r_c$. Therefore, as in the previous case, we conclude that for   metric perturbations that do not diverge 
as $r\to r_c$, the only consistent  choice  is $F_{r_c}(t)=0$. Otherwise the $\ell$ pieces of 
$J$ and $H_{ab}$ would diverge as $e^{w_{\ell} r^*}$ as $r \to r_c$. 
Once again, the asymptotic condition   resolves the ambiguity in (\ref{50}) and yields a 1-1 relation 
between the $\zeta$ and $\Phi$ fields:
\begin{equation} \label{z+}
\zeta_{(\ell,m)}(t,r)  = - e^{w_{\ell} r^*}
\int_{r^*}^{\infty}     d{r^*}' \; e^{-w_{\ell} {r^*}'} \left[ \phi^{(-)}_{(\ell,m)}(t,r) \left(  \frac{1}{2r}  + \frac{(\ell+2)(\ell-1)r}{12M}  \right)\right]_{r=r({r^*}')}
\end{equation}

\item $\Lambda <0$:  For asymptotically AdS black holes, we cannot use the argument above, 
since we have proven no analogue of Theorems \ref{odd-0} 
and \ref{odd+} in this case. 
This is connected to the fact that there are different consistent choices for the behavior of the $\phi_{(\ell,m)}$ 
as $r \to \infty$ ($r^* \to 0^-$), and they lead to different dynamics \cite{bernardo}. As an example, a boundary condition 
consistent 
with (\ref{ass}) gives a dynamics under which the asymptotically AdS Schwarzschild black hole is unstable. The possibility of replacing the Zerilli equation 
(\ref{zerilli}) for the Regge-Wheeler equation (\ref{rw3}) depends on the choice of boundary conditions at $r^*=0$. This is further investigated in 
section \ref{chd} and in \cite{bernardo}.

\end{itemize}

We summarize below the results of this section:
\begin{lem} \label{evensol}
Consider the even sector of the LEE around a Schwarzschild-(A)dS background:
\begin{itemize}
\item[(i)] 
The solution of the $\ell=0$ piece of the metric perturbation 
in a particular gauge is given by equations (\ref{pert+T}), (\ref{ttos}) and  (\ref{md}). 
\item[(ii)] The $\ell=1$ sector of the LEE is trivial.
\item[(iii)] For $\Lambda \geq 0$, the solution of the $\ell>1$ sector  is given in equations (\ref{rweP}), (\ref{shd}), (\ref{z3})/(\ref{z+}) and 
 (\ref{J})-(\ref{H}).
\item[(iv)] For $\Lambda \geq 0$ there is a bijection between the  space $\mathcal{L}_+$ of smooth odd solutions of the LEE mod gauge transformations, 
and the set $$L_+ = \{ \dot M \} \cup \{ \Phi \; | \;  \Phi \text{ is a smooth solution of equation (\ref{rweP})} \}.$$ 
\item[(v)]  For $\Lambda \geq 0$ there is a  bijection between  $\mathcal{L}_+$ and 
the set $$L_+^{\phi} = \{ \dot M \} \cup \{ \phi^+_{(\ell,m)}, m, \ell\geq 2 \; | \;  \phi^+_{(\ell,m)}  \text{ is a smooth solution of equation (\ref{ZE})} \}.$$
\end{itemize}
\end{lem}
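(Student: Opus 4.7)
The plan is to assemble the preceding three subsections into a single statement, with the only genuinely new content being the verification of the bijections in (iv) and (v). Items (i) and (ii) are immediate consequences of what has already been derived. For (i), the $\ell=0$ analysis placed the perturbation in the traceless orbit-space form (\ref{pert+T})--(\ref{ttos}) with a single free scalar $z(t,r)$; the LEE combined with Birkhoff's theorem forced $z$ to be an affine function of $t$, and the residual gauge (\ref{gt0z}) set its $B(r)$ part to zero, leaving (\ref{md}) with the single invariant parameter $\dot M$. Item (ii) follows from the $\ell=1$ analysis: after imposing (\ref{pert+RW})--(\ref{tt1s}), the residual gauge freedom (\ref{rgf1}) is enough to render the entire even $\ell=1$ sector pure gauge.

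For (iii), I would proceed constructively as a chain. Starting from a smooth $\Phi$ solving (\ref{rweP}), expand it via (\ref{shd}) to extract $\phi^{(-)}_{(\ell,m)}$ solving (\ref{RWE}); then define $\zeta_{(\ell,m)}$ by the integral formula (\ref{z3}) for $\Lambda=0$, or (\ref{z+}) for $\Lambda>0$; reassemble $\zeta$ and obtain $\mathcal{J}$ and $H_{ab}$ from (\ref{J})--(\ref{H}). That the resulting Regge--Wheeler form (\ref{RWe}) actually solves the $\ell>1$ LEE is verified by reading the derivation (\ref{41a})--(\ref{41d}) backwards, using (\ref{fifromz}) to close the loop between $\zeta$ and $\Phi$.

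The real work lies in (iv). The forward map sends a class $[h^{(+)}_{\alpha\beta}]$ to $\dot M$ (read off from its $\ell=0$ piece) together with $\Phi := -r\,\td^b\td_b\zeta$, where $\zeta$ is the Zerilli scalar extracted from the $\ell>1$ gauge-invariant content; equation (\ref{rw3}) shows this $\Phi$ satisfies (\ref{rweP}). The inverse is the construction of (iii). The non-obvious step, and what I expect to be the main obstacle, is verifying that this inverse is single-valued: the operator relating $\zeta$ to $\Phi$ in (\ref{50}) has a non-trivial kernel, so $\Phi\mapsto\zeta$ is a priori ambiguous by the addition of a solution of the homogeneous part of (\ref{dual2}). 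The resolution is to invoke Theorems \ref{odd-0} and \ref{odd+} to bound $\phi^{(-)}_{(\ell,m)}$ on the outer static region, push $r_o\to\infty$ (respectively $r_o\to r_c$) in the general solution (\ref{gsz}), and force the arbitrary function $F_{r_o}(t)$ to vanish, since otherwise $\mathcal{J}$ and $H_{ab}$ would acquire an $e^{w_\ell r^*}$ large-$r^*$ asymptote incompatible with a perturbation that respects the background asymptote. Once this fixing is in place, (v) follows by composing (iv) with the bijection supplied by (\ref{preZ}) and the expansion (\ref{shd}): since the multiplication operator $\hd^A\hd_A + 2 - 6M/r$ is invertible on $L^2(S^2)_{>1}$, $\zeta$ and $\{\phi^+_{(\ell,m)}\}_{\ell\geq 2}$ parametrize each other, and (\ref{ZE}) is equivalent to (\ref{zerilli}) mode by mode; the $\ell=0$ parameter $\dot M$ sits outside the series exactly as in (iv).
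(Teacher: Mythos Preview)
Your proposal is correct and follows essentially the same route as the paper: the lemma is stated there as a summary of the preceding subsections, with no separate proof, and the only substantive point---the resolution of the kernel ambiguity in (\ref{50}) via the asymptotic condition at $r^*\to\infty$ (respectively $r^*\to r_c^-$), using the boundedness of solutions of the 4DRW equation guaranteed by Theorems \ref{odd-0} and \ref{odd+}---is exactly what you identify as the main obstacle and handle in the way the paper does. Your derivation of (v) by composing (iv) with the invertible map (\ref{preZ}) is a slight reordering (the paper obtains the Zerilli parametrization first and then the Regge--Wheeler one), but the content is the same.
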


We gather   Lemma \ref{oddsol}.iii and Lemma \ref{evensol}.iv in   

\begin{thm} \label{bij}
For $\Lambda \geq 0$ there is a bijection between the  space $\mathcal{L}$ of smooth solutions of the LEE mod gauge transformations and 
the set 
\begin{equation} \label{param}
L = \{ \dot M,  j^{(m)}, m=1,2,3 \} \cup \{ (\Phi_-,\Phi_+) \; | \;  \Phi_{\pm} \text{ smooth solutions of equation (\ref{rweP})} \}.
\end{equation}
The dynamical perturbations are parametrized by the two solutions $\Phi_{\pm}$ of the 4DRW equation (\ref{rweP}), and correspond to 
$\ell \geq 2$ perturbations. The stationary perturbations 
are parametrized by the first order variation  of the mass  $\dot M$ ($\ell=0$) and the angular momenta components  $j^{(m)}, m=1,2,3$ ($\ell=1$), these correspond to perturbations within the Kerr/ Kerr-(A)dS family.
\end{thm}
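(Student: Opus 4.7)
The plan is to combine the two sectors by assembling the odd--even decomposition of Section 2.1 with the gauge-invariance analysis carried out in Lemmas \ref{oddsol} and \ref{evensol}. First, any smooth metric perturbation $h_{\alpha\beta}$ is uniquely written as $h_{\alpha\beta} = h_{\alpha\beta}^{(-)} + h_{\alpha\beta}^{(+)}$ via (\ref{pert})--(\ref{pert+}), subject to the uniqueness conditions of Lemma \ref{kernel}. Because any generator of a gauge transformation also admits its own odd/even split (\ref{xidecomp}), and equations (\ref{gto})--(\ref{gte}) show that $\xi^{(-)}_{\alpha}$ acts only on the odd piece of $h_{\alpha\beta}$ and $\xi^{(+)}_{\alpha}$ only on the even piece, the gauge-equivalence relation factorizes and one obtains the canonical identification
$$ \mathcal{L} \;\cong\; \mathcal{L}_- \,\times\, \mathcal{L}_+. $$

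Next I would invoke Lemma \ref{oddsol}(iii) to identify $\mathcal{L}_- \leftrightarrow L_-$, sending each odd class to the triple $(j^{(1)},j^{(2)},j^{(3)})$ read off from (\ref{sol-1b}) together with the smooth 4DRW potential $\Phi_-$ supplied by (\ref{Ha})--(\ref{rweP}). In parallel, Lemma \ref{evensol}(iv) gives a bijection $\mathcal{L}_+ \leftrightarrow L_+$, sending each even class to the mass shift $\dot M$ (from (\ref{md})) and to a smooth solution $\Phi_+$ of the 4DRW equation constructed out of the Zerilli scalar $\zeta$ via (\ref{fifromz}); the $\ell=1$ even sector contributes nothing by Lemma \ref{evensol}(ii). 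Taking the product of these two bijections yields $\mathcal{L} \leftrightarrow L_- \times L_+$, which upon relabeling the two potentials as $(\Phi_-,\Phi_+)$ reproduces precisely the parameter set $L$ in (\ref{param}), with the stationary/dynamical split matching the $\ell=0$, $\ell=1$, $\ell\geq 2$ decomposition stated in the theorem.

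The main obstacle --- really the only non-routine point --- is the reconstruction direction in the even sector, since (\ref{50}) has a nontrivial kernel and so $\zeta$ is a priori not determined by $\Phi_+$. The resolution, already worked out in the lead-up to Lemma \ref{evensol}, is to fix the free function $F_{r_o}(t)$ appearing in (\ref{gsz}) by imposing compatibility with the asymptotic region: the nonmodal boundedness of $\Phi$ on the exterior wedge (guaranteed by Theorem \ref{odd-0} for $\Lambda=0$ and by Theorem \ref{odd+} for $\Lambda>0$) forces $F_{r_o}\to 0$ as $r_o\to \infty$ or $r_o\to r_c^-$, yielding the explicit inversions (\ref{z3}) and (\ref{z+}) respectively. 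This is precisely where the hypothesis $\Lambda \geq 0$ enters: for $\Lambda<0$ the asymptotic argument fails, since SAdS is not globally hyperbolic and one must instead prescribe boundary conditions at $r^*=0$, as anticipated in Section \ref{comment1}.
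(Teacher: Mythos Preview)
Your proposal is correct and follows essentially the same approach as the paper, which simply states the theorem as the combination of Lemma~\ref{oddsol}(iii) and Lemma~\ref{evensol}(iv). Your account is in fact more explicit than the paper's: you spell out the factorization $\mathcal{L}\cong\mathcal{L}_-\times\mathcal{L}_+$ via the odd/even splitting of gauge vector fields, and you correctly recall that the kernel ambiguity in (\ref{50}) is already resolved in the derivation of Lemma~\ref{evensol}(iv) rather than being a new obstacle for Theorem~\ref{bij} itself.
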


\subsection{Chandrasekhar's duality} \label{chd}

The modal approach to the linear perturbation problem is based on analyzing  the evolution of isolated $(\ell,m)$ modes using the Zerilli and Regge-Wheeler 
equations (\ref{ZE}) and (\ref{RWE}) respectively. In $(t,r)$ coordinates, these equations are separable, there are solutions of the form 
\begin{equation}
\phi^+_{(\ell,m)} = \Re \; e^{i \w t} \psi^+_{(\ell,m)}(r), \;\;\; \phi^-_{(\ell,m)} = \Re \; e^{i \w t} \psi_{(\ell,m)}^-(r),
\end{equation}
where $\psi^{\pm}_{(\ell,m)}$ satisfies a  Schr\"odinger-like equation (note that $\mathcal{H}^{-}_{\ell}$ agrees 
with the operator  
$\mathcal{A}^{\Lambda}$ introduced in (\ref{AL}))
\begin{equation} \label{sle}
\mathcal{H}^{\pm}_{\ell} \psi^{\pm}_{(\ell,m)}  :=  [-\p_{r^*}^2  + U^{\pm}_{\ell}] \psi^{\pm}_{(\ell,m)} = \w^2 \psi^{\pm}_{(\ell,m)},
\end{equation}
with potentials (see (\ref{rwp}) (\ref{zp}))
\begin{equation}
 U^+_{\ell} = f V^Z_{\ell}, \;\;\; U^-_{\ell} = f V^{RW}_{\ell}.
\end{equation}

It was noticed by Chandrasekhar \cite{ch1} \cite{ch2} (see also Appendix A in \cite{Berti:2009kk}) 
 that  the $\mathcal{H}^{\pm}_{\ell} = -\p_{r^*}^2  + U^{\pm}_{\ell}$ 
satisfy
\begin{equation} \label{factor}
\mathcal{H}^{\pm}_{\ell} + {w_{\ell}}^2 =  \mathcal{D}^{\pm} \mathcal{D}^{\mp}, 
\end{equation}
where, generalizing Chandrasekhar's equations  to $\Lambda\neq 0$, 

\begin{align} \label{dpm}
\mathcal{D}^{\pm}_{\ell} &= \pm \frac{\p}{\p r^*} + W_{\ell}, \\ \label{dpm2}
W_{\ell} &= w_{\ell} + \frac{6M f}{r \; (\mu r + 6M)},
\end{align}
and $w_{\ell}$ are  the frequencies (\ref{wl}). 
A consequence of the factorization (\ref{factor}) is that
\begin{align} \label{iso1}
\mathcal{H}^{-}_{\ell} \psi^{-}_{(\ell,m)}  = \w^2 \psi^{-}_{(\ell,m)} 
&\Rightarrow \mathcal{H}^{+}_{\ell} (\mathcal{D}_{\ell}^+ \psi^{-}_{(\ell,m)})
  = \w^2  (\mathcal{D}^+_{\ell} \psi^{-}_{(\ell,m)})\\ \label{iso2}
\mathcal{H}^{+}_{\ell} \psi^{+}_{(\ell,m)}  = \w^2 \psi^{+}_{(\ell,m)} 
&\Rightarrow \mathcal{H}^{-}_{\ell} (\mathcal{D}^-_{\ell}  \psi^{+}_{(\ell,m)})  = \w^2  (\mathcal{D}^-_{\ell}
 \psi^{+}_{(\ell,m)})
\end{align}
Similarly, solutions of the (\ref{RWE}) and (\ref{ZE}) are exchanged by $\mathcal{D}^{\pm}_{\ell}$, e.g,
\begin{align} \label{dmtp}
(\p_t^2 - \p_{r^*}^2 + f V_{(\ell,m)}^{RW} ) \phi_{(\ell,m)}^- =0 \;\; &\Rightarrow \;\; 
(\p_t^2 - \p_{r^*}^2 + f V_{(\ell,m)}^{Z} )  (\mathcal{D}_{\ell}^+ \phi_{(\ell,m)}^-) =0,\\
(\p_t^2 - \p_{r^*}^2 + f V_{(\ell,m)}^{Z} ) \phi_{(\ell,m)}^+ =0 \;\; &\Rightarrow \;\; 
(\p_t^2 - \p_{r^*}^2 + f V_{(\ell,m)}^{RW} )  (\mathcal{D}_{\ell}^- \phi_{(\ell,m)}^+) =0. \label{dptm}
\end{align}
Chandrasekhar noticed the factorization (\ref{factor}) by casting the RW and Zerilli potentials in Riccati form, and finding 
the unexpected symmetry 
$U_{\ell}^{\pm} = \pm \p_{r^*} W_{\ell} + W_{\ell}{}^2 -{w_{\ell}}^2$. We can  trace origin of this symmetry  to the 
previous to last equation in Section V of \cite{Chaverra:2012bh} , which is equivalent to our equation (\ref{50}) which, 
combined with (\ref{preZ}) gives the $\mathcal{D}_{\ell}^-$ operator in (\ref{dpm})

\subsubsection{Case $\Lambda \geq 0$}

For $\Lambda \geq 0$, we have $-\infty < r^* < \infty$, $U_{\ell}^{\pm} \to 0$ as $|r^*| \to \infty$, and we 
consider $\mathcal{H}^{\pm}_{\ell} $ as an operator in $L^2(\mathbb{R}_{r^*},dr^*)$, where it is self-adjoint 
and positive. Since the general solution of the differential equation $ \mathcal{D}_{\ell}^- \chi=0$ is a constant times 
\begin{equation} \label{chi+}
\chi^+_{\ell} = \frac{r \exp( w_{\ell} \, r^* )}{(\ell+2)(\ell-1)r+6M},
\end{equation}
and the  general solution of the differential equation $ \mathcal{D}^+_{\ell} \chi=0$ is a constant times 
\begin{equation} \label{chim}
\chi^-_{\ell} = \frac{(\ell+2)(\ell-1)r+6M}{r}  \exp(- w_{\ell} \, r^* ) = \frac{1}{\chi^+_{\ell}},
\end{equation}
both $\mathcal{D}^{\pm}$ have  trivial kernel in $L^2(\mathbb{R}_{r^*},dr^*)$. 
Since the evolution of initial data $(\phi_{(\ell,m)}^{\pm},\p_t \phi_{(\ell,m)}^{\pm})|_{t_o}$ in  $L^2(\mathbb{R}_{r^*},dr^*)$ 
gives $\phi_{(\ell,m)}^{\pm}|_{t} \in L^2(\mathbb{R}_{r^*},dr^*)$ for all $t$, we may consider, in view of (\ref{dmtp}) 
(\ref{dptm}), replacing solutions of (\ref{ZE}) with  $\mathcal{D}^-_{\ell}$ times solutions of (\ref{RWE}). 
This was already shown to be possible for $\Lambda \geq 0$ in the previous section, equations  (\ref{z3}) and (\ref{z+}). 
The duality involving the  (\ref{ZE}) and 
 (\ref{RWE}) is now reconsidered from the perspective offered by the factorization (\ref{factor}). 
\begin{lem} \label{lema} 
Assume $\Lambda \geq 0$. For  any solution $\phi_{(\ell,m)}^+$ of  (\ref{ZE}) in $L^2(\mathbb{R}_{r^*},dr^*)$ there 
is a unique solution $\phi_{(\ell,m)}^-$ of  (\ref{RWE}) in $L^2(\mathbb{R}_{r^*},dr^*)$ such that 
\begin{equation} \label{zfrw}
\phi_{(\ell,m)}^+ = \mathcal{D}^+_{\ell} \phi_{(\ell,m)}^-.
\end{equation}
The same statement holds switching $+$ and $ -$ and (RWE) and (ZE).
\end{lem}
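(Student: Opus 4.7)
The plan is to use the factorization \eqref{factor} together with the self-adjointness and positivity of $\mathcal{H}^\pm_\ell$ on $L^2(\mathbb{R}_{r^*},dr^*)$ (noted just above the lemma) to invert $\mathcal{D}^+_\ell$ in a controlled way. Since $\mathcal{H}^+_\ell\geq 0$ and $w_\ell^2>0$, the operator $\mathcal{H}^+_\ell+w_\ell^2$ has spectrum bounded below by $w_\ell^2$ and therefore a bounded inverse from $L^2(\mathbb{R}_{r^*},dr^*)$ onto its domain. Given a solution $\phi^+_{(\ell,m)}$ of \eqref{ZE} in $L^2$, I would set
\begin{equation}
\psi := (\mathcal{H}^+_\ell+w_\ell^2)^{-1}\phi^+_{(\ell,m)}, \qquad \phi^-_{(\ell,m)} := \mathcal{D}^-_\ell\,\psi.
\end{equation}
By construction and the factorization $\mathcal{D}^+_\ell\mathcal{D}^-_\ell=\mathcal{H}^+_\ell+w_\ell^2$, one has $\mathcal{D}^+_\ell\phi^-_{(\ell,m)}=(\mathcal{H}^+_\ell+w_\ell^2)\psi=\phi^+_{(\ell,m)}$, and $\phi^-_{(\ell,m)}\in L^2$ because $\mathcal{D}^-_\ell$ is a first-order operator with smooth, bounded coefficients on $\mathbb{R}_{r^*}$ for $\Lambda\geq 0$.

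The next step is to verify that $\phi^-_{(\ell,m)}$ so defined satisfies \eqref{RWE}. Multiplying the $+$ factorization on the left by $\mathcal{D}^-_\ell$ and using the $-$ factorization yields the intertwining identity $\mathcal{D}^-_\ell\mathcal{H}^+_\ell=\mathcal{H}^-_\ell\mathcal{D}^-_\ell$. Since $(\mathcal{H}^+_\ell+w_\ell^2)^{-1}$ commutes with $\mathcal{H}^+_\ell$ and with $\partial_t^2$, applying it to the Zerilli equation shows that $\psi$ itself solves \eqref{ZE}; applying $\mathcal{D}^-_\ell$ to $\partial_t^2\psi+\mathcal{H}^+_\ell\psi=0$ and invoking the intertwining then gives $\partial_t^2\phi^-_{(\ell,m)}+\mathcal{H}^-_\ell\phi^-_{(\ell,m)}=0$, which is precisely \eqref{RWE}.

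Uniqueness in $L^2$ follows at once: any two $L^2$ solutions of $\mathcal{D}^+_\ell\phi^-=\phi^+_{(\ell,m)}$ differ by an element of $\ker\mathcal{D}^+_\ell$, which as a first-order ODE is one-dimensional and spanned by $\chi^-_\ell$ in \eqref{chim}. Because $\chi^-_\ell$ grows like $e^{-w_\ell r^*}$ as $r^*\to-\infty$ and is bounded away from zero as $r^*\to+\infty$, it is not in $L^2(\mathbb{R}_{r^*},dr^*)$, so only the zero multiple is admissible. The reverse direction is strictly analogous, using $\mathcal{D}^-_\ell\mathcal{D}^+_\ell=\mathcal{H}^-_\ell+w_\ell^2$, the companion intertwining $\mathcal{D}^+_\ell\mathcal{H}^-_\ell=\mathcal{H}^+_\ell\mathcal{D}^+_\ell$, and the fact that $\chi^+_\ell$ in \eqref{chi+} is not in $L^2$ either.

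The delicate point I expect is making the operator calculus rigorous in the functional setting of the time-dependent perturbations: one must interpret \eqref{ZE} as the unitary evolution generated by the self-adjoint $\sqrt{\mathcal{H}^+_\ell}$ on $L^2(\mathbb{R}_{r^*},dr^*)$, so that $\psi(\cdot,t)$ is defined via the functional calculus and inherits the time regularity of $\phi^+_{(\ell,m)}$, while the commutation of $\partial_t^2$ with $(\mathcal{H}^+_\ell+w_\ell^2)^{-1}$ is automatic. Once this framework is in place, the algebraic identities \eqref{factor} and the intertwining transfer cleanly from the stationary spectral problem to the wave equations \eqref{RWE} and \eqref{ZE}, and the lemma follows.
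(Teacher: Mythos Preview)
Your argument is correct and is essentially the paper's proof, recast in spectral-calculus language: the paper defines $\widetilde\phi^+_{(\ell,m)}$ by dividing the time-Fourier transform of $\phi^+_{(\ell,m)}$ by $\omega^2+w_\ell^2$ (invoking Price decay to justify the Fourier representation), then sets $\phi^-_{(\ell,m)}=\mathcal{D}^-_\ell\widetilde\phi^+_{(\ell,m)}$ and checks $\mathcal{D}^+_\ell\phi^-_{(\ell,m)}=(\mathcal{H}^+_\ell+w_\ell^2)\widetilde\phi^+_{(\ell,m)}=(-\partial_t^2+w_\ell^2)\widetilde\phi^+_{(\ell,m)}=\phi^+_{(\ell,m)}$. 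Since $-\partial_t^2=\mathcal{H}^+_\ell$ on solutions of (\ref{ZE}), your $\psi=(\mathcal{H}^+_\ell+w_\ell^2)^{-1}\phi^+_{(\ell,m)}$ is exactly the paper's $\widetilde\phi^+_{(\ell,m)}$, and the rest is identical. Your version has the mild advantage of not calling on Price's decay, relying instead on the bounded resolvent of the positive self-adjoint $\mathcal{H}^+_\ell$; one small looseness is the claim that $\phi^-_{(\ell,m)}\in L^2$ ``because $\mathcal{D}^-_\ell$ is first order with bounded coefficients''---that alone does not suffice, but it follows since $\psi\in\mathrm{dom}(\mathcal{H}^+_\ell)\subset H^2(\mathbb{R}_{r^*})$, whence $\mathcal{D}^-_\ell\psi\in H^1\subset L^2$.
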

\begin{proof}
We will only prove the first statement, as the  proof of the second is completely analogous. Uniqueness follows from 
$\mathcal{D}^-_{\ell}$ being injective. To prove existence, we use the  fact, discovered 
by Price \cite{Price:1971fb}, 
that for $\Lambda=0$, $\phi_{(\ell,m)}^+$  decays  as $t^{-(2\ell+2)}$ at large $t$ (the decay is exponential in $t$ if $\Lambda>0$ \cite{Brady:1996za}) 
The time reversal symmetry of (\ref{ZE}) indicates that 
this also happens for large negative $t$. This implies that 
   $\phi_{(\ell,m)}^+ $  admits a Fourier representation
\begin{equation} \label{fourier}
\phi_{(\ell,m)}^+ = \int_{-\infty}^{\infty}  \widehat \phi_{(\ell,m)}^+(\w,r^*) \; e^{i \w t} \; d\w. 
\end{equation}
Note that 
\begin{equation}
\widetilde \phi_{(\ell,m)}^+ = \int_{-\infty}^{\infty}  \frac{\widehat \phi_{(\ell,m)}^+}{\w^2+w_{\ell}^2} \; e^{i \w t} \; d\w. 
\end{equation}
is also a solution of (\ref{ZE}), therefore $\mathcal{D}^-_{\ell} \widetilde \phi_{(\ell,m)}^+$ is a solution 
of (\ref{RWE}). It is easy to show that this   the solution  sent to $ \phi_{(\ell,m)}^+$ by $\mathcal{D}^+_{\ell}$:
\begin{equation}
 \phi_{(\ell,m)}^+ =  \left(-\p_t^2 +  w_{\ell}^2 \right) \widetilde \phi_{(\ell,m)}^+  = \left( \mathcal{H}^{+}_{\ell} + w_{\ell}^2 \right)
 \widetilde \phi_{(\ell,m)}^+ = \mathcal{D}^+_{\ell} \left( \mathcal{D}^-_{\ell} \widetilde \phi_{(\ell,m)}^+ \right)
\end{equation}
\end{proof}

\subsubsection{Case $\Lambda <0$:  instability of SAdS} \label{ins-neg-lam}

For SAdS, $-\infty<r^*<0$ (equation (\ref{srnl})) and, from (\ref{chi+}),  
 $\chi^+_{\ell} \in L^2(\mathbb{R}^-_{r^*}, dr^*)$ -- the set of square integrable functions on the half line $r^*<0$ -- . 
Since $\mathcal{D}^-_{\ell}\chi^+_{\ell} =0$, we conclude that the operator $\mathcal{D}^-_{\ell}$
 fails to be injective in  $L^2(\mathbb{R}^-_{r^*}, dr^*)$. Also,  
the potential or exponential decay of $\phi_{(\ell,m)}^+$ for large  $|t|$  at fixed position, required  in 
the proof of Lemma \ref{lema}  (equation (\ref{fourier})) allowing to  replace 
solutions of (\ref{ZE}) with solutions of (\ref{RWE}) fails in this case.\\
In fact, using $\mathcal{D}^-_{\ell}\chi^+_{\ell} =0$ together with (\ref{sle}) we get  the 
solution (\ref{ass}) of (\ref{ZE}) found by Chandrasekhar in \cite{ch1}. The choice of a plus sign in (\ref{ass}) 
gives 
\begin{equation} \label{um}
\phi^{+ \;unst}_{(\ell,m), \Lambda<0} = \chi^+_{\ell}(r) \exp(w_{\ell} t) =  \frac{r \exp( w_{\ell} \, (r^*+ t) )}{(\ell+2)(\ell-1)r+6M}, 
\end{equation}
which grows exponentially in time while remaining  in $L^2(\mathbb{R}^-_{r^*}, dr^*)$ for every $t$, that is, is an unstable mode.
 In the  RW gauge and using the coordinates 
$(v,r,\theta,\phi)$  in (\ref{sads}), the metric perturbation from (\ref{um}) 
is given by a particularly simple expression (note that this is is well behaved across the horizon):
\begin{equation} \label{unsh}
h^{+ \;unst}_{(\ell,m), \Lambda<0} =  \exp({w_{\ell} v}) \; S_{(\ell,m)} \left[\frac{w_{\ell}}{6M} \left( r \ell(\ell+1)-6M \right) \; dv \otimes dv 
+ \frac{\ell (\ell+1)}{6M} r^2 \left(d\theta \otimes d\theta + \sin^2(\theta) \; d\phi \otimes d\phi \right)\right].
\end{equation}
As shown in \cite{Araneda:2015gsa}, generic 
perturbations of a Schwarzschild black hole give a type I spacetime, whereas 
the solution (\ref{um})-(\ref{unsh}) splits one of the repeated principal null directions while preserving the degeneracy 
of the other one, leaving a type II spacetime. \\

As explained in Section \ref{hyp}, different dynamics are possible in SAdS depending on the boundary conditions imposed to 
the fields at the timelike boundary. Consider the Zerilli equation in the orbit space SAdS$/SO(3)$
\begin{equation}
(\p_t^2 + \mathcal{H}^+_{\ell})  \phi_{(\ell,m)}^+  = 0.
\end{equation}
The operator $\mathcal{H}^+_{\ell}$ with domain the  compactly supported  functions 
on the half line $C^{\infty}_o(\mathbb{R}^{-}_{r^*})$ 
is symmetric, and it admits  different self adjoint extensions within $L^2(\mathbb{R}^-_{r^*}, dr^*)$. Heuristically, 
self-adjoint extensions of operators like $\mathcal{H}^+_{\ell}$, which have a potential $U^+_{\ell}$ that 
is regular at the boundary, are found by demanding that when integrating by parts  the boundary terms 
do not spoil the transposition of the operator. As the following calculation shows, 
\begin{align} \begin{split}
(\psi_2, \mathcal{H}^+_{\ell} \psi_1) &= \int_{-\infty}^0 \psi_2 (-\p_{r^*}^2 + U^+_{\ell}) \psi_1 dr^* \\
&= \left[ (\p_{r^*} \psi_2)\;  \psi_1 - \psi_2 \; (\p_{r^*} \psi_1) \right]^{r^*=0}_{r^*=-\infty} + \int_{-\infty}^0  \psi_1 (-\p_{r^*}^2 + U^+_{\ell}) \psi_2 dr^*\\
&=  \left[ (\p_{r^*} \psi_2)  \; \psi_1 - \psi_2 \; (\p_{r^*} \psi_1) \right]^{r^*=0}_{r^*=-\infty} + (\mathcal{H}^+_{\ell} \psi_2,\psi_1),
\end{split}  
\end{align}
this will be the case if we restrict $L^2(\mathbb{R}^-_{r^*}, dr^*)$ to the subspace $L^2_{\a}(\mathbb{R}^-_{r^*}, dr^*)$  of functions satisfying the boundary condition (see the discussion around equation (168) in \cite{Ishibashi:2004wx}, which 
applies to our case)
\begin{equation}\label{bcs}
\p_{r^*} \psi \big|_{r^*=0} = \tan(\alpha)\;  \psi \big|_{r^*=0}, 
\end{equation}
$\a \in [-\pi,\pi]$,  where $\alpha= \pm \pi$ (to be identified) is understood as the Dirichlet boundary condition $\psi \big|_{r^*=0}=0$, 
$\alpha=0$  corresponds  to the Neumann boundary condition ($\p_{r^*}\psi \big|_{r^*=0}=0$) 
and  the remaining cases to Robin boundary conditions. 
We will call ${}^{\a}\mathcal{H}^+_{\ell}$ the self adjoint extension of $\mathcal{H}^+_{\ell}$ to the domain $L^2_{\a}(\mathbb{R}^-_{r^*}, dr^*)$. 
Once a self-adjoint extension $\alpha$ is chosen, the dynamics is given by the curve in $L^2_{\a}(\mathbb{R}^-_{r^*}, dr^*)$
obtained   by solving the equation \cite{Wald:1980jn}
\begin{equation} \label{hse}
\p_t^2 \phi + {}^{\a}\mathcal{H}^+_{\ell} \phi =0, \;\;\; \phi \in L^2_{\a}(\mathbb{R}^-_{r^*}, dr^*),
\end{equation}
with initial conditions
\begin{equation} \label{ic}
\phi \big|_{t_o} = p, \;\;\; \p_t\phi  \big|_{t_o} =q, \;\;\; p, q \in L^2_{\a}(\mathbb{R}^-_{r^*}, dr^*), 
\end{equation}
and this is done  by using the resolution of the identity for ${}^{\a}\mathcal{H}^{+}_{\ell}$, i.e., by expanding 
in generalized eigenfunctions of this operator. 
If $E$ is a negative eigenvalue of ${}^{\a}\mathcal{H}^{+}_{\ell}$, it belongs to the discrete part of 
the spectrum and, 
if $p_E$ and $q_E$ 
are the projections of $p$ and $q$ onto the $E-$eigenspace, the projection $\phi_E$ of $\phi$ in (\ref{hse}) will 
be
\begin{equation}
\phi_E(t) = 
q_E \cosh(\sqrt{-E \; }t) + p_E\;  (-E)^{-1/2} \sinh(\sqrt{-E} \;  t). 
\end{equation}
Thus, there is an instability if the spectrum of ${}^{\a}\mathcal{H}^{+}_{\ell}$ contains negative eigenvalues, 
and whether this happens or not may depend on $\a$,  so in general the issue of stability depends on what self 
adjoint extension ${}^{\a}\mathcal{H}^{+}_{\ell}$ (equivalently, what boundary condition at the timelike boundary) 
we choose to define the dynamics.  As an example, in \cite{Andrade:2015gja} Robin boundary conditions are 
enforced at a finite radius $r_D$ on the $\phi^{\pm}_{(\ell,m)}$, as given in equations (2.32) and (2.42). 
This condition assures that there is a gauge for which the induced perturbed metric at the timelike hypersurface  
$r=r_D$ vanishes. One can check from the expressions in \cite{Andrade:2015gja}, however, that 
in the $r_D \to \infty$ limit (2.32) and (2.42) reduce to a the Dirichlet condition 
$\phi^{\pm}_{(\ell,m)} / \p_{r^*} \phi^{\pm}_{(\ell,m)} =0$. \\
For the unstable mode $\chi^+_{\ell}$ we find that it belongs to ${}^{\a}\mathcal{H}^{+}_{\ell}$ with 
\begin{equation}
\tan(\alpha) = \frac{\p_{r^*} \chi^+_{\ell}}{\chi^+_{\ell}} \bigg|_{r^*=0} = w_{\ell} - \frac{2M\Lambda}{(\ell-1)(\ell+2)}.
\end{equation}
SAdS is unstable under dynamics that allow this boundary condition for some $\ell$. A systematic study of  stability 
as a function of $\a$ is being carried out \cite{bernardo}.\\

Note that the even/odd symmetry valid for $\Lambda \geq 0$ (equations (\ref{dmtp})-(\ref{dptm})) and the related isospectrality 
of $\mathcal{H}^{+}_{\ell}$ and $\mathcal{H}^{-}_{\ell}$  (equations (\ref{iso1})-(\ref{iso2})) are in general broken in SAdS 
for generic boundary conditions \cite{Cardoso:2001bb} \cite{Bakas:2008gz}. \\

Note also the similarity between  SAdS  and 
 the negative mass Schwarzschild solution, for which  we may choose $r^*=0$ at $r=0$ and 
\begin{equation}
r^* = r + 2M \ln \left( 1 - \frac{r}{2M} \right), 
\end{equation}
will grow monotonically with  
 $0 < r^* < \infty$. The similarity with  SAdS is that $r^*$ is also restricted to a half-line. This  allows 
to consider the Chandrasekhar mode (the choice of a minus sign in (\ref{ass})) 
\begin{equation} \label{umnm}
\phi^{+ \;unst}_{(\ell,m), M<0} = \chi^+_{\ell}(r) \exp(-w_{\ell} t) =  \frac{r \exp( w_{\ell} \, (r^*- t) )}{(\ell+2)(\ell-1)r+6M},
\end{equation}
as a possible solution of (\ref{ZE}), as it 
belongs for every $t$ to the relevant space $L^2(\mathbb{R}_{r^*}^+,dr^*)$ 
(note that $w_{\ell}<0$ if $M<0$). This solution  grows exponentially in time, so it signals an instability. 
There is, however,  a key difference 
 between the timelike boundary at $r=0$ of the negative mass Schwarzschild solution and the conformal timelike boundary at 
$r=\infty$ of  SAdS, as  for the former there is a unique 
boundary condition that makes the linear perturbation scheme self-consistent, by no worsening the degree 
of the pole of the unperturbed curvature scalars \cite{Gleiser:2006yz} 
\cite{Dotti:2008ta}. The solution  (\ref{umnm}) of (\ref{ZE}) satisfies precisely this boundary condition.
 This was used  in \cite{Gleiser:2006yz} 
\cite{Dotti:2008ta} to 
prove the instability of the Schwarzschild naked singularity.\\

\subsection{Measurable effects of the perturbation on the geometry}

In this section we show that there is a gauge invariant 
combination $G_+$ of  first order variation  of the CSs (\ref{curvaturescalars}) 
 that 
contains all the gauge invariant information about the metric perturbation class 
$[h_{\a \b}^{(+)}]$. In particular, 
$h_{\a \b}^{(+)}$ in the RW gauge can be obtained from $G_+$. 
Following \cite{Dotti:2013uxa}, we introduce the differential curvature scalars (\ref{curvaturescalars}).
The background value of $X$ in the S(A)dS geometry is 
\begin{equation}
X_{S(A)dS} = \frac{M^2}{r^9} (r-2M) - \frac{\Lambda M^2}{3 r^6}, 
\end{equation}
and the background value of $Q_+$ in (\ref{curvaturescalars}) is 
\begin{equation}
{Q_{+}}_{S(A)dS} = \frac{M^2}{r^6}.
\end{equation}
It follows that the combination 
\begin{equation} \label{Gp}
G_+ = (9M-4r+\Lambda r^3) \dot Q_+ + 3 r^3 \dot X
\end{equation}
is gauge invariant since, under a gauge transformation along $\zeta^{\a}$,
\begin{align}\begin{split}
G_+ \to &G_+ +  (9M-4r+\Lambda r^3)\; \pounds_{\zeta}  {Q_+}_{S(A)dS}   + 3 r^3 \; \pounds_{\zeta}  X_{S(A)dS}  \\
= &  G_+ +  (9M-4r+\Lambda r^3) \;  \zeta^r \p_r  {Q_+}_{S(A)dS}   + 3 r^3  \;  \zeta^r \p_r  X_{S(A)dS}  \\
=& G_+ \end{split}
\end{align}
A lengthy calculation with the help of symbolic manipulation programs gives $G_+$ for an arbitrary perturbation 
class 
$[h_{\a \b}^{(+)}]$ using the parametrization $L_+^{\phi}$ in Lemma \ref{evensol}.v and Schwarzschild coordinates $(t,r)$:
\begin{equation} \label{G+}
G_+ = - \frac{2M \dot M}{r^5} + \frac{M}{2r^4} \sum_{\ell \geq 2} \frac{(\ell+2)!}{(\ell-2)!} \left[ f \p_r + Z_{\ell} \right] \phi^+_{(\ell,m)} S_{(\ell,m)},
\end{equation}
where 
\begin{equation} 
Z_{\ell}  = \frac{2M \Lambda r^3+\mu r (r-3M)-6M^2}{r^2 (\mu r + 6M)}, \;\; \mu = (\ell-1)(\ell+2).
\end{equation}
This generalizes the result  equation (33) in \cite{Dotti:2013uxa} to the case $\Lambda \neq 0$.
\begin{thm}
Let $[h_{\a \b}^{(+)}] \in \mathcal{ L}_+$ and  $G_+\left([h_{\a \b}^{(+)}]\right)$ be the  field (\ref{G+}) 
for the perturbation class $[h_{\a \b}^{(+)}]$. 
 The map $[h_{\a \b}^{(+)}] \to G_+\left([h_{\a \b}^{(+)}]\right)$ is invertible: it is possible to construct 
a  representative of $[h_{\a \b}^{(+)}]$ from $G_+\left([h_{\a \b}^{(+)}]\right)$. 
\end{thm}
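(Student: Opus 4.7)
The plan is to invoke Lemma~\ref{evensol}.v, which identifies $\mathcal{L}_+$ with the datum $L_+^\phi = \{\dot M\} \cup \{\phi^+_{(\ell,m)}\}_{\ell\geq 2,m}$, and to reduce the theorem to showing that $G_+$ determines both $\dot M$ and the full collection of Zerilli potentials $\phi^+_{(\ell,m)}$. Once these are recovered, the Regge-Wheeler representative of $[h_{\a\b}^{(+)}]$ is assembled from the $\ell=0$ formula (\ref{md}) and, for $\ell\geq 2$, by building $\zeta$ from $\Psi$ via (\ref{preZ}) and then $(H_{ab},\mathcal{J})$ via (\ref{J})--(\ref{H}).

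First I would project equation (\ref{G+}) onto scalar spherical harmonics. Only the first term contributes to the $\ell=0$ mode, so $\dot M$ is read off immediately from the spherically symmetric part: $\dot M = -(r^5/2M)\langle G_+, S_{(0,0)}\rangle_{S^2}/S_{(0,0)}$, a quantity which is automatically $(t,r)$-independent whenever $G_+$ arises from an LEE solution. For each $\ell\geq 2$ and each $m$, projection onto $S_{(\ell,m)}$ gives the first-order linear ODE in $r$ at fixed $t$,
\begin{equation}
\bigl[f\p_r + Z_\ell\bigr]\phi^+_{(\ell,m)}(t,r) = \frac{2r^4 (\ell-2)!}{M (\ell+2)!}\, G_+^{(\ell,m)}(t,r),
\end{equation}
whose general solution is a particular solution (written via the integrating factor $\exp(\int Z_\ell/f\,dr)$) plus an arbitrary integration function $C(t)$ times the homogeneous solution $\phi_h(r) = \exp(-\int Z_\ell/f\,dr)$.

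The hard part is fixing $C(t)$. An asymptotic analysis of $\phi_h$ shows that, for $\Lambda=0$, $Z_\ell/f \sim 1/r$ as $r\to\infty$, so $\phi_h\sim 1/r$; for $\Lambda>0$, $Z_\ell/f$ has a simple pole at $r=r_c$ with residue $Z_\ell(r_c)/f'(r_c)$, producing a power-law behavior $\phi_h \sim (r_c-r)^{-Z_\ell(r_c)/f'(r_c)}$. In either case this asymptotic form is strictly incompatible with the falloff/regularity required of admissible $\phi^+_{(\ell,m)}$: Zerilli solutions that arise from perturbations preserving the background asymptote behave as $r^{-\ell}$ at infinity (decaying strictly faster than $1/r$ since $\ell\geq 2$) and remain bounded at the cosmological horizon. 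Requiring the correct asymptote at the outer boundary of region II therefore forces $C(t)\equiv 0$, and $\phi^+_{(\ell,m)}$ is uniquely determined by $G_+^{(\ell,m)}$. This also explains a posteriori why the construction succeeds precisely for $\Lambda\geq 0$: for $\Lambda<0$ the Chandrasekhar algebraically special solutions (\ref{ass}) exhibit exactly the kind of behavior that the asymptotic argument rules out above, and no global elimination of $C(t)$ is possible, mirroring the discussion of Lemma~\ref{lema} and Section~\ref{ins-neg-lam}.

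With $(\dot M, \{\phi^+_{(\ell,m)}\})$ now in hand, Lemma~\ref{evensol}.v delivers the gauge equivalence class $[h_{\a\b}^{(+)}]$: the $\ell=0$ representative is fixed by (\ref{pert+T}), (\ref{ttos}) and (\ref{md}); the $\ell=1$ sector is trivial by Lemma~\ref{evensol}.ii; and for $\ell\geq 2$ one constructs $\Psi$ via (\ref{zerilli-exp}), then $\zeta$ by (\ref{preZ}), and finally the Regge-Wheeler components $(H_{ab},\mathcal{J})$ by (\ref{J})--(\ref{H}), assembled into the full perturbation through (\ref{RWe}). The only nontrivial input beyond bookkeeping is the asymptotic elimination of $C(t)$ described above, which is where the cosmological constant's sign enters essentially.
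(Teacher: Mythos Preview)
Your overall architecture is right, and the reconstruction step at the end is fine. The gap is in the ``hard part'': the asymptotic elimination of $C(t)$ does not work as stated.

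The homogeneous solution of $[f\partial_r + Z_\ell]\phi_h = 0$ is computed explicitly in the paper (equation~(\ref{gs})):
\[
\phi_h(r) \;\propto\; \frac{\sqrt{f(r)}}{\,6M+\mu r\,}\,,\qquad \mu=(\ell-1)(\ell+2).
\]
For $\Lambda=0$ this gives $\phi_h\sim 1/(\mu r)$ as $r\to\infty$ and $\phi_h\sim (r-2M)^{1/2}$ as $r\to 2M$. For $\Lambda>0$ it gives $\phi_h\sim (r-r_h)^{1/2}$ and $\phi_h\sim (r_c-r)^{1/2}$ at the two horizons. In every case $\phi_h$ \emph{vanishes} at the boundaries of region~II, is bounded, and lies in $L^2(\mathbb{R}_{r^*},dr^*)$. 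So the exponent $-Z_\ell(r_c)/f'(r_c)$ you compute is $+\tfrac12$, not negative, and your claim that $\phi_h$ is ``strictly incompatible'' with admissible boundary behaviour is false. (For $\Lambda=0$ the $1/r$ tail is also not obviously excluded: the $r^{-\ell}$ behaviour you quote is for separated radial modes, not for the time-dependent solution at fixed $t$.)

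What actually kills $C(t)$ is a different constraint you have not used: the difference $\chi = C(t)\phi_h(r)$ of two admissible Zerilli potentials must itself solve the Zerilli equation (\ref{ZE}). Substituting the product form into (\ref{ZE}) gives
\[
\ddot C(t)\;=\;\left(\frac{(\mu+1)\Lambda}{3}-\frac{\mu}{r^2}+\frac{2M(\mu-2)}{r^3}+\frac{9M^2}{r^4}\right) C(t),
\]
an equation in which the right-hand coefficient is genuinely $r$-dependent for $\ell\geq 2$, $M>0$. The only way this can hold for all $r$ is $C(t)\equiv 0$. This is the paper's argument, and it works uniformly for all $\Lambda$ without any appeal to asymptotics. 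Your proof can be repaired simply by replacing the asymptotic paragraph with this Zerilli-equation consistency check.
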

\begin{proof}
We will prove that the linear map $[h_{\a \b}^{(+)}] \to G_+\left([h_{\a \b}^{(+)}]\right)$ has trivial kernel. 
Assume that $(\dot M_{(1)}, \{ {\phi^+_{(\ell,m)}}^{(1)}\})$ and $(\dot M_{(2)}, \{ {\phi^+_{(\ell,m)}}^{(2)}\})$ give 
the same  $G_+$, then expanding $G_+$ in spherical harmonics we find that $\dot M_{(1)}=\dot M_{(2)}$ and 
also 
\begin{equation} \label{psi1m2}
\left[ f \p_r + Z_{\ell} \right] \chi_{(\ell,m)} =0, \;\;\;\; ( \chi_{(\ell,m)} =  {\phi^+_{(\ell,m)}}^{(1)} -  {\phi^+_{(\ell,m)}}^{(2)}).
\end{equation} 
The general solution of equation (\ref{psi1m2}) is  
\begin{equation} \label{gs}
\chi_{(\ell,m)}(t,r) =  \frac{ F^{(\ell,m)}(t)}{(6M + \mu r) }\; \sqrt{1-\tfrac{2M}{r}-\tfrac{\Lambda r^3}{3}}.
\end{equation}
Since $\chi_{(\ell,m)}(t,r) =  {\phi^+_{(\ell,m)}}^{(1)} -  {\phi^+_{(\ell,m)}}^{(2)}$, it must satisfy Zerilli's equation.
Inserting   (\ref{gs}) in (\ref{ZE}) gives 
\begin{equation}
\frac{d^2F^{(\ell,m)}}{dt^2}  - \left( \frac{(\mu+1) \Lambda}{3}
- \frac{\mu}{r^2}+ \frac{2M(\mu -2)}{r^3}+ \frac{9M^2}{r^4} \right) F^{(\ell,m)} =0, \;\;\;\; (\mu = (L-1)(L+2)),
\end{equation}
which only admits the trivial solution $F(t)=0$.\\

To construct a representative of $[h_{\a \b}^+]$ we need $\dot M$ and the $\phi^+_{(\ell,m)}$, together with equations (\ref{pert+T}), (\ref{ttos}), 
(\ref{md}), (\ref{pert-RW}), (\ref{preZ}), (\ref{zerilli-exp}), (\ref{J}) and (\ref{H}).  Expanding $G_+$ in spherical harmonics
\begin{equation}
G_+ = \sum_{(\ell,m)} G_+^{(\ell,m)} S_{(\ell,m)}
\end{equation}
we find that $\dot M = -r^5 G_+^{(\ell=0)}/(2M)$ and $\phi^+_{(\ell,m)}$ is the only solution of
\begin{equation}
 \left[ f \p_r + Z_{\ell} \right] \phi^+_{(\ell,m)}  = \frac{2r^4}{M} \frac{(\ell-2)!}{(\ell+2)!} G_+^{(\ell,m)}
\end{equation}
that satisfies 
(\ref{ZE}).

\end{proof}

\subsection{Non-modal linear stability of the $\Lambda \geq 0$ black holes}

In this section we establish the pointwise boundedness of $G_+$, equation (\ref{G+}),  on region II of a Schwarschild  or SdS 
black hole. For $\Lambda<0$ and certain boundary conditions at the timelike boundary, this fails 
to be true. As an example, for boundary conditions allowing the unstable mode (\ref{um}), $G_+$ 
contains a contribution proportional to 
\begin{equation}
G_+\left( [ h^{+ \;unst}_{(\ell,m)} ] \right) = 
G^{+ \;unst}_{(\ell,m)} =  \frac{M}{2r^4}  \frac{(\ell+2)!}{(\ell-2)!} \left( \frac{\ell(\ell+1)}{12M} - \frac{1}{2r} \right) 
\exp({w_{\ell} v}) \; S_{(\ell,m)}.
\end{equation}
where $h^{+ \;unst}_{(\ell,m)}$ is given in (\ref{unsh}). \\

Back to the $\Lambda \geq 0$ case, we know 
from Lemma \ref{lema} that for a  given a  solution $\phi^+_{(\ell,m)}$ of the (\ref{ZE}), there exists a solution $\phi^-_{(\ell,m)}$ 
of the (\ref{RWE}) such that 
  $\phi^+_{(\ell,m)}=\mathcal{D}^+ \phi^-_{(\ell,m)}$. Using this replacement  in (\ref{G+}) gives 
\begin{multline} \label{replacement}
  \left[ f \p_r + Z_{\ell} \right] \phi^+_{(\ell,m)}  =  \left[ f \p_r + Z_{\ell} \right] \left[ f \p_r + W_{\ell} \right] \phi^-_{(\ell,m)}\\
= \p_{r^*}^2 \phi^-_{(\ell,m)}   + (\p_{r^*} W_{\ell} + Z_{\ell} W_{\ell} )  \phi^-_{(\ell,m)}  + ( W_{\ell} + Z_{\ell} ) \p_{r^*} \phi^-_{(\ell,m)}\\
= \p_{t}^2 \phi^-_{(\ell,m)}   + (f V^{RW}_{\ell} + \p_{r^*} W_{\ell} + Z_{\ell} W_{\ell} )  \phi^-_{(\ell,m)}  + ( W_{\ell} + Z_{\ell} ) \p_{r^*} \phi^-_{(\ell,m)}.
\end{multline}
Since
\begin{equation} \label{cancel}
f V^{RW}_{\ell} + \p_{r^*} W_{\ell} + Z_{\ell} W_{\ell}  = -\tfrac{1}{6} \Lambda \ell (\ell+1) + \frac{\Lambda M + w_{\ell}}{r} 
+ \frac{\ell(\ell+1)-6M w_{\ell} }{2r^2}- \frac{(\ell (\ell+1)+3)M}{r^3} + \frac{6M^2}{r^4} =: \sum_{j=0}^4 P_j(\ell) r^{-j}
\end{equation}
and 
\begin{equation}
W_{\ell} + Z_{\ell} = w_{\ell} + \frac{r-3M}{r^2},
\end{equation}
we can re write (\ref{G+}) as 
\begin{equation} \label{G++}
G_+ = - \frac{2M \dot M}{r^5} + \frac{M}{2r^3} \left[ \p_t^2 \Phi_5 + \sum_{j=0}^4 r^{-j} \Phi_j + \frac{f (r-3M)}{r^3} \Phi_5 
+ \frac{f}{r} \Phi_6 \right] + \frac{M}{2 r^3} \p_{r^*} \Phi_6 + \frac{M(r-3M)}{2 r^5} \p_{r^*} \Phi_5,
\end{equation}
where 
\begin{align}
\Phi_j &= \sum_{(\ell\geq2, m)} \frac{(\ell+2)!}{(\ell-2)!} P_j(\ell) \frac{\phi_{(\ell,m)}}{r} S_{(\ell,m)}, \;\; j=0,1,2,3,4\\
\Phi_5 &=  \sum_{(\ell\geq2, m)} \frac{(\ell+2)!}{(\ell-2)!}\frac{\phi_{(\ell,m)}}{r} S_{(\ell,m)},\\
\Phi_6 &=  \sum_{(\ell\geq2, m)} \frac{(\ell+2)!}{(\ell-2)!} w_{\ell}\frac{\phi_{(\ell,m)}}{r} S_{(\ell,m)},
\end{align}
are all solutions of (\ref{4DRWE}), and therefore so is  $\p_t^2 \Phi_5$. Note that the expression (\ref{G++}) for $G_+$,  
written entirely in terms of solutions of  (\ref{4DRWE}) is 
possible thanks to the cancellation  in (\ref{cancel}) of the $((\ell+2)(\ell-1)r+6M)$ denominators in $Z_{\ell}$ and 
$W_{\ell}$.

\begin{thm} \label{even}
 \mbox{}
\begin{itemize}
\item[(i)]  For any smooth solution of the even LEE which has  compact support on Cauchy surfaces of the Kruskal extension
$ \rm{I \cup II \cup I' \cup II'}$ of 
the Schwarzschild space-time, there exists a constant $K_+$ such that $|G_+ | < K_+\; r^{-4}$ for $r>2M$.
\item[(ii)] For any mooth solution of the even LEE which has  compact support on Cauchy surfaces of region $ \rm{I \cup II \cup I' \cup II'}$ of the extended SdS 
black hole, there exists a constant $K_+$ such 
that $|G_+ | < K_+ \; r^{-4}$ (equivalently $|G_+ | <$ some constant) for $r_h<r<r_c$.
\end{itemize}
\end{thm}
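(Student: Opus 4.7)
The plan is to exploit the representation (\ref{G++}), which was derived via Chandrasekhar duality (Lemma \ref{lema}) and expresses $G_+$ as a linear combination of the mass shift $\dot M$ together with the fields $\Phi_0,\ldots,\Phi_6$, the second time derivative $\partial_t^2 \Phi_5$, and the first spatial derivatives $\partial_{r^*}\Phi_5$, $\partial_{r^*}\Phi_6$. The crucial reduction, already accomplished in the lines preceding the statement, is the cancellation (\ref{cancel}) of the $(\mu r + 6M)$ denominators of $Z_\ell$ and $W_\ell$; this is what allows $G_+$ to be written entirely in terms of fields that solve the 4DRW equation (\ref{rweP}). Indeed, the multiplicative $\ell$-dependent weights $(\ell+2)!/(\ell-2)!$ and $P_j(\ell)$ that enter the $\Phi_j$ correspond to angular differential operators that commute with the 4DRW wave operator, so each $\Phi_j$ is itself a solution of (\ref{rweP}) whose initial data, being built from the original (smooth, compactly supported) data by applying angular operators, is again smooth and compactly supported on Cauchy surfaces of the extended spacetime.

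With this reduction in place, I would invoke Theorems \ref{odd-0} and \ref{odd+}. Their proofs (via the Kay--Wald argument adapted to (\ref{rweqd})) yield a pointwise bound $|\Phi| \leq C/r$ on the outer static region for any smooth compactly supported solution of (\ref{rweP}), with $C$ determined by the initial datum. Applied to each $\Phi_j$ this produces constants $C_j$ with $|\Phi_j| \leq C_j/r$. The quantity $\partial_t^2 \Phi_5$ is handled for free: since $\partial_t$ is a Killing vector commuting with the d'Alembertian and with the $r$-dependent potential of (\ref{rweP}), $\partial_t^2 \Phi_5$ is itself a solution of the 4DRW equation, with smooth compactly supported data inherited from $\Phi_5$, and therefore obeys the same $|\partial_t^2 \Phi_5| \leq C/r$ bound.

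The main obstacle is bounding the first spatial derivatives $\partial_{r^*}\Phi_5$ and $\partial_{r^*}\Phi_6$, because $\partial_{r^*}$ does not commute with the 4DRW operator and $\partial_{r^*}\Phi$ is not itself a solution of (\ref{rweP}). The strategy here is to upgrade the Kay--Wald estimate to a uniform $H^1$-type bound: the conserved energy associated with the timelike Killing vector $\partial_t$, built from (\ref{A}) on the $\ell>1$ subspace where $\mathcal{A}$ is positive definite, controls $\|\partial_{r^*}\Phi\|_{L^2(\mathbb{R}\times S^2)}$ globally in time; commuting with $\partial_t$ and with the three rotational Killing vector fields produces uniform bounds on arbitrary mixed derivatives of the form $\|\partial_t^a \mathbf{J}^b \partial_{r^*}\Phi\|_{L^2}$; and a Sobolev embedding in the angular variables on $S^2$, together with the Kay--Wald-type argument applied to these higher-order quantities (which remain solutions of (\ref{rweP}) up to lower-order contributions from the commutator with the $r$-dependent potential), converts this into the pointwise bound $|\partial_{r^*}\Phi_j| \leq C_j'/r$ in the Schwarzschild case, and $|\partial_{r^*}\Phi_j| \leq C_j'$ in the SdS case where $r$ ranges over the compact interval $[r_h,r_c]$.

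Finally, I would combine these pointwise bounds with the explicit $r$-dependent prefactors in (\ref{G++}). In case (i), the prefactor $M/(2r^3)$ times $|\Phi_j| \leq C_j/r$ produces contributions of order $r^{-4}$; the factors $r^{-j}$ in front of $\Phi_j$ only improve this for $j\geq 1$; the terms involving $f(r-3M)/r^3$ and $f/r$ are bounded by $C/r^4$ since $0 < f < 1$ in region II; the derivative terms contribute $|\partial_{r^*}\Phi_j|/r^3 \leq C'/r^4$; and the mass-shift term $-2M\dot M/r^5$ is trivially $O(r^{-5}) \subset O(r^{-4})$. Summing gives $|G_+| < K_+/r^4$ for $r>2M$. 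In case (ii), the SdS region II is compact in $r$, so every $r$-dependent factor is bounded above and below by positive constants, the pointwise bound on each $\Phi_j$ and its derivatives from Theorem \ref{odd+} applies uniformly, and the estimate $|G_+| < K_+/r^4$ (equivalently, $|G_+|$ bounded by a constant) follows at once.
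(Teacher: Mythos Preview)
There is a genuine gap in your argument, and it occurs at the point where you assert that the fields $\Phi_j$ in (\ref{G++}) have smooth compactly supported initial data on Cauchy surfaces of the extended spacetime. This is false, and the paper's proof is largely devoted to repairing precisely this failure.

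The issue is that the $\Phi_j$ are built not from the original Zerilli fields $\phi^+_{(\ell,m)}$ (which do have compact support), but from the Regge--Wheeler fields $\phi^-_{(\ell,m)}$ produced by Lemma~\ref{lema} via $\phi^+_{(\ell,m)} = \mathcal{D}^+_\ell \phi^-_{(\ell,m)}$. Inverting $\mathcal{D}^+_\ell$ is a nonlocal operation (the proof of Lemma~\ref{lema} uses a Fourier multiplier $(\omega^2+w_\ell^2)^{-1}$), and the kernel of $\mathcal{D}^+_\ell$ is spanned by $\chi^-_\ell$ in (\ref{chim}). Consequently, in the open set $\mathcal{N}$ near $r^*=\infty$ where $\phi^+_{(\ell,m)}$ vanishes, the field $\phi^-_{(\ell,m)}$ does \emph{not} vanish: it has the exponential tail displayed in equation~(\ref{2}) of the paper. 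The angular operators you invoke to pass from the $\phi^-_{(\ell,m)}$ to the $\Phi_j$ are indeed local in $r^*$ and commute with the 4DRW operator, but they cannot restore compact support that has already been lost.

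Because of this, Theorems~\ref{odd-0} and~\ref{odd+} cannot be invoked directly for the $\Phi_j$, nor for $\partial_t^2\Phi_5$. The paper handles this by splitting each $\Phi_j$ as $\Phi_{(a)}+\Phi_{(b)}+\Phi_{(c)}$, where $\Phi_{(a)}$ is compactly supported and the pieces $\Phi_{(b)}$, $\Phi_{(c)}$ carry the exponential tails near the two asymptotic ends. For $\Phi_{(a)}$ the Kay--Wald argument applies as you describe; for $\Phi_{(b)}$ one must re-run the Sobolev/energy argument of Kay--Wald and verify by hand, using the explicit tail form~(\ref{2}) and the decay of the RW zero modes~(\ref{23}), that the relevant norms and the energy of the time integral~(\ref{22}) are finite. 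The treatment of $\partial_{r^*}\Phi_5$ and $\partial_{r^*}\Phi_6$ then proceeds along the lines you sketch, but again the tail pieces require separate verification that the boundary terms in the integrations by parts vanish and that the energy integrals converge. Your overall strategy is right; what is missing is the recognition that compact support is lost under Chandrasekhar duality, together with the work needed to accommodate the resulting tails.
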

\begin{proof}
We treat simultaneously  the $\Lambda=0$ and $\Lambda>0$ cases. 
For even solutions of the LEE with   compact support on Cauchy surfaces of the Kruskal extension,  there is an open 
neighbourhood $\mathcal{N}$
of  $r^*=\infty$  where $\phi_{(\ell,m)}^+=0$. This set is of the form  
$\mathcal{N}= \{ (t,r^*,\theta,\phi) \; | \; r^* > R^*(t) \}$ where, for  large $t$, 
$R^*(t) = t+$ constant, and there is a similar 
open neighbourhood $\mathcal{N}'$ near the $\mathcal{I}^+$ boundary 
 of 
the isometric region II$'$. \\
In (\ref{replacement}), we replaced the $\phi_{(\ell,m)}^+$ fields with $\phi_{(\ell,m)}^-$ fields according to    (\ref{zfrw}). 
Using    $\phi_{(\ell,m)}^+\large|_{\mathcal{N}}=0$, equations (\ref{dpm})-(\ref{dpm2}), 
 (\ref{factor}) and  (\ref{chim}),
 and the fact that $\phi_{(\ell,m)}^-$ 
 satisfies (\ref{RWE}), we find that,  in $\mathcal{N}$,
\begin{equation} \label{2}
\phi_{(\ell,m)}^- = \frac{(\ell+2)(\ell-1)r+6M}{r}  \exp(- w_{\ell} \, r^* ) 
 [A_{(\ell,m)} \exp(- w_{\ell} \, t) + B_{(\ell,m)} \exp(w_{\ell} \,t)].
\end{equation}

According to Theorems \ref{odd-0} and \ref{odd+}, solutions $\Phi$ of the 4DRWE  with compact support on Cauchy slices of the  Kruskal extension
 satisfy $|\Phi|<C/r$ in region II. The  
 4DRW fields $\Phi_j, j=0,...,6$ in equation (\ref{G++}) do not have compact support on Cauchy slices, 
 their  spherical harmonic 
components have the exponential tails 
 (\ref{2}). In what follows we prove that the results in  Theorems \ref{odd-0} and \ref{odd+} hold also in this case. This 
will allow  us to place a pointwise bound to the term between square brackets in  $G_+$, equation  (\ref{G++}). \\
Let $\Phi$ be any of the $\Phi_j, j=0,...,6$,  and write $\Phi$ as a sum of three solutions of the 4DRWE, $\Phi=\Phi_{(a)}+\Phi_{(b)}+\Phi_{(c)}$, 
 where $\Phi_{(a)}$ has compact support on Cauchy slices of the  Kruskal extension and $\Phi_{(a)}=0$ 
for $r^* > R^*(t)+\epsilon$ for some positive $\epsilon$,  
 $\Phi_{(b)}$ is supported in $r^* \in (R^*(t)-\epsilon, \infty)$  and $\Phi_{(c)}$ 
is similarly supported near the $\mathcal{I}^+$ boundary of region II$'$ (this is done by writing the $\Phi$ datum on a $t$ slice as a sum of three appropriate terms 
and letting them evolve). In region II, $\Phi= \Phi_{(a)}+\Phi_{(b)}$ and $\Phi_{(a)} < C_{(a)}/r$ since 
its satisfies the hypothesis in Theorem \ref{odd-0} (\ref{odd+}). On the other hand, $\Phi_{(b)}$ decays  as in (\ref{2}). Since this field 
  is bounded away the 
bifurcation sphere, we can   adapt the proof of boundedness in the Appendix 
in \cite{Kay:1987ax}. This proof follows two steps: i) the use of the Sobolev-type inequality
\begin{equation} \label{sobo}
\big| r \Phi|_t \big| \leq K \left( || \Phi  || +|| r^{-1} \p_{r^*}^2 (r\Phi) ||
+  || \mathbf{J}^2 \Phi || \right), 
\end{equation}
where the norm 
\begin{equation} \label{norm}
||\Phi || = \langle \Phi | \Phi \rangle^{1/2}
\end{equation}

is given by 
the inner product 
\begin{equation}
\langle \Phi | \Psi \rangle =  \int \Phi \Psi \; r^2 \; dr^* \sin(\theta) \; d \theta d \phi,
\end{equation}
and ii) the replacement of  each of the terms on the right hand side of (\ref{sobo}) with $t-$independent 
quantities, which gives $|\Phi|<$constant$/r$.  Since equation (\ref{sobo}) holds for finite norm fields,  we only need to 
 prove that step ii) is feasible for  $\Phi_{(b)}$. We will prove this avoiding   the use of the 
operators  $\mathcal{A}^{-1/2}$ and $ (\mathcal{A}^{\Lambda})^{-1/2}$) introduced in \cite{Kay:1987ax}, 
applying  instead  the equivalent time integral technique  introduced in sections 3.4 and 3.5 of \cite{Dain:2012qw} 
for the  similar problem of proving  the pointwise boundedness of $\Phi_{KG}$ on the 
static region of an extreme Reissner-Nordstr\"om black hole, $\Phi_{KG}$ a solution of the Klein-Gordon equation
 (note that $s$ is used in \cite{Dain:2012qw} for the tortoise 
coordinate $r^*$).   The idea is the following: 
for solutions $\Phi \in L^2(S^2)_{>1}$ of the 4DRWE 
with finite norm 
 there is a   conserved (i.e., time independent) energy given by
\begin{equation} \label{energy}
 \mathcal{E}[\Phi] =  \langle \p_t \Phi | \p_t \Phi   \rangle + 
\langle  \Phi | r^{-1} \mathcal{A}^{\Lambda} (r \Phi)   \rangle.
\end{equation}
This is easily checked  using the form  (\ref{2+2}) of the 4DRWE. Note that both terms contributing to $ \mathcal{E}[\Phi]$ are positive
for a finite norm $\Phi \in L^2(S^2)_{>1}$,  since $V_1-V_2 \hd^A \hd_A>0$ as a consequence of  (\ref{rhrc}) and (\ref{V12a}) and 
$\p_{r^*}^2>0$. 
Now assume there is a finite energy time integral $\tilde \Phi$ of $\Phi$, that is,  $\tilde \Phi$ satisfies the 4DRWE and 
$\p_t \tilde \Phi = \Phi$. Then $ \mathbf{J}^2 \tilde \Phi$ also satisfies the 4DRWE and
\begin{align} \label{i1}
|| \Phi ||^2 &\leq  \mathcal{E}[\tilde \Phi] \\ \label{i2}
|| r^{-1} \p_{r^*}^2 (r\Phi) ||^2 &\leq  \mathcal{E}[\p_t \Phi] \\
|| \mathbf{J}^2 \Phi ||^2 &\leq  \mathcal{E}[\mathbf{J}^2 \tilde \Phi]. \label{i3}
\end{align}
 (compare with the set of equations 
above (A2) in \cite{Kay:1987ax}, and with equation (76) in \cite{Dain:2012qw}). 
This allows us to replace the right hand side of equation (\ref{sobo}) with a time independent constant, and 
get the desired bound $\Phi < C/r$. 
Note that (\ref{i1}) and (\ref{i3}) follow straightforwardly from the definition (\ref{energy}) 
and the comments below it, whereas (\ref{i2}) follows from $\mathcal{E}[\p_t \Phi] \geq ||\p_t^2 \Phi ||^2 = ||  r^{-1} \mathcal{A}^{\Lambda} (r \Phi) ||^2
> || r^{-1} \p_{r^*}^2 (r\Phi) ||^2 $. 
The existence of  time integrals $\tilde \Phi_{(a)}$ of $\Phi_{(a)}$ and 
 $\tilde \Phi_{(b)}$ of $\Phi_{(b)}$ 
can be proved following the steps in section 3.4 of \cite{Dain:2012qw}, since these fields  belong to  the Hilbert space 
$\hat{\mathcal{H}}$ introduced in Lemma 2 of this reference (with the appropriate replacements of $V_1$ and $V_2$). 
To prove  that $\tilde \Phi_{(a)}$ has finite energy we proceed as in  Lemma 3 in  \cite{Dain:2012qw}, 
since $\Phi_{(a)}$ supported away of $r^*=\infty$. 
To prove that 
$\tilde \Phi_{(b)}$ has finite energy,  we use 
   (\ref{2}) to show that, in $\mathcal{N}$, 
\begin{equation} \label{22}
{ \tilde \phi_{(b)}}{}_{(\ell,m)} = \frac{(\ell+2)(\ell-1)r+6M}{w_{\ell} \; r}  \exp(- w_{\ell} \, r^* ) \;
 [-A_{(\ell,m)} \exp(- w_{\ell} \, t) + B_{(\ell,m)} \exp(w_{\ell} \,t)] + \psi^{-,o}_{(\ell,m)}(r) 
\end{equation}
where $\psi^{-,o}_{(\ell,m)}(r)$ is a zero mode ($\omega=0$) solution of equation (\ref{sle}), that is 
$\mathcal{A} \psi^{-,o}_{(\ell,m)}=0$ 
($\mathcal{A}^{\Lambda} \psi^{-,o}_{(\ell,m)}=0$). The  asymptotic form for large $r^*$ for a zero mode is
\begin{equation} \label{23}
\psi^{-,o}_{(\ell,m)} \sim \begin{cases} A_{(\ell,m)} [{r^*}^{-\ell} + ...] +
 B_{(\ell,m)} [ {r^*}^{\ell+1}+...] & ,\Lambda=0 \\
A_{(\ell,m)} [e^{-\alpha r^*} + ...] + B_{(\ell,m)} [ 1+...] & ,\Lambda >0, 
\end{cases}
\end{equation}
where $\alpha >0$ is defined  by equation  (\ref{rs+L}) together with  $e^{-\alpha r^*} = (r-r_c)/r_c$. 
For the unique  time integral in $\hat{\mathcal{H}}$ of Lemma 2 in \cite{Dain:2012qw} it must be  $B_{(\ell,m)}=0$ in (\ref{23}). 
This implies  that the energy integrals (\ref{i1}) and (\ref{i3}) converge for $\Phi=\tilde \Phi_{(b)}$.
 We conclude that $\Phi_{(b)} < C_{(b)}/r$ and that the sum of terms within square brackets in (\ref{G++}) 
is bounded by $C'/r$ for some positive $C'$.  \\

To deal with the last two terms in (\ref{G++}) we need to 
prove the pointwise boundedness of $\p_{r^*}\Phi$ in region II for $\Phi$ a solution of the 4DRW, 
$\Phi= \Phi_{(a)}+\Phi_{(b)}+\Phi_{(c)}$ as above. We do so  by 
adapting  the proof in section 3.6 in \cite{Dain:2012qw}  of the pointwise boundedness of $\p_{r^*} (r\Phi_{KG})$.  
For $\Phi=\Phi_{(a)}$ or $\Phi=\Phi_{(b)}$ 
start from the Sobolev inequality (\ref{sobo}) with $\Phi$ replaced with $r^{-1} \p_{r^*}(r \Phi)$:
\begin{equation} \label{sobotrol}
\big| \p_{r^*}(r \Phi)|_t \big| \leq K \left( || r^{-1}  \p_{r^*}(r \Phi) || +
|| r^{-1} \p_{r^*}^3 (r\Phi) || 
+  || \mathbf{J}^2  r^{-1}\p_{r^*}(r \Phi)  || \right) 
\end{equation}
The square of the  first term on the right hand side is bounded by the time independent energy of $\Phi$:
\begin{align} \nonumber
 \mathcal{E}[\Phi] &=  \langle \p_t \Phi | \p_t \Phi   \rangle + 
\langle  \Phi | r^{-1} \mathcal{A}^{\Lambda} (r \Phi)   \rangle \\  \nonumber
&\geq \langle  \Phi | r^{-1} \mathcal{A}^{\Lambda} (r \Phi)   \rangle \\  \nonumber
&\geq \langle  \Phi | -r^{-1} \p_{r^*}^2 (r \Phi)   \rangle \\ 
&= || r^{-1}\p_{r^*}(r \Phi)||^2. 
\end{align}
The second inequality above follows from  $V_1-V_2 \hd^A \hd_A$ being positive definite on 
fields in $L^2(S^2)_{>1}$,  the integration 
by 
parts in the last line is trivial  for the compactly supported $\Phi_{(a)}$ and,  in view of   (\ref{22})-(\ref{23}),  holds  for 
$\Phi_{(b)}$. The third term in (\ref{sobotrol}) is similarly bounded 
by $ \mathcal{E}[ \mathbf{J}^2 \Phi]$. To treat the second term on the right hand side of 
(\ref{sobotrol}) 
we proceed as in \cite{Dain:2012qw},
by taking  the $r^*$ derivative of $\mathcal{A}^{\Lambda} (r \Phi)$  (see  (\ref{AL}))
\begin{equation}
 \p_{r^*}^3 (r\Phi) = - \p_{r^*} \mathcal{A}^{\Lambda} (r \Phi) + r\Phi \p_{r^*} V_1 +V_1 
\p_{r^*}  (r\Phi)  - (\p_{r^*} V_2)  \mathbf{J}^2  (r\Phi) - V_2  \mathbf{J}^2  \p_{r^*}  (r\Phi)
\end{equation}
and using  the facts  that $\Phi'=r^{-1} \mathcal{A}^{\Lambda} (r \Phi)$ is also a solution 
of the 4DRWE and that the $V_i$'s and their $r^*$ derivatives are bounded in region II: $|V_i| \leq V_i^{max}$, 
$|\p_{r^*} V_i| \leq V_{i,r^*}^{max}$:
\begin{multline}
||r^{-1} \p_{r^*}^3 (r\Phi) || \leq ||r^{-1} \p_{r^*} \mathcal{A}^{\Lambda} (r \Phi) || + || \Phi  || V_{1,r^*}^{max} 
+V_1^{max} 
||r^{-1} \p_{r^*}  (r\Phi)||  + V_{2,r^*}^{max} || r^{-1} \mathbf{J}^2  (r\Phi) || + V_2^{max}
|| r^{-1}  \mathbf{J}^2  \p_{r^*}  (r\Phi) || \\
\leq \sqrt{\mathcal{E}[r^{-1}  \mathcal{A}^{\Lambda} (r \Phi)]} + \sqrt{\mathcal{E}[\tilde \Phi]} \;  V_{1,r^*}^{max} 
+ V_1^{max}  \sqrt{\mathcal{E}[\Phi]} 
 + V_{2,r^*}^{max} \sqrt{\mathcal{E}[r^{-1}  \mathbf{J}^2  (r \tilde \Phi)]}
 + V_{2}^{max} \sqrt{\mathcal{E}[r^{-1}  \mathbf{J}^2  (r\Phi) ]}
\end{multline}
The finiteness of the energy integrals above for $\Phi=\Phi_{(b)}$ can easily be checked. We conclude that 
there is a constant $K'$ such that, in region II, 
\begin{equation}
K' \geq \big| \p_{r^*}(r \Phi) \big| \geq f |\Phi| + r |\p_{r^*} \Phi| 
\end{equation}
and, since $f$ and $\Phi$ are bounded in this region,
\begin{equation}
|\p_{r^*} \Phi|  \leq \frac{K''}{r}
\end{equation}
for some constant $K''$. 
Thus, every term in (\ref{G++}) is bounded by a constant times $r^{-4}$.
\end{proof}

\section{Discussion}

\subsection{Evolution  of perturbations}

The large $t$ decay of $\phi^{\pm}_{(\ell,m)}$ \cite{Price:1971fb,Brady:1996za},  together with equations  
\begin{equation*} \tag{\ref{qdt}}
G_- = - \frac{6M}{r^7} \sqrt{\tfrac{4 \pi}{3}}  \sum_{m=1}^3 j^{(m)}S_{(1,m)} - \frac{3M}{r^5} 
\sum_{(\ell\geq2, m)} \frac{(\ell+2)!}{(\ell-2)!} \frac{\phi^-_{(\ell,m)}}{r} 
S_{(\ell,m)},
\end{equation*}
 and
\begin{equation*} \tag{\ref{G+}}
G_+ = - \frac{2M \dot M}{r^5} + \frac{M}{2r^4} \sum_{\ell \geq 2} \frac{(\ell+2)!}{(\ell-2)!} \left[ f \p_r + Z_{\ell} \right] \phi^+_{(\ell,m)} S_{(\ell,m)},
\end{equation*}
 indicate that, at large $t$,
\begin{equation}
G_- \sim - \frac{6M}{r^7} \sqrt{\tfrac{4 \pi}{3}}  \sum_{m=1}^3 j^{(m)}S_{(1,m)}, \;\; \; G_+ \sim - \frac{2M \dot M}{r^5},
\end{equation}
which, in view of the bijection (\ref{im}), corresponds to a linearized Kerr (Kerr de Sitter) black hole around the Schwarzschild (SdS) 
background. 
The picture that emerges from these considerations is that, for a generic perturbation, the black hole ends up settling into a slowly 
rotating Kerr (Kerr dS) black hole. To make statements like these more precise, all we need  is quantitative  
information on the decay of solutions of the 4DRW equation. This is so because $r^5 G_-$ {\em is} a solution of this equation, 
and  $G_+$ can  be written entirely in terms of solutions of the 4DRW equation (see equation (\ref{G++})). 
Alternatively, in view of Theorem \ref{bij}, there is a bijection between solutions of the LEE and 
a set containing the constants $\dot M,  j^{(m)}$ and two fields $\Phi_{\pm}$ that obey  the 4DRW equation. 
The perturbed metric is linearized Kerr (Kerr de Sitter) 
if $\Phi_{\pm}=0$, which is the limit approached at large $t$. 
Quantitative results for the decay in time  of solutions of the 4DRW equation can be found in \cite{Blue:2003si}, \cite{fm} and more recently in  
\cite{Dafermos:2016uzj} (in this last reference, as statements on a symmetric tensor field made out of two scalar 
fields obeying  (\ref{4DRWE})). \\

\subsection{Reduction of the LEE to  the 4DRW equation}

A natural question that arises from our results  is:
can we forget altogether  metric perturbations and restrict ourselves to the study of the  4DRW equation?
The answer to this question is in the affirmative (Theorem \ref{bij} and equations (\ref{qdt}) and (\ref{G++})), although  some technical issues  should be mentioned 
if one intends to study perturbations using only the $G_{\pm}$ or similar perturbed curvature fields.  
One is related to the possibility of evolving perturbations from initial $G_{\pm}$ data. 
Since $r^5 G_-$ obeys the 4DRW equation, initial data (e.g., $\p_t G_-$ and $G_-$ restricted to a $t=$constant hypersurface) 
gives  $G_-$ and, by expanding in spherical harmonics, the metric perturbation in the RW gauge  uniquely related to it. 
This does not happen, however, with $G_+$ in the even sector. 
Since the operator $f \p_r + Z_{\ell}$ in (\ref{G+}) has a non trivial kernel, and $G_+$  does not satisfy a wave equation 
(which would allow us to recover the lost information by proceeding as 
in Section \ref{chd}), the set of $\p_t \phi^+_{(\ell,m)}$ and $\phi^+_{(\ell,m)}$ at 
$t=t_o$ cannot be obtained from  $\p_t G_+$ and $G_+$ at $t=t_o$. The reason why $G_+$ does not obey 
a wave equation is, of course, that is made out of the perturbation of curvature scalars 
involving  both the Riemann tensor {\em and its covariant derivative};  
the possibility of constructing a gauge invariant curvature scalar  field for the even sector 
that does not use derivatives of the Riemann tensor was ruled out in \cite{Dotti:2013uxa}. \\
We should mention, however,  two alternatives to the use of $G_+$ (or  similar fields involving CSs).  
One is using the potentials $\Phi_{\pm}$ in Theorem \ref{bij}: 
perturbations are entirely characterized by  $\dot M,  j^{(m)}$ (stationary perturbations) and the  fields  $\Phi_{\pm}$
obeying the 4DRW equation
 (dynamical perturbations). This shows that we can do without the Zerilli equation and that 
 the LEE does reduce to the  4DRW equation. 
The other possibility to avoid scalar fields involving higher metric derivatives (such as $\dot X$ in $G_+$, equation (\ref{Gp})), is constructing gauge invariant 
combinations made out of   perturbed curvature 
scalars {\em and} the  metric perturbation. We close this section by exhibiting  an example of 
such a construction: if we calculate 
the (gauge dependent) scalar field $\dot Q_+$ (see (\ref{curvaturescalars})) in the RW gauge, we find that 
\begin{equation} 
\dot Q_+^{(RW)} = \frac{2M \dot M}{r^6} - \frac{6M^2}{r^5} \sum_{(\ell\geq 2,m)} \frac{\mathcal{D}^-_{\ell} \phi^+_{(\ell,m)}}{r} S_{(\ell,m)}, 
\end{equation}
thus $r^5  \dot Q_{+,>1}^{(RW)}$ also satisfies the 4DRW equation! We might think of using  $\dot Q_+$ 
to measure the strength of even perturbations, but 
this field 
 is gauge dependent  due to the fact that  $Q_+ \neq 0$ for the background Schwarzschild or S(A)dS black hole. 
Under the gauge transformation (\ref{gt}),
\begin{equation} \label{qp1}
\dot Q_+ \to {\dot Q_+}{}' = \dot Q_+ +  \zeta^r \p_r Q_+ = \dot Q_+ -  \zeta^r  \; \frac{6M^2}{r^7}.
 \end{equation}
This suggests  searching for  a gauge invariant field $H_+$  that agrees with $\dot Q_+$ in the RW gauge. 
If we compare (\ref{qp1}) with equation (\ref{gauge+}), we find the following solution: 
\begin{equation}
H_+ =   \frac{6M^2}{r^7} \td^a r \left( q_a -r^2 \td_a G \right) + \dot Q_+
\end{equation}
Since this gauge invariant field {\em  reduces in the RW gauge to} $\dot Q_+^{(RW)}$, and the $\phi^+_{(\ell,m)}$ are 
gauge invariant, 
\begin{equation} 
H_+ = \frac{2M \dot M}{r^6} - \frac{6M^2}{r^5} \sum_{(\ell\geq 2,m)} \frac{\mathcal{D}^- \phi^+_{(\ell,m)}}{r} S_{(\ell,m)}.
\end{equation}
Note that the dynamical $\ell \geq 2$ piece of $r^5 H_+$ satisfies the 4DRW equation, which  gives $H_+$ an advantage over $G_+$: it is possible 
 to obtain $H_+$  from the  initial datum $(r^5 H_+, r^5 \p_t H_+)|_{t_o}$. Once this is done, the corresponding metric 
perturbation in the RW gauge can be obtained by expanding $H_+$ in spherical harmonics, which gives us $\dot M$ and 
the $\mathcal{D}^-_{\ell} \phi^+_{(\ell,m)}$ and then applying Lemma \ref{lema} to recover the Zerilli fields $\phi^+_{(\ell,m)}$.
Although the $H_+$ looks more geometrical than the potentials $\Phi_{\pm}$ in Theorem \ref{bij}, we have not found 
an obvious interpretation for this field.

\section{Acknowledgments}
I thank Sergio Dain for suggesting writing a  detailed version of the proof of nonmodal linear stability in 
\cite{Dotti:2013uxa},  Andr\'es Anabal\'on for suggesting considering extending this proof to the  case of  nonzero cosmological 
constant, Gustav Holzegel for pointing out an error in the proof of Theorem \ref{even} in a previous version of this manuscript 
and 
Reinaldo Gleiser and Martin Reiris for useful  discussions. 
This work was partially funded by grants PIP 11220080102479
(Conicet-Argentina) and Secyt-UNC 05/B498 (Universidad Nacional de C\'ordoba).

\end{document}